\documentclass[reprint,onecolumn]{article}
\usepackage[latin9]{inputenc}   
\usepackage{mathtools}
\usepackage{amstext}
\usepackage{amsthm}
\usepackage{graphicx}
\usepackage{dcolumn}
\usepackage{amsfonts} 
\usepackage{xcolor}
\usepackage{subfig}
\usepackage{authblk}
\usepackage[a4paper, margin=2cm]{geometry}
\usepackage{appendix}
\usepackage{booktabs} 
\usepackage{tikz-cd}
\usepackage{algorithm}
\usepackage{algpseudocode}
\usepackage{mathtools}
\usepackage{bbold}
\usepackage{subcaption} 
\makeatletter  

\newtheorem{theorem}{Theorem}[section]
\newtheorem{definition}[theorem]{Definition}  
\newtheorem{proposition}[theorem]{Proposition}
\newtheorem{lemma}[theorem]{Lemma}
\newtheorem{corollary}[theorem]{Corollary}
\newtheorem{remark}[theorem]{Remark}

\newtheorem{example}[theorem]{Example}

\usepackage{comment}

\newcommand{\mc}[1]{\mathcal{#1}}
\newcommand{\R}{\mathbb{R}}
\newcommand{\C}{\mathbb{C}}
\newcommand{\Z}{\mathbb{Z}}

\newcommand{\1}{\mathbb{1}}
\newcommand{\x}{\mathbf{x}}
\newcommand{\y}{\mathbf{y}}

\renewcommand{\u}{\mathbf{u}}
\renewcommand{\c}{\mathbf{c}}
\newcommand{\f}{\mathbf{f}}
\newcommand{\0}{\mathbf{0}}

\newcommand{\diag}{\textrm{diag}}
\newcommand{\sgn}{\textrm{sgn}}


\DeclareMathOperator\spec{Spec}
\DeclareMathOperator\tr{tr}


\date{}

\begin{document}

\title{Gremban Expansion for Signed Networks: Algebraic and Combinatorial Foundations for Community--Faction Detection}
\author[1,2]{Fernando Diaz-Diaz}
\author[3]{Karel Devriendt}
\author[3]{Renaud Lambiotte}
\affil[1]{Grupo Interdisciplinar de Sistemas Complejos (GISC), Departamento de Matematicas, Universidad Carlos III de Madrid, Leganes, Spain.}
\affil[2]{Institute of Cross-Disciplinary Physics and Complex Systems, IFISC (UIB-CSIC), 07122 Palma de Mallorca, Spain}
\affil[3]{Mathematical Institute, University of Oxford, OX2 6GG Oxford, United Kingdom.}
\date{}

\maketitle
\begin{abstract}
This article deals with the characterization and detection of community and faction structures in signed networks. We approach the study of these mesoscale structures through the lens of the Gremban expansion. This graph operation lifts a signed graph to a larger unsigned graph, and allows the  extension of standard techniques from unsigned to signed graphs. We  develop the combinatorial and algebraic properties of the Gremban expansion, with a focus on its inherent involutive symmetry. The main technical result is a bijective correspondence between symmetry-respecting cut-sets in the Gremban expansion, and regular cut-sets and frustration sets in the signed graph (i.e., the combinatorial structures that underlie communities and factions respectively). This result forms the basis for our new approach to community--faction detection in signed networks, which makes use of spectral clustering techniques that naturally respect the required symmetries. We demonstrate how this approach distinguishes the two mesoscale structures, how to generalize the approach to multi-way clustering  and discuss connections to network dynamical systems. 
\end{abstract}

\section{Introduction}

Signed networks, where edges are labeled as positive or negative, provide a framework to model antagonistic and cooperative interactions in complex systems \cite{altafini2012consensus,diaz2024mathematical,leskovec2010signed,szell2010multirelational}. Applications range from social networks exhibiting trust and conflict, to biological systems with activatory and inhibitory pathways, or to online platforms where user ratings may indicate agreement or opposition. Despite their relevance, signed networks remain more challenging to study than their unsigned counterparts \cite{traag2019partitioning}. Most foundational tools in graph theory and network science, particularly those based on distances or in spectral techniques, tend to be formulated, implicitly or not, under the assumption that edge weights are non-negative, thus limiting their direct applicability to signed structures \cite{diaz2025signed}.

This difficulty has led to the development of specialized tools for signed graphs, such as  spectral methods based on the eigenvectors of the signed graph's Laplacian \cite{cucuringu2019sponge,kunegis2010spectral}. While powerful, these approaches often require adapting existing frameworks or creating entirely new algorithms tailored to the signed context. An alternative and elegant route, similar to what has been done for other types of enriched network models \cite{lambiotte2019networks} (e.g. the supra-Laplacian for multiplex networks \cite{gomez2013diffusion}), consists in transforming the signed graph into a structure where standard, unsigned tools can be applied. For signed networks, the most natural such transformation is the Gremban expansion. This is a lifting technique that converts, without information loss, a signed graph into an unsigned one of larger size.

Originally introduced in the context of solving linear systems of equations \cite{gremban1996combinatorial}, the Gremban expansion provides a  bridge between signed and unsigned graph representations \cite{fox2018investigation,lambiotte2024spreading}. By lifting each node to two copies, referred to as positive and negative copy, and carefully connecting these copies according to the original edge signs, the expansion produces an unsigned graph whose structural and spectral properties retain all information of the signed graph. This opens the door to applying the extensive machinery of standard network techniques, including unsigned spectral graph theory \cite{chung1997spectral}, random walks \cite{Masuda2017RW}, and clustering techniques \cite{fortunato2010community}, to problems originally defined in the signed domain.

In this article, we explore the theoretical underpinnings 
and practical implications of the Gremban expansion for analyzing signed networks. We show how this transformation interacts with important structural features such as structural balance and graph operations such as sign switching, and demonstrate how symmetries in the expanded graph can give insights into the  connectivity patterns of the original signed graph. As an example of the power of the Gremban expansion, we develop a principled framework for distinguishing community and faction structures in signed graphs. Our algorithm leverages the symmetry properties of the eigenvectors of the expanded Laplacian to identify the dominant mesoscale organization. To our knowledge, this is the first method that jointly captures and separates both communities and factions in signed networks. We demonstrate its effectiveness through numerical experiments, show how it generalizes to multi-way clustering, and outline connections to dynamical processes on networks.  

The remainder of this article is structured as follows. Section \ref{sec:gremban_combinatorial} introduces the Gremban expansion and formalizes its structural and symmetry properties. Section \ref{sec:gremban_algebraic} develops the matrix formulation of the expansion, showing how adjacency and Laplacian matrices lift to Gremban-symmetric forms with spectral decompositions that separate community and factional eigenmodes. Section \ref{sec:spectral_detection_faction_community} presents our main algorithmic contribution: a spectral clustering method that leverages the Gremban Laplacian's eigenvectors to detect whether a network exhibits community or faction structure. We also report numerical experiments that validate our approach. Finally, in Section \ref{sec:other_applications} we outline connections to dynamical processes on signed networks, and conclude in Section \ref{sec:conclusion}.
%
\subsection{Related work}

The Gremban expansion of a signed graph also appears as a special case of some known constructions in the mathematical literature. First, signed graphs can be seen as a special case of so-called voltage graphs \cite{gross1987topological} used in topological graph theory. In that context, the Gremban expansion is called the ``derived graph" of a voltage graph. Secondly, the Gremban expansion can be seen as a special kind of graph covering \cite[\S6.8]{godsil2001algebraic}; this is further explained in Section \ref{SS: inverting Gremban expansion}. While spectral properties of graph covers have been studied in relation to graph expansion \cite{bilu2006lifts}, the interplay between the mesoscale structures of signed graphs and sign-induced graph covers is not treated there.

Beyond these algebraic and topological perspectives, the Gremban expansion has also appeared in the context of spectral clustering for signed networks. Spectral clustering is a widely used technique for identifying low-conductance cuts in networks, and its theoretical foundations in the unsigned case are well-established \cite{ng2001spectral, vonLuxburg2007}. Extensions to signed networks typically assume that the graph is close to structurally balanced and seek partitions that minimize the number of positive inter-group edges and negative intra-group edges---what we refer to in this work as factional structure. Building on this assumption, Kunegis \cite{kunegis2010spectral} proposed a spectral clustering algorithm that uses the eigenvectors of the signed Laplacian. Subsequent refinements, including those by Chiang et al. \cite{chiang2012scalable} and Cucuringu et al. \cite{cucuringu2019sponge}, further developed algorithms within this balance-centric framework. However, these approaches neglect standard community structure, where groups are defined by edge density regardless of sign. As a result, they are inherently biased in settings where factional and community structures coexist. 

Recent work by Fox et al. \cite{fox2017numerical,fox2018investigation} introduced the Gremban expansion in the context of signed graphs and explored its potential for clustering. In particular, \cite{fox2018investigation} compared clustering methods based on the signed Laplacian, the physical Laplacian, and the Gremban expansion, and found that the latter performed best in empirical tests. 
Nevertheless, this work did not treat communities and factions as distinct mesoscale structures, included no comparison with unsigned clustering baselines, and lacked a theoretical explanation for the empirical success of the Gremban method. Our work builds on theirs by developing a mathematically rigorous framework based on symmetry, which explicitly disentangles community and factional structures.
%
\section{The Gremban expansion of signed graphs}\label{sec:gremban_combinatorial}

The main objects of interest in this article are signed graphs. A signed graph $G=(V,E,\sigma)$ consists of a finite set $V$ of $n$ nodes, a finite set $E$ of $m$ edges, and a sign function $\sigma:E\rightarrow\{\pm 1\}$ on the edges. An unsigned graph is given by the data $G=(V,E)$. Throughout the article, self-loops, multi-edges and directed edges are excluded from consideration; each edge is thus a two-element subset of nodes, denoted $(u,v)\in E$ or $uv\in E$.

As discussed in the introduction, signed graphs naturally appear in applications. However, many tools in graph theory and network science, such as spectral and combinatorial methods, are limited to unsigned graphs and analyzing signed graphs often requires developing new tools and ideas. In this paper, we study the Gremban expansion as one possible way to work around this difficulty: by converting a signed graph into a larger unsigned graph (with structurally encoded sign data), we can use standard `unsigned methods' for the study of signed graphs. In particular, our article is motivated by the current methodological gap that exists in the study of mesoscale structures in signed graphs: how do signed factions and unsigned communities co-exist in a signed graph, and how might we detect and disentangle them?
%
\subsection{Gremban expansion: definition and symmetries}
The Gremban expansion, originally introduced and studied matrix theoretically in \cite{gremban1996combinatorial} and later graph theoretically in \cite{fox2017numerical}, is defined as follows:

\begin{definition}[Gremban expansion] \label{def:gremban_expansion}
Let $G=(V,E,\sigma)$ be a signed graph. The \emph{Gremban expansion} of $G$ is the unsigned graph $\mathcal{G}=\mathcal{G}(G)$ with $2n$ nodes and $2m$ edges, defined by
\begin{align*}
V(\mathcal{G})\,&:=\big\{ v^\chi \,:\, v\in V(G)\text{~and~}\chi\in\{\pm \} \big\}, 
\\
E(\mathcal{G})\,&:= \big\{ \big(u^\chi,v^{\chi\cdot\sigma(u,v)}\big) \,:\, (u,v)\in E(G) \text{~and~} \chi\in\{\pm \}\big\}.
\end{align*}
We refer to the two copies \( v^+ \) and \( v^- \)  of $v\in V$ in $\mathcal{G}$ as the {positive} and {negative polarities} of \( v \).
\end{definition}
In other words, the Gremban expansion doubles every vertex $v\in V(G)$ into two copies $v^+,v^-\in V(\mathcal{G})$ and doubles a positive edge $(u,v)$ into two edges between equal polarities $(u^\chi,v^\chi)$, and doubles a negative edge into two edges between opposite polarities $(u^{\chi},v^{-\chi})$. 

\begin{example}
Figure \ref{fig:first_example} illustrates the definition of Gremban expansion for a small signed graph. A $4$-cycle graph $G$ with one negative edge $(u,v)$ is expanded into an $8$-cycle graph. The example shows how positive edges (black in $G$) are lifted to edges between equal polarities in the Gremban expansion $\mathcal{G}$, whereas negative edges (red in $G$) are lifted to edges between opposing polarities in $\mathcal{G}$, i.e., $(u^+,v^-)$ and $(u^-,v^+)$.
\begin{figure}[h!]
    \centering
    \includegraphics[width=0.7\linewidth]{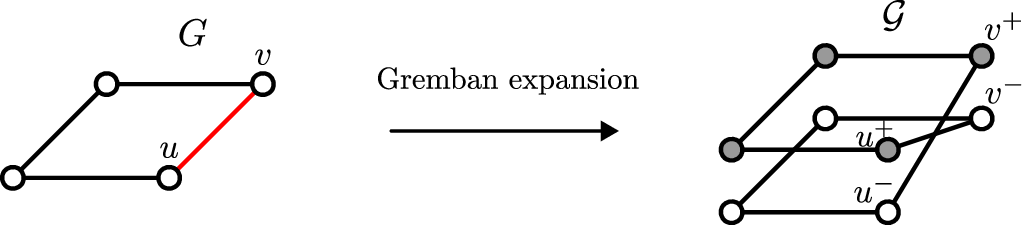}
    \caption{A signed $4$-cycle graph $G$ with one negative edge (in red) and its Gremban expansion $\mathcal{G}$. The positive polarities in $\mathcal{G}$ are shown as gray nodes, and the negative polarities as white nodes.}
    \label{fig:first_example}
\end{figure}
\end{example}
An important feature in the construction of the Gremban expansion is the structural symmetry between the two copies of a node; see Figure \ref{fig:first_example}. This symmetry plays a central role throughout this paper: it constrains the structure of expanded graphs, governs the behavior of random walks and diffusion processes, it is reflected in the spectral properties of matrices associated with the expansion, and it determines how spectral clustering on the expanded graph relates to faction and community structure in the original signed graph. We formalize this symmetry as follows; recall that an involution is a function $f$ that is its own inverse, i.e., with $f^2=\rm id$.

\begin{definition}[Gremban involution]\label{def:gremban_symmetry}
Let $\mathcal{G}$ be the Gremban expansion of a signed graph. The \emph{Gremban involution} is the involution $\eta:V(\mathcal{G})\rightarrow V(\mathcal{G})$ that permutes the two polarities of each node, $\eta:v^\chi\mapsto v^{-\chi}$.
\end{definition}
Since $\eta$ maps edges to edges (see further below), it can be extended to edges as $\eta((u,v))=(\eta(u),\eta(v))$ and to subsets of edges. We define corresponding notions of symmetries for subsets of nodes, subsets of edges and node partitions that are preserved under the Gremban involution:
\begin{definition}[Gremban symmetry]
Let $\mathcal{G}$ be the Gremban expansion of a graph, and $\eta$ its Gremban involution. Then we say that
\begin{itemize}
    \item A node subset $\mathcal{U}\subseteq V(\mathcal{G})$ is \emph{Gremban-symmetric} if $\eta(\mathcal{U})=\mathcal{U}$.
    
    \item An edge subset $\mathcal{F}\subseteq E(\mathcal{G})$ is \emph{Gremban-symmetric} if $\eta(\mathcal{F})=\mathcal{F}$.

    \item A subgraph $\mathcal{H}\subseteq\mathcal{G}$ is \emph{Gremban-symmetric} if $V(\mathcal{H})$ and $E(\mathcal{H})$ are Gremban-symmetric.

    \item A partition $V(\mathcal{G})=\mathcal{U}_1\cup\cdots\cup\mathcal{U}_k$ is \emph{Gremban-symmetric} if for all $i$, $\eta(\mathcal{U}_i)=\mathcal{U}_j$ for some $j$.
\end{itemize}
\end{definition}

\begin{example}
We return to the example in Figure \ref{fig:first_example} of the Gremban expansion $\mathcal{G}$ of the $4$-cycle with one negative edge. Figure \ref{fig:involution} shows the action of the Gremban involution $\eta$ on $\mathcal{G}$. Of the two labeled edges $e_1,e_2$, each one separately is not a Gremban-symmetric subset of edges (since, e.g., $\eta(e_1)=e_2\neq e_1$) but the pair of edges $\{e_1,e_2\}$ is Gremban-symmetric (since $\eta(\{e_1,e_2\})=\{e_1,e_2\}$).
\begin{figure}[h!]
    \centering
    \includegraphics[width=0.7\linewidth]{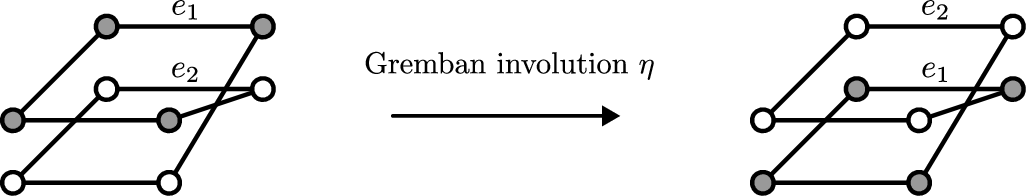}
    \caption{The Gremban involution $\eta$ switches polarities $v^{\chi}\mapsto v^{-\chi}$. A set of edges or nodes that is preserved under this action is called Gremban-symmetric. Here, $\{e_1,e_2\}\subset E(\mathcal{G})$ is Gremban-symmetric.}
    \label{fig:involution}
\end{figure}
\end{example}
The Gremban involution is related to more classical notions of graph symmetry, namely automorphisms. Recall that a \emph{graph automorphism} is an adjacency-preserving permutation of the nodes, in other words, a bijection \( \varphi: V \to V \) such that $(u, v) \in E \Leftrightarrow (\varphi(u), \varphi(v)) \in E$. Two nodes \( u, v \in V \) are said to be \emph{automorphically equivalent} if there exists an automorphism \( \varphi \) that maps one to the other, i.e., \( \varphi(u) = v \). An automorphism is called \emph{fixed-point-free} if it maps no node to itself. We find the following:

\begin{proposition}\label{prop:automorphically_equiv}
The Gremban involution \( \eta \) is a fixed-point-free involutive automorphism of the Gremban expanded graph. Every pair of nodes $v^\chi,v^{-\chi}$ is automorphically equivalent.
\end{proposition}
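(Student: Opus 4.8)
The plan is to verify the four assertions in turn, each following directly from Definition~\ref{def:gremban_expansion} and the definition of $\eta$. First I would check that $\eta$ is an involution by composing it with itself: for any node $v^\chi$ we have $\eta(\eta(v^\chi)) = \eta(v^{-\chi}) = v^{\chi}$, so $\eta^2 = \mathrm{id}$, and in particular $\eta$ is a bijection on $V(\mathcal{G})$. Fixed-point-freeness is then equally immediate: since $\chi$ and $-\chi$ are distinct elements of $\{\pm\}$, we have $\eta(v^\chi) = v^{-\chi} \neq v^\chi$ for every node, so $\eta$ fixes no vertex.

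The substantive step is showing that $\eta$ preserves adjacency. Here I would unwind the definition of $E(\mathcal{G})$: a generic edge has the form $(u^\chi, v^{\chi\cdot\sigma(u,v)})$ for some $(u,v)\in E(G)$ and some $\chi\in\{\pm\}$. Applying $\eta$ to both endpoints gives $(u^{-\chi}, v^{-\chi\cdot\sigma(u,v)})$, and writing $\chi':=-\chi$ this is exactly $(u^{\chi'}, v^{\chi'\cdot\sigma(u,v)})$, which again lies in $E(\mathcal{G})$ because the defining family ranges over \emph{all} $\chi'\in\{\pm\}$. Hence $\eta$ sends edges to edges. Because $\eta$ is its own inverse, applying this same implication to the image edge $(\eta(a),\eta(b))$ yields the reverse direction, so $(a,b)\in E(\mathcal{G})\Leftrightarrow(\eta(a),\eta(b))\in E(\mathcal{G})$, which together with bijectivity establishes that $\eta$ is an automorphism.

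Finally, automorphic equivalence of each pair $v^\chi, v^{-\chi}$ is a corollary of what precedes: $\eta$ is an automorphism with $\eta(v^\chi)=v^{-\chi}$, and this single map witnesses, by definition, that the two polarities of every node lie in one orbit.

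I do not expect a genuine obstacle, as the statement is essentially a direct consequence of the symmetric way the sign $\sigma(u,v)$ enters the edge set. The only place warranting a moment's care is the adjacency check, specifically the observation that the reindexing $\chi\mapsto-\chi$ keeps the transformed edge inside the defining family; this works precisely because $E(\mathcal{G})$ is itself built symmetrically over both choices of polarity $\chi$, which is the structural reason the Gremban involution is an automorphism rather than merely a permutation.
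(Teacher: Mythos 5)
Your proposal is correct and follows the same route as the paper's proof: observe that $\eta$ is involutive and fixed-point-free by construction, and that it preserves adjacency because it swaps the two lifted copies $(u^\chi, v^{\chi\cdot\sigma(u,v)})$ and $(u^{-\chi}, v^{-\chi\cdot\sigma(u,v)})$ of each signed edge. Your version simply makes the reindexing $\chi\mapsto-\chi$ explicit where the paper states it more briefly.
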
 
\begin{proof}
By construction, $\eta$ is fixed-point-free and involutive. The Gremban expansion lifts each edge \( (u,v) \in E(G) \) to two edges in the expanded graph, and $\eta$ permutes the edges in this pair. Since every edge in $\mathcal{G}$ is the lift of an edge in $G$, $\eta$ acts as an involution on the edge set of $\mathcal{G}$. This means that $\eta$ preserves adjacency, which makes it an automorphism and thus certifies that $v^+, v^-$ are automorphically equivalent.
\end{proof}
As remarked in the introduction, the Gremban expansion also appears in the context of voltage graphs. More precisely, for voltage graphs with edge weights in $\Z/2\Z$, the Gremban expansion is called the derived graph and the Gremban involution is related to the so-called deck transformation group and corresponds to a natural symmetry inherited from the edge weight group $\Z/2\Z$; see \cite{gross1987topological} for more detail.

Proposition \ref{prop:automorphically_equiv} implies that the expanded graph must exhibit a high degree of symmetry between nodes with opposite polarities. As a result, any node property or associated structural quantity $f:V\rightarrow\R$ that is invariant under automorphisms will be equal for nodes with opposite polarities. This will play an important role when studying the spectral properties of Gremban expanded graphs.

\subsection{Inverting the Gremban expansion}\label{SS: inverting Gremban expansion}
A natural question is whether one can `invert' the Gremban expansion operation: can we recover the original signed graph $G$ from its Gremban expansion $\mathcal{G}$? For our purpose, this step is essential for interpreting results obtained in the expanded graph back in the signed setting. The \emph{projection map} introduced below will solve this recovery question; it acts by collapsing the expanded graph back onto the original node set. 
\begin{definition}[Projection map]
Let $\mathcal{G}$ be the Gremban expansion of a signed graph $G$. The \emph{projection map} $\pi:V(\mathcal{G})\rightarrow V(G)$ maps each polarity $v^\chi$ back to its original node $v$, as $\pi:v^\chi\mapsto v$.
\end{definition}
Let $\mathcal{N}(v^\chi)$ and $\mathcal{N}(v)$ denote the neighbours of $v^\chi$ in $\mathcal{G}$ and of $v$ in $G$, respectively. The projection map is a homomorphism since it maps edges to edges; using the language of graph covers \cite[\S6.8]{godsil2001algebraic} we can give the following more precise characterization:
\begin{proposition}  \label{prop:double_cover}
The projection map $\pi$ is a $2$-fold covering of $G$ by $\mathcal{G}$: for every $v\in V(G)$ and $v^\chi\in V(\mathcal{G})$
\begin{itemize}
    \item The fiber $\pi^{-1}(v)$ has exactly two elements: $\pi^{-1}(v)=\{v^{+},v^{-}\}$;
    \item $\pi$ is surjective and a local bijection: the restriction $\pi\vert_{\mathcal{N}(v^\chi)}:\mathcal{N}(v^\chi)\rightarrow\mathcal{N}(v)$ is a bijection.  
\end{itemize}
\end{proposition}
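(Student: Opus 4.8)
The plan is to verify each bullet directly from Definition \ref{def:gremban_expansion}, treating the statement as largely a matter of tracking how edges of $G$ lift to edges of $\mathcal{G}$. The only structural input beyond the definitions is that $G$ is simple (no self-loops or multi-edges), and this is precisely what guarantees a clean one-to-one correspondence between edges incident to $v$ and lifted edges incident to $v^\chi$.

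First I would settle the fiber condition. Since $\pi(w^\psi)=w$ by definition, we get $\pi^{-1}(v)=\{w^\psi : w=v\}=\{v^+,v^-\}$, and as the two polarities are distinct nodes of $\mathcal{G}$ this fiber has exactly two elements. Surjectivity is then immediate, because $\pi(v^+)=v$ for every $v\in V(G)$.

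The substance is the local bijection, for which I would first make the neighbourhood $\mathcal{N}(v^\chi)$ explicit. Reading off Definition \ref{def:gremban_expansion}, each edge $(u,v)\in E(G)$ with sign $s=\sigma(u,v)$ lifts to the two edges $(u^+,v^{s})$ and $(u^-,v^{-s})$, so the unique neighbour of $v^\chi$ contributed by this edge is $u^{\chi\cdot s}$. Consequently
\[
\mathcal{N}(v^\chi)=\big\{\,u^{\chi\cdot\sigma(u,v)} \;:\; (u,v)\in E(G)\,\big\},
\]
and $\pi$ sends each such neighbour to $u\in\mathcal{N}(v)$; that is, on neighbourhoods $\pi$ simply forgets the polarity.

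To conclude I would check both directions. For surjectivity onto $\mathcal{N}(v)$: every neighbour $u$ of $v$ in $G$ arises from a unique edge $(u,v)$, which yields the neighbour $u^{\chi\cdot\sigma(u,v)}$ of $v^\chi$ mapping to $u$. For injectivity: if two neighbours of $v^\chi$ project to the same node $u$, then both have the form $u^{\chi\cdot\sigma(u,v)}$, and since $G$ has no multi-edges there is a single edge $(u,v)$ and hence one well-defined value $\sigma(u,v)$, forcing the two polarities to agree and the neighbours to coincide. Thus $\pi\vert_{\mathcal{N}(v^\chi)}$ is a bijection onto $\mathcal{N}(v)$. I do not expect a genuine obstacle here; the one place that needs care is the polarity bookkeeping---correctly identifying the neighbour of $v^\chi$ as $u^{\chi\cdot\sigma(u,v)}$ rather than, say, $u^{\chi}$---together with the explicit appeal to simplicity of $G$ to rule out multiple lifted edges between the same projected pair, which is exactly what makes the local map injective.
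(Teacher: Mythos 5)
Your proposal is correct and follows essentially the same route as the paper: the fiber and surjectivity claims are dispatched directly from the definition, and the local bijection is established by identifying the unique neighbour $u^{\chi\cdot\sigma(u,v)}$ of $v^\chi$ contributed by each edge $(u,v)$, so that $\pi$ merely forgets polarity on neighbourhoods. The only cosmetic difference is that you prove injectivity directly from this explicit description while the paper argues by contradiction (two polarities of $u$ both adjacent to $v^\chi$ would violate the construction); the content is identical.
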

\begin{proof}
Surjectivity of $\pi$ is immediate by definition of the Gremban expansion. To prove that \( \pi|_{\mathcal{N}(v^\chi)} : \mathcal{N}(v^\chi) \to \mathcal{N}(v) \) is a bijection, we first observe that each neighbor \( w^\psi \in \mathcal{N}(v^\chi) \) is projected to \( \pi(w^\psi) = w  \in \mathcal{N}(v) \), so \( \pi \) maps \( \mathcal{N}(v^\chi) \) onto \( \mathcal{N}(v) \).
Suppose \( \pi|_{\mathcal{N}(v^\chi)} \) is not injective. Then there exist two distinct nodes \( w^+, w^- \in \mathcal{N}(v^\chi) \) such that \( \pi(w^+) = \pi(w^-) = w \). This means that both \( w^+ \) and \( w^- \) are adjacent to \( v^\chi \) in the expanded graph, which contradicts the construction of the Gremban expansion (Definition~\ref{def:gremban_expansion}). Therefore, \( \pi|_{\mathcal{N}(v^\chi)} \) must be injective, and hence bijective. 
\end{proof}
The practical use of the projection map \( \pi \) in this paper is to recover a signed graph from its Gremban expansion. We begin with the simplest case, the retrieval of the whole graph:
\begin{proposition}\label{prop:projection_graph}
Let $G$ be a signed graph, $\mathcal{G}$ its Gremban expansion, and $\pi$ the projection map. If we define \( \mu : E(\mathcal{G}) \to \{\pm 1\} \) as \( \mu((u^\chi, v^\psi)) = \chi \cdot \psi \), then $G = (\pi(V(\mathcal{G})), \pi(E(\mathcal{G})), \mu \circ \pi^{-1})$.
\end{proposition}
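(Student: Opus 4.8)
The plan is to verify the claimed equality of signed graphs componentwise: since $G$ is determined by the triple $(V,E,\sigma)$, it suffices to establish the three identities $\pi(V(\mathcal{G}))=V(G)$, $\pi(E(\mathcal{G}))=E(G)$, and $\mu\circ\pi^{-1}=\sigma$. The first two are bookkeeping from the definitions, and I expect all the real content to sit in the third, specifically in checking that $\mu\circ\pi^{-1}$ is even well-defined.

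The node and edge identities follow directly from Definition~\ref{def:gremban_expansion} together with the fact (noted above) that $\pi$ maps edges to edges. For the nodes, every element of $V(\mathcal{G})$ has the form $v^\chi$ with $\pi(v^\chi)=v$, so the image is contained in $V(G)$, and conversely each $v\in V(G)$ is hit by $v^+$; hence $\pi(V(\mathcal{G}))=V(G)$. For the edges, each edge of $\mathcal{G}$ has the form $(u^\chi,v^{\chi\cdot\sigma(u,v)})$ and projects to $\pi((u^\chi,v^{\chi\cdot\sigma(u,v)}))=(u,v)\in E(G)$, while every $(u,v)\in E(G)$ is the projection of its two lifts; therefore $\pi(E(\mathcal{G}))=E(G)$.

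The substantive step is the sign identity, and here the main obstacle is well-definedness rather than any computation: by Proposition~\ref{prop:double_cover} the map $\pi$ is a $2$-fold cover, so $\pi^{-1}$ is two-valued on edges and $\mu\circ\pi^{-1}$ is a genuine function only if $\mu$ takes a common value on both lifts of each edge. I would resolve this by evaluating $\mu$ explicitly on the fiber. Writing $s=\sigma(u,v)$, the two lifts of $(u,v)$ are $(u^+,v^{s})$ and $(u^-,v^{-s})$, on which $\mu$ returns $(+1)\cdot s=s$ and $(-1)\cdot(-s)=s$ respectively. Since the two values coincide and equal $s=\sigma(u,v)$, the function $\mu$ is constant along each fiber of $\pi$, so $\mu\circ\pi^{-1}$ is well-defined and agrees with $\sigma$. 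Combining this with the node and edge identities gives $G=(\pi(V(\mathcal{G})),\pi(E(\mathcal{G})),\mu\circ\pi^{-1})$, as claimed. The only place demanding care is this consistency of $\mu$ across the two lifts, which is exactly the algebraic shadow of the Gremban involution symmetry $\eta$ exchanging the two preimages of each edge.
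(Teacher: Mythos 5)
Your proposal is correct and follows essentially the same route as the paper's proof: both reduce the claim to the identities on nodes and edges (which hold by construction) and then verify that $\mu$ takes the common value $\sigma(e)$ on the two lifts $(u^+,v^{\sigma(e)})$ and $(u^-,v^{-\sigma(e)})$, establishing well-definedness of $\mu\circ\pi^{-1}$ and its agreement with $\sigma$. Your added emphasis on why well-definedness is the real content is a nice touch but does not change the argument.
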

\begin{proof}
For the node and edge sets, we have \( \pi(V(\mathcal{G})) = V(G) \) and $\pi(E(\mathcal{G}))=E(G)$ by construction. It thus remains to show that $\mu$ is well-defined and satisfies \( \mu \circ \pi^{-1} = \sigma \). Let $e=(u,v)\in E(G)$ be any edge in $G$. Then \( \pi^{-1}(e) = \{(u^+, v^{\sigma(e)}), (u^-, v^{-\sigma(e)})\} \) and $\mu(\pi^{-1}(e))=\{+\cdot \sigma(e),-\cdot-\sigma(e)\}=\sigma(e)$ as required.
\end{proof}
This proposition shows that the projection map $\pi$ is the inverse of the Gremban expansion, and recovers the vertex and edge sets as well as the sign function of $G$ from $\mathcal{G}$. We now extend this result to subgraphs.

\begin{proposition}\label{prop:projection_subgraph}
Let $G$ be a signed graph, $\mc G$ its Gremban expansion, and $\pi$ the projection map. A subgraph $\mathcal{H}\subseteq \mathcal{G}$ is the Gremban expansion of a signed subgraph $H\subseteq G$ if and only if $\mathcal{H}$ is Gremban-symmetric.  In that case the corresponding signed subgraph is $H = (\pi(V(\mathcal{H})), \pi(E(\mathcal{H})), \sigma|_{\pi(E(\mathcal{H}))}).$
\end{proposition}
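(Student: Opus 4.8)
The plan is to prove the two implications separately, with the reverse direction carrying most of the work. Throughout I will use the explicit description of the expansion from Definition~\ref{def:gremban_expansion} together with the fact, established in the proof of Proposition~\ref{prop:projection_graph}, that each edge $(u,v)\in E(G)$ has exactly two preimages under $\pi$, namely $\pi^{-1}((u,v))=\{(u^+,v^{\sigma(u,v)}),(u^-,v^{-\sigma(u,v)})\}$, and that these two preimage edges are interchanged by the Gremban involution $\eta$.

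For the forward direction, I would assume $\mathcal{H}=\mathcal{G}(H)$ for some signed subgraph $H\subseteq G$ and show directly that $V(\mathcal{H})$ and $E(\mathcal{H})$ are $\eta$-invariant. By construction $V(\mathcal{H})=\{v^\chi : v\in V(H),\ \chi\in\{\pm\}\}$ contains both polarities of each node of $H$, so $\eta(V(\mathcal{H}))=V(\mathcal{H})$; similarly $E(\mathcal{H})$ contains, for each $(u,v)\in E(H)$, both lifted edges $(u^\chi,v^{\chi\cdot\sigma(u,v)})$, which $\eta$ permutes. Hence $\eta(E(\mathcal{H}))=E(\mathcal{H})$ and $\mathcal{H}$ is Gremban-symmetric.

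For the reverse direction, I would assume $\mathcal{H}$ is Gremban-symmetric and define $H:=(\pi(V(\mathcal{H})),\pi(E(\mathcal{H})),\sigma|_{\pi(E(\mathcal{H}))})$ as in the statement. First I would check that $H$ is a genuine signed subgraph of $G$: its node and edge sets are subsets of $V(G)$ and $E(G)$ because $\pi$ maps nodes to nodes and edges to edges, and the sign function is inherited from $\sigma$. Then I would verify the two set equalities $V(\mathcal{G}(H))=V(\mathcal{H})$ and $E(\mathcal{G}(H))=E(\mathcal{H})$. The vertex equality follows because Gremban symmetry forces $v^{-\chi}\in V(\mathcal{H})$ whenever $v^\chi\in V(\mathcal{H})$, so both polarities of every node in $\pi(V(\mathcal{H}))$ already lie in $V(\mathcal{H})$, matching exactly the vertex set produced by the expansion.

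The crux is the edge equality, which is where I expect the only genuine bookkeeping. Fixing an edge $(u,v)\in E(H)=\pi(E(\mathcal{H}))$, at least one of its two $\pi$-preimages lies in $E(\mathcal{H})$; since these preimages are $\eta$-images of one another, Gremban symmetry forces \emph{both} to lie in $E(\mathcal{H})$. As these two preimages are precisely the two edges that $\mathcal{G}(H)$ attaches to the lifted endpoints using the sign $\sigma(u,v)$, every edge of $\mathcal{G}(H)$ lies in $E(\mathcal{H})$ and conversely. The main subtlety to handle carefully is this ``all-or-nothing'' membership of each $\eta$-orbit of edges, which is exactly what Gremban symmetry guarantees and what makes the restricted sign function $\sigma|_{\pi(E(\mathcal{H}))}$ the correct choice; combining the two set equalities yields $\mathcal{G}(H)=\mathcal{H}$, completing the proof.
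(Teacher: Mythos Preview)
Your proposal is correct and follows essentially the same approach as the paper: both directions are argued separately, and the reverse direction hinges on the observation that Gremban symmetry forces each $\eta$-orbit of lifted edges to lie entirely in $E(\mathcal{H})$, so that the expansion of $H=(\pi(V(\mathcal{H})),\pi(E(\mathcal{H})),\sigma|_{\pi(E(\mathcal{H}))})$ recovers $\mathcal{H}$. Your write-up is in fact slightly more explicit than the paper's in spelling out the ``all-or-nothing'' membership of edge preimages and in checking both containments for the edge-set equality.
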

\begin{proof}
$(\!\Rightarrow\!)$  Suppose that $\mathcal{H}$ is the Gremban expansion of some signed subgraph $H\subseteq G$. 
Since each signed edge produces a symmetric pair in the expansion, the Gremban expansion $\mathcal{H}$ must be Gremban-symmetric.

$(\!\Leftarrow\!)$ Conversely, assume $\mathcal{H}\subseteq\mathcal{G}$ is a Gremban-symmetric subgraph, and define $H$ as in the Proposition. Since $\pi$ always maps edges of $\mathcal{G}$ to edges of $G$ and $E(\mathcal{H})\subseteq E(\mathcal{G})$, we have that $E(H)\subseteq E(G)$. In addition, the sign function on $H$ is constructed so that edges in $H$ have the same sign as the edges in $G$, so $H$ is a subgraph of $G$. We verify that the Gremban expansion of $H$ is $\mathcal{H}$. First, the Gremban expansion lifts $V(H)=\pi(V(\mathcal{H}))$ to $V(\mathcal{H})$. Second, the Gremban expansion lifts each edge $(v,w)\in E(H)=\pi(E(\mathcal{H}))$ to the edge pair $(v^+,w^{\sigma(vw)}),(v^-,w^{-\sigma(vw)})\in E(\mathcal{H})$ which, by definition of $\sigma$ and the Gremban symmetry of $\mathcal{H}$, coincides with $\pi^{-1}(u,v)$. This confirms that $\mathcal{H}$ is the Gremban expansion of $H$. 
\end{proof}
One further application of the projection map is the following characterization of Gremban graphs:
\begin{proposition}\label{prop:characterization of Gremban graphs}
An unsigned graph $\mathcal{G}$ is the Gremban expansion of a signed graph if and only if it has a fixed-point-free involutive automorphism $\eta$ such that $(u,v)\in E(\mc G)$ implies $v\neq \eta(u)$ and $(u,\eta(v))\not\in E(\mc G)$.
\end{proposition}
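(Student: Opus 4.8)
The plan is to prove both implications separately. The forward direction is quick and rests on Proposition~\ref{prop:automorphically_equiv}; the reverse direction requires reconstructing a signed graph from the automorphism, and this is where the real work lies.

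For the forward direction, suppose $\mathcal{G}=\mathcal{G}(G)$ and take $\eta$ to be the Gremban involution, which Proposition~\ref{prop:automorphically_equiv} already certifies to be a fixed-point-free involutive automorphism. The first condition holds because an edge $(u^\chi,u^{-\chi})$ would correspond to a self-loop at $u$ in $G$, which is excluded by assumption; hence no node is adjacent to its $\eta$-image. The second condition holds because every neighbour $v^\psi$ of $u^\chi$ satisfies $\psi=\chi\cdot\sigma(u,v)$ by Definition~\ref{def:gremban_expansion}, so $u^\chi$ is adjacent to exactly one polarity of $v$, and the opposite polarity $\eta(v^\psi)=v^{-\psi}$ is a non-neighbour.

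For the reverse direction, suppose $\mathcal{G}$ carries such an involution $\eta$. Since $\eta$ is fixed-point-free and involutive, its orbits partition $V(\mathcal{G})$ into $n$ pairs; I would fix an arbitrary labelling naming one element of each orbit $v^+$ and the other $v^-=\eta(v^+)$, take $V(G)$ to be the resulting $n$ labels, and let $\pi$ collapse each polarity to its label. I then define $G$ by declaring $(u,v)\in E(G)$ whenever $\mathcal{G}$ has an edge between a polarity of $u$ and a polarity of $v$, and by setting $\sigma(u,v):=\chi\cdot\psi$ for any witnessing edge $(u^\chi,v^\psi)$. The key verification is that $\sigma$ is well-defined and that $\mathcal{G}$ coincides with the Gremban expansion of this $G$. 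Because $\eta$ is an automorphism, each edge $(u^\chi,v^\psi)$ has a companion $(u^{-\chi},v^{-\psi})$, and both yield the same product $\chi\psi$; the second condition forbids $u^\chi$ from being adjacent to both $v^\psi$ and $v^{-\psi}=\eta(v^\psi)$, so only one sign can occur between the two orbits and $\sigma$ is unambiguous, while the first condition rules out self-loops. For each $(u,v)\in E(G)$ with $\sigma(u,v)=s$, the edges of $\mathcal{G}$ joining these orbits are exactly the pair $(u^+,v^s),(u^-,v^{-s})$ of polarity product $s$, which is precisely what the Gremban expansion of $G$ produces, so $\mathcal{G}=\mathcal{G}(G)$.

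I expect the main obstacle to be the bookkeeping in the reverse direction: one must confirm that the two local conditions on $\eta$ force every orbit-to-orbit connection into the rigid Gremban pattern---all equal-polarity or all opposite-polarity, never a mixture---which is what guarantees a consistent sign and matching edge sets. The second condition carries the weight here, since it is what collapses the four a priori possible edges between two orbits down to the two admissible ones.
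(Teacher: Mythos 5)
Your proof is correct and follows essentially the same route as the paper: the forward direction invokes Proposition~\ref{prop:automorphically_equiv} together with the definition of the expansion, and the reverse direction chooses an antisymmetric labelling of the $\eta$-orbits (the paper's ``antisymmetric polarization'') and uses the two local conditions to show that the edges between any two orbits form exactly the companion pair $\{(u^\chi,v^\psi),(u^{-\chi},v^{-\psi})\}$, which makes the induced sign well-defined and identifies $\mathcal{G}$ with the expansion of the reconstructed signed graph. Your version is somewhat more explicit than the paper's in verifying the two conditions in the forward direction, but the substance is identical.
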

\begin{proof}
We will show how to construct a signed graph that lifts to $\mathcal{G}$, when an involution with the stated conditions exists. We first note that since $\eta$ is an automorphism, also $(\eta(u),v)\not\in E(\mathcal{G})$ and $\eta(u,v)\in E(\mathcal{G}$). As a result, for every edge $(u,v)\in E(\mc G)$ the induced subgraph on nodes $\{u,v,\eta(u),\eta(v)\}$ has precisely two edges $\{(u,v),\eta(u,v)\}$.
Now choose an antisymmetric polarization, i.e.,  $\chi:V(\mathcal{G})\rightarrow\{\pm\}$ such that $\chi\circ\eta=-\chi$. The projection map $\pi$ associated to this polarization then gives a signed graph, with sign induced by $\pi$, whose Gremban expansion is $\mathcal{G}$; indeed, each $4$-node induced subgraph described above projects to a signed edge, which in turn is lifted to the original subgraph. The converse follows from Proposition \ref{prop:automorphically_equiv} and Definition \ref{def:gremban_expansion}. 
\end{proof}
Proposition \ref{prop:characterization of Gremban graphs} shows that a Gremban graph is specified by a triple $(\mathcal{G},\eta,\chi)$ consisting of:
\begin{itemize}
    \item an unsigned graph $\mathcal{G}$;
    \item an involution $\eta$ of $\mathcal{G}$ that satisfies the conditions in Proposition \ref{prop:characterization of Gremban graphs};
    \item an antisymmetric polarization $\chi$ of $V(\mathcal{G})$.
\end{itemize}

In some cases, we are interested in recovering only the positive or negative polarity of nodes or edges from the Gremban expansion. To this end, we define the one-sided projection map:  
\begin{definition}[One-sided projection map]  \label{def:one_sided_projection}
    Let \( \mathcal{G} \) be the Gremban expansion of a signed graph \( G \). The \emph{one-sided projection map} \( \pi_\chi : V(\mathcal{G}) \to V(G) \cup \{ \emptyset \} \) is defined by $\pi_{\chi}:v^{\chi}\mapsto v$ and $\pi_{\chi}:v^{-\chi}\mapsto \emptyset$.
\end{definition}
The one-sided projection map extends to edges in the standard way for edges between the correct polarities, i.e., $\pi_{\chi}:(u^\chi,v^\chi)\mapsto(u,v)$, and it maps all other edges to $\emptyset$. This map furthermore interacts nicely with Gremban-symmetric sets and partitions.

\begin{proposition} \label{prop:one_sided_projection}
Let \( \mathcal{U} \subseteq V(\mathcal{G}) \) be a Gremban-symmetric node set. Then \( \pi_+(\mathcal{U}) = \pi_-(\mathcal{U}) \). Moreover, if $\mathcal{U}_1\cup\cdots\cup\mathcal{U}_k$ is a Gremban-symmetric partition of $V(\mathcal{G})$, then $\pi_\chi(\mathcal{U}_1)\cup\cdots\cup\pi_{\chi}(\mathcal{U}_k)$ is a partition of $V(G)$.
\end{proposition}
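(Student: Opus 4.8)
The plan is to treat the two claims separately, reducing each to a bookkeeping argument about the two polarities. Throughout I would use that, by Definition \ref{def:one_sided_projection}, the one-sided projection satisfies $\pi_\chi(\mathcal{U}) = \{v \in V(G) : v^\chi \in \mathcal{U}\}$ once the formal value $\emptyset$ is discarded, and the identity $\pi_- = \pi_+ \circ \eta$ as maps $V(\mathcal{G}) \to V(G)\cup\{\emptyset\}$. This identity is immediate: $\pi_+(\eta(v^-)) = \pi_+(v^+) = v = \pi_-(v^-)$, while both sides send the opposite polarity to $\emptyset$.

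For the first claim I would argue by mutual inclusion. Take $v \in \pi_+(\mathcal{U})$, i.e. $v^+ \in \mathcal{U}$. Since $\mathcal{U}$ is Gremban-symmetric we have $\eta(\mathcal{U}) = \mathcal{U}$, so $v^- = \eta(v^+) \in \eta(\mathcal{U}) = \mathcal{U}$, which means $v \in \pi_-(\mathcal{U})$. The reverse inclusion is identical with the roles of $+$ and $-$ exchanged, giving $\pi_+(\mathcal{U}) = \pi_-(\mathcal{U})$.

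For the second claim I would fix a polarity $\chi$ and exploit the bijection between same-polarity nodes and $V(G)$. Writing $V^\chi := \{v^\chi : v \in V(G)\}$, the restriction $\pi_\chi|_{V^\chi} : V^\chi \to V(G)$ is a bijection straight from the definition. A partition $\mathcal{U}_1 \cup \cdots \cup \mathcal{U}_k$ of $V(\mathcal{G})$ restricts to a partition $\{\mathcal{U}_i \cap V^\chi\}_i$ of $V^\chi$, and since $\pi_\chi$ sends every node of opposite polarity to $\emptyset$ we have $\pi_\chi(\mathcal{U}_i) = \pi_\chi(\mathcal{U}_i \cap V^\chi)$. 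A bijection carries a partition to a partition, so the images $\{\pi_\chi(\mathcal{U}_i)\}_i$ are pairwise disjoint and cover $V(G)$. Concretely, coverage holds because each $v \in V(G)$ has $v^\chi$ in exactly one block, and disjointness holds because $v \in \pi_\chi(\mathcal{U}_i) \cap \pi_\chi(\mathcal{U}_j)$ would force $v^\chi \in \mathcal{U}_i \cap \mathcal{U}_j$, contradicting disjointness of the blocks.

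The point to handle with care is that the partition argument above never actually invoked Gremban symmetry, so I expect the main subtlety to be clarifying what symmetry buys us, namely that the resulting partition is independent of the chosen polarity. Using $\pi_- = \pi_+ \circ \eta$ together with Gremban symmetry of the partition (which gives $\eta(\mathcal{U}_i) = \mathcal{U}_{\tau(i)}$ for an index-level involution $\tau$), one finds $\pi_-(\mathcal{U}_i) = \pi_+(\eta(\mathcal{U}_i)) = \pi_+(\mathcal{U}_{\tau(i)})$, so the two collections $\{\pi_+(\mathcal{U}_i)\}_i$ and $\{\pi_-(\mathcal{U}_i)\}_i$ coincide up to reindexing. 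I would also flag the minor degeneracy that a block lying entirely in the opposite polarity projects to $\emptyset$; such empty images are discarded (and the pairing $\tau$ shows each is matched with a nonempty partner), so the nonempty images constitute a genuine partition of $V(G)$.
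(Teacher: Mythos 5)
Your proof is correct, and the first claim is argued exactly as in the paper (Gremban symmetry forces $v^+\in\mathcal{U}\iff v^-\in\mathcal{U}$, hence equality of the two one-sided images). For the second claim your route is essentially the same bookkeeping, but it is actually more careful than the paper's own proof: the paper asserts that ``each node $v\in V(G)$ has both $v^+$ and $v^-$ appearing in exactly one $\mathcal{U}_i$,'' which only holds when $\eta$ fixes every block; for a Gremban-symmetric partition $\eta$ may swap blocks ($\eta(\mathcal{U}_i)=\mathcal{U}_j$ with $j\neq i$), as indeed happens in the paper's own multi-way clustering examples. You correctly observe that the partition property of $\{\pi_\chi(\mathcal{U}_i)\}$ needs only that $v^\chi$ lies in a unique block, that the symmetry hypothesis is what makes the result independent of the chosen polarity via $\pi_-=\pi_+\circ\eta$, and that blocks contained entirely in the opposite polarity project to $\emptyset$ and must be discarded --- an edge case the paper's proof glosses over. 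Both of these clarifications are legitimate improvements rather than deviations.
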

\begin{proof}
Since \( \mathcal{U} = \eta(\mathcal{U}) \), every node \( v^\chi \in \mathcal{U} \) implies \( v^{-\chi} \in \mathcal{U} \) as well. Therefore,
\[
\pi_+(\mathcal{U}) = \{ v \in V(G) : v^+ \in \mathcal{U} \} = \{ v \in V(G) : v^- \in \mathcal{U} \} = \pi_-(\mathcal{U}).
\]
Now let $\mathcal{U}_1\cup\cdots\cup\mathcal{U}_k$ be a Gremban-symmetric partition of \( V(\mathcal{G}) \). By construction, each node \( v \in V(G) \) has both \( v^+ \) and \( v^- \) appearing in exactly one \( \mathcal{U}_i \), so \( v \in \pi_+(\mathcal{U}_i) \) for a unique \( i \). This ensures that the sets \( \pi_+(\mathcal{U}_i) \) are disjoint and collectively cover \( V(G) \), hence forming a partition.
\end{proof}
%
\subsection{Sign switching}

As illustrated in the previous section, one of the key features of the Gremban expansion is its ability to reflect the sign information of the original graph in the combinatorial structure of the expanded graph. This observation translates nicely when considering operations between signed graphs. In this section, we study how certain sign switching operations \cite{zaslavsky1982signed} on signed graphs translate structurally to permutations of their Gremban expansions. 
\begin{definition}
Let $G=(V,E,\sigma)$. A switching function is a map $\theta:V\rightarrow\{\pm 1\}$ and this induces:
\begin{itemize}
    \item A \emph{switching partition} $V=U^+_\theta\cup U^-_\theta$, where $U^-_\theta :=\{v\in V : \theta(v)=-1\}$ and $U^+_\theta= V\setminus U_\theta^-$.
    \item A \emph{switched graph} $G_{\theta} = (V,E,\sigma_\theta)$ where $\sigma_\theta(u,v)=\theta(u)\theta(v)\sigma(u,v)$.
\end{itemize}
\end{definition}
We start by observing that switching functions can always be decomposed into \emph{elementary switching functions}: these are switching functions $\theta_v$ labeled by $v\in V$, and defined by $\theta_v(v)=-1$ and $+1$ otherwise.
\begin{lemma} \label{lem:vertex_vs_global_switching}
Let $G$ be a signed graph and $\theta$ a switching function. Then switching $G$ with respect to $\theta$ is equivalent to switching $G$ consecutively with respect to the elementary switching functions $\theta_{v_1},\dots,\theta_{v_k}$, where $U_\theta^{-}=\{v_1,\dots,v_k\}$ are those nodes for which $\theta(v)=-1$, arranged in arbitrary order. 
\end{lemma}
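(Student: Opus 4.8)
The plan is to exploit the group structure of switching functions under pointwise multiplication. First I would establish a composition rule: switching $G$ by $\theta_1$ and then by $\theta_2$ produces exactly the same signed graph as switching by the pointwise product $\theta_1\theta_2$. This is a one-line computation on the sign function, since
\[
(\sigma_{\theta_1})_{\theta_2}(u,v) = \theta_2(u)\theta_2(v)\,\sigma_{\theta_1}(u,v) = \theta_1(u)\theta_2(u)\,\theta_1(v)\theta_2(v)\,\sigma(u,v) = (\theta_1\theta_2)(u)\,(\theta_1\theta_2)(v)\,\sigma(u,v).
\]
Iterating this identity, consecutive switching by $\theta_{v_1},\dots,\theta_{v_k}$ is the same as a single switch by the function $\prod_{i=1}^k \theta_{v_i}$.

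Next I would verify that this accumulated product equals $\theta$. Evaluating at a node $w\in V$, each elementary factor $\theta_{v_i}(w)$ equals $-1$ precisely when $w=v_i$ and $+1$ otherwise, so $\big(\prod_{i}\theta_{v_i}\big)(w) = (-1)^{|\{i \,:\, v_i = w\}|}$. Because the $v_i$ are the \emph{distinct} elements of $U_\theta^-$, the exponent is $1$ when $w\in U_\theta^-$ and $0$ otherwise; hence the product equals $-1$ exactly on $U_\theta^-$ and $+1$ elsewhere, which is the defining property of $\theta$. Combining this with the composition rule yields the claimed equivalence between switching by $\theta$ and the sequence of elementary switches.

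Finally, order-independence follows immediately: since $\{\pm 1\}^V$ is abelian under pointwise multiplication, the product $\prod_{i}\theta_{v_i}$ does not depend on the ordering of its factors, so any arrangement of $v_1,\dots,v_k$ produces the same switched graph $G_\theta$.

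Since every step reduces to a direct computation, I do not anticipate a genuine obstacle. The only point demanding care is confirming that the \emph{distinctness} of the $v_i$ makes each coordinate flip exactly once; this is what ensures the accumulated parity reproduces $\theta$ itself rather than a function differing on repeated nodes, and it is precisely where the hypothesis that $U_\theta^-=\{v_1,\dots,v_k\}$ enumerates a set (not a multiset) is used.
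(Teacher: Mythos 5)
Your proof is correct, but it is organized differently from the paper's. The paper argues edge by edge: it fixes an edge $uv$ and counts how many times its sign flips as the elementary switchings $\theta_{v_1},\dots,\theta_{v_k}$ are applied in sequence, splitting into the three cases where zero, one, or both endpoints lie in $U^-_\theta$, and observing that the net effect is multiplication by $\theta(u)\theta(v)$ in every case. You instead first prove a general composition law, $(\sigma_{\theta_1})_{\theta_2}=\sigma_{\theta_1\theta_2}$, which exhibits switching as an action of the abelian group $\{\pm1\}^V$ under pointwise multiplication, and then identify $\prod_i\theta_{v_i}$ with $\theta$ by a pointwise parity check on nodes. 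Your route is slightly more general and gets the order-independence claim for free from commutativity (the paper handles it implicitly, since its flip-count per edge manifestly does not depend on the ordering), and the composition law is reusable beyond elementary switchings. The paper's version is more concrete and avoids introducing the group structure, at the cost of a small case analysis. Your attention to the distinctness of the $v_i$ (so each coordinate flips exactly once) is the right point of care and corresponds exactly to the paper's ``flipped once / flipped twice'' dichotomy.
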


\begin{proof}
We prove that switching with respect to \( \theta \) is equivalent to performing successive switchings at all nodes in \( U^-_\theta=\{v:\theta(v)=-1\}\).
Consider an arbitrary edge \( uv \in E \), and track how its sign changes as each switching is applied. If neither \( u \) nor \( v \) belongs to \( U^-_\theta \), then the edge is not affected. If exactly one of the endpoints lies in \( U^-_\theta \), then the sign of the edge is flipped once. If both endpoints are in \( U^-_\theta \), then the edge is flipped twice, and the sign remains unchanged. In all cases, the resulting sign of \( uv \) is \( \theta(u) \theta(v) \sigma(uv) \), which is the transformation defined by switching with respect to \( \theta \). 
\end{proof}
We now show how the switching of a signed graph is reflected in its Gremban expansion:
\begin{proposition}  \label{prop:switching_permutation}
Let \( G = (V,E,\sigma) \) be a signed graph, $\mathcal{G} = \mathcal{G}(G)$ its Gremban expansion, and $\theta$ a switching function. Let $\tau$ be the permutation of $V(\mathcal{G})$ defined by $\tau(v^{\chi})=v^{-\chi}$ for all $v$ for which $\theta(v)=-1$, and leaving all other nodes of $\mathcal{G}$ unchanged. Then switching $G$ according to $\theta$ corresponds to permuting the vertex polarities of $\mathcal{G}$ according to $\tau$. In other words, the following diagram commutes:
$$
    \begin{tikzcd}
    G \arrow{r}{\text{Gremban}} \arrow[swap]{d}{\theta} &[2em] \mathcal{G}(G) \arrow{d}{\tau} 
    \\
    G_\theta \arrow{r}{\text{Gremban}} & \mathcal{G}(G_{\theta})
    \end{tikzcd}
$$
In particular, switching $G$ with respect to $\theta_v$ corresponds to interchanging the two polarities of $v$ in $\mathcal{G}$.
\end{proposition}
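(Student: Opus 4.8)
The plan is to exploit that switching changes neither the vertex set nor the edge set of a signed graph, only its sign function. Consequently $\mathcal{G}(G)$ and $\mathcal{G}(G_\theta)$ are built on the identical vertex set $V(\mathcal{G})=\{v^\chi : v\in V,\ \chi\in\{\pm\}\}$, and $\tau$ is a bijection of this set onto itself — indeed an involution, since $\tau$ squares to the identity because $\theta(v)^2=1$. Proving that the square commutes therefore reduces to a purely edge-level claim: I must show that $\tau$ carries $E(\mathcal{G}(G))$ bijectively onto $E(\mathcal{G}(G_\theta))$. Once the edge sets match under $\tau$, the two Gremban expansions coincide after relabeling by $\tau$, which is exactly the assertion that the two compositions in the diagram agree.

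The key simplification I would make is to encode $\tau$ by the single uniform formula $\tau(w^\chi)=w^{\theta(w)\chi}$, which reproduces both defining cases at once (the polarities of $w$ are swapped precisely when $\theta(w)=-1$, and fixed otherwise). I would also repackage the adjacency rule of Definition \ref{def:gremban_expansion}, together with the sign-product map $\mu$ from Proposition \ref{prop:projection_graph}, into the compact membership criterion: for every $(u,v)\in E$,
\[
(u^\chi,v^\psi)\in E(\mathcal{G}(G)) \iff \chi\psi=\sigma(u,v),
\]
and likewise $(u^\chi,v^\psi)\in E(\mathcal{G}(G_\theta))\iff \chi\psi=\sigma_\theta(u,v)$, using $E(G_\theta)=E(G)$. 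This turns the whole proposition into a one-line sign computation.

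With these in place, I would take an edge $(u^\chi,v^\psi)\in E(\mathcal{G}(G))$, so that $\chi\psi=\sigma(u,v)$, and apply $\tau$ to obtain $(u^{\theta(u)\chi},v^{\theta(v)\psi})$, whose polarity product is
\[
(\theta(u)\chi)(\theta(v)\psi)=\theta(u)\theta(v)\,\chi\psi=\theta(u)\theta(v)\,\sigma(u,v)=\sigma_\theta(u,v),
\]
which is exactly the criterion for lying in $E(\mathcal{G}(G_\theta))$. Hence $\tau$ maps $E(\mathcal{G}(G))$ into $E(\mathcal{G}(G_\theta))$; since both edge sets have the same cardinality $2m$ and $\tau$ is injective, this inclusion is a bijection. (Equivalently, one may note $(G_\theta)_\theta=G$ with the same associated permutation $\tau$, and apply the forward computation to $G_\theta$ to obtain the reverse inclusion directly from $\tau$ being an involution.) This establishes the commuting square. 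The ``in particular'' claim is then immediate: for $\theta=\theta_v$ one has $\theta_v(w)=-1$ only at $w=v$, so $\tau$ transposes $v^+\leftrightarrow v^-$ and fixes every other polarity.

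I do not expect a genuinely hard step here: the entire content sits in the sign bookkeeping of $\theta(u)\theta(v)\chi\psi$. The only points requiring care are the uniform encoding $\tau(w^\chi)=w^{\theta(w)\chi}$ and the product criterion for Gremban edges; with those, the central computation is forced and the two cases $\theta(w)=\pm1$ collapse into one. An alternative route would invoke Lemma \ref{lem:vertex_vs_global_switching} to reduce to a single elementary switch $\theta_v$, verify the square in that base case, and compose the resulting squares — but the direct global computation avoids the induction and is cleaner.
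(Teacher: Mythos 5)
Your proof is correct, and it takes a genuinely different route from the paper's. The paper first invokes Lemma~\ref{lem:vertex_vs_global_switching} to reduce to a single elementary switch $\theta_v$, and then verifies the commuting square by inspecting the two lifted copies of an edge incident to $v$ (a positive edge $(v,w)$ lifts to $(v^+,w^+),(v^-,w^-)$, which after the switch become $(v^+,w^-),(v^-,w^+)$, i.e.\ exactly the transposition $v^+\leftrightarrow v^-$). You instead handle an arbitrary switching function $\theta$ in one pass, by encoding $\tau$ uniformly as $\tau(w^\chi)=w^{\theta(w)\chi}$ and characterizing Gremban edges by the polarity-product criterion $\chi\psi=\sigma(u,v)$; the whole content then collapses to the identity $\theta(u)\theta(v)\chi\psi=\sigma_\theta(u,v)$, with bijectivity following from injectivity of $\tau$ plus equal cardinalities (or from $\tau$ being an involution). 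The paper's version buys a local, pictorial verification at the cost of the composition lemma and a case split; yours buys a single algebraic computation that avoids the reduction entirely and makes the gauge-transformation character of switching explicit. Both are complete; your product criterion is also a clean restatement of the map $\mu$ from Proposition~\ref{prop:projection_graph}, which ties the argument back to the projection machinery already in the paper.
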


\begin{proof}
By the composition property of Lemma \ref{lem:vertex_vs_global_switching}, it suffices to prove the proposition for elementary switching functions. Let $(u,v)$ be a positive edge. Then \( \mc G(G) \) contains the edges \( (v^+, w^+) \) and \( (v^-, w^-) \). After switching \( v \), the edge becomes negative, and its expansion contains \( (v^+, w^-) \) and \( (v^-, w^+) \), which are exactly the result of swapping \( v^+ \leftrightarrow v^- \). The same logic applies if \( (v, w)\) is a negative edge. Hence, \( \mc G(G_{\theta_v}) = \tau_v \mc G(G) \tau_v^{-1} \), and the diagram commutes.
\end{proof}
In other words, inverting the sign of all edges incident on a node corresponds to permuting the polarities of that node in the Gremban expansion. More generally, inverting the sign of all edges between two node groups in a partition corresponds to permuting the polarities of one of the groups in the Gremban expansion. For a given signed graph $G$, the switching operation can be thought of as some form of `gauge equivalence'; this idea will come back in the next section when we optimize a function of $G$ over all possible switchings.
    
\begin{example}
To illustrate Proposition~\ref{prop:switching_permutation} we revisit the signed $4$-cycle graph. Figure \ref{fig:commutative_diagram} shows the effect of switching signs at one of the nodes on the signed graph and on the Gremban graph.
\begin{figure}[h!]
    \centering
    \includegraphics[width=0.65\linewidth]{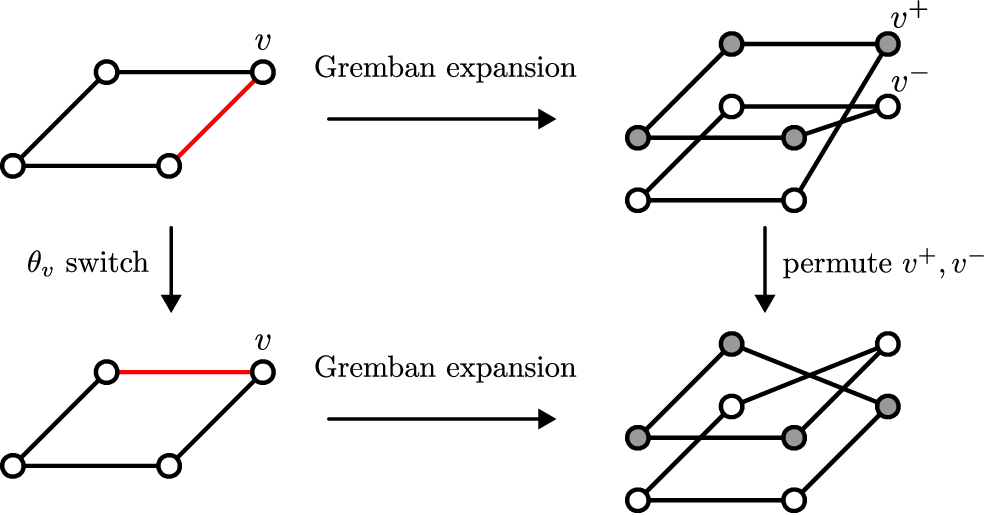}
    \caption{Switching the signs of a signed graph by a switching function translates to permuting nodes in the Gremban expansion. Here, the elementary switching by $\theta_v$ in $G$ results in the permutation of $v^{+},v^{-}$ in $\mathcal{G}$. In the bottom-right graph, the color of $v^+,v^-$ reflects the polarity of these nodes in $\mathcal{G}(G)$, before permuting.}
    \label{fig:commutative_diagram}
\end{figure}
\end{example}

\subsection{Balance and connectivity}
One important property of signed graphs is structural balance \cite{harary1953notion,cartwright1956structural}: a signed graph $G$ is called \emph{balanced} if there is a bipartition $V(G)=U_1\cup U_2$ of its nodes such that all edges within $U_1,U_2$ are positive and all edges between $U_1$ and $U_2$ are negative. The subsets $U_1,U_2$ are called the \emph{balanced factions} of $G$. We first consider (exact) balance and then move to approximate balance and how to quantify it. We start with an example that illustrates how the Gremban expansion of a signed graph can encode its balance.
\begin{example}
Figure \ref{fig:first_example} shows that the Gremban expansion of an unbalanced $4$-cycle is an (unsigned) \textit{connected} $8$-cycle. Figure \ref{fig:balanced_cycle} shows that for a balanced $4$-cycle, the Gremban expansion is two \textit{disconnected} copies of a $4$-cycle. This observation is made precise in Theorem \ref{thm:balanced_disconnected}.
\begin{figure}[h!]
    \centering
    \includegraphics[width=0.6\linewidth]{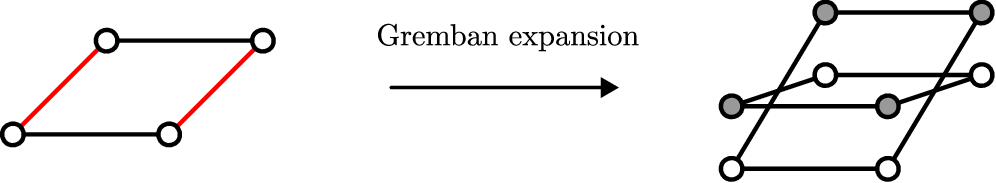}
    \caption{The Gremban expansion of a balanced signed $4$-cycle is two disconnected copies of a $4$-cycle.}
    \label{fig:balanced_cycle}
\end{figure}
\end{example}
Balance in signed graphs knows many characterizations and dates back at least to the work of Harary in the 1950s~\cite{harary1953notion} (see also \cite{babul2025strong,diaz2025mathematical}). We now give a characterization in terms of the Gremban expansion:
\begin{theorem}  \label{thm:balanced_disconnected}
A connected, signed graph is balanced if and only if its Gremban expansion is disconnected.
\end{theorem}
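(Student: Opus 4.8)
The plan is to prove both implications through the covering-map structure of the projection $\pi$ from Proposition \ref{prop:double_cover}, supplemented by an explicit faction labeling. The forward direction is an explicit construction, and the reverse direction extracts a balanced bipartition from the component structure of $\mathcal{G}$.

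For the forward direction (balanced $\Rightarrow$ disconnected), I would build a disconnecting partition directly. Given balanced factions $U_1,U_2$, set $\mathcal{A}=\{v^+:v\in U_1\}\cup\{v^-:v\in U_2\}$ and $\mathcal{B}=V(\mathcal{G})\setminus\mathcal{A}=\eta(\mathcal{A})$. Checking the three edge types against Definition \ref{def:gremban_expansion} — positive edges inside $U_1$, positive edges inside $U_2$, and negative edges between $U_1$ and $U_2$ — one verifies in each case that both lifted edges have their endpoints entirely in $\mathcal{A}$ or entirely in $\mathcal{B}$. For instance, a negative edge $(u,v)$ with $u\in U_1,\,v\in U_2$ lifts to $(u^+,v^-)$ (both in $\mathcal{A}$) and $(u^-,v^+)$ (both in $\mathcal{B}$). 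Hence no edge of $\mathcal{G}$ crosses the partition $\mathcal{A}\cup\mathcal{B}$, and $\mathcal{G}$ is disconnected; this is a routine case analysis.

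For the reverse direction (disconnected $\Rightarrow$ balanced, with $G$ connected) I would first pin down the component structure. Because $\pi$ is a local bijection and $G$ is connected, any connected component $\mathcal{C}\subseteq\mathcal{G}$ projects onto all of $G$: if $v^\chi\in\mathcal{C}$, then by Proposition \ref{prop:double_cover} every $G$-neighbour of $v$ lifts to a $\mathcal{C}$-neighbour of $v^\chi$, so $\pi(\mathcal{C})$ is nonempty and closed under taking neighbours, forcing $\pi(\mathcal{C})=V(G)$. Since each fiber $\pi^{-1}(v)=\{v^+,v^-\}$ has exactly two elements, there are at most two components; disconnectedness then forces exactly two components $\mathcal{C}_1,\mathcal{C}_2$, each containing precisely one polarity of every vertex, and interchanged by the fixed-point-free involution $\eta$. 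Defining $U_1=\{v:v^+\in\mathcal{C}_1\}$ and $U_2=\{v:v^-\in\mathcal{C}_1\}$ gives a bipartition, and since adjacency forces endpoints into a common component, a positive edge $(u,v)$ (whose lifts $(u^+,v^+),(u^-,v^-)$ join equal polarities) keeps $u,v$ in the same faction, while a negative edge (whose lifts $(u^+,v^-),(u^-,v^+)$ join opposite polarities) places $u,v$ in different factions — exactly the balance condition.

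I expect the main obstacle to be the component-counting step in the reverse direction: one must use the local-bijection property of the cover to show each component surjects onto $G$, and then combine this with the two-element fibers to conclude there are \emph{exactly} two components, each an isomorphic copy of $G$ swapped by $\eta$. Once this structure is established, the faction labeling and the edge-sign case analysis follow directly, and the forward direction is a short explicit verification.
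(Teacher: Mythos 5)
Your proof is correct, but it takes a genuinely different route from the paper's. For the forward direction, the paper invokes the switching machinery: a balanced graph can be switched to an all-positive one, whose Gremban expansion is visibly two disjoint copies of $G$, and Proposition \ref{prop:switching_permutation} guarantees that switching only permutes nodes of $\mathcal{G}$ and hence preserves disconnectedness. You instead construct the disconnecting bipartition explicitly from the balanced factions and check the three edge types by hand; this is more elementary and self-contained, at the cost of a short case analysis that the switching argument hides. For the reverse direction the two proofs diverge more sharply: the paper argues contrapositively, switching a spanning tree to be all-positive so that $\mathcal{G}$ contains two tree copies, and then using an unavoidable negative non-tree edge to connect them. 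You argue directly via the covering structure of Proposition \ref{prop:double_cover} --- each component of $\mathcal{G}$ surjects onto the connected graph $G$, the two-element fibers cap the number of components at two, and disconnectedness forces exactly two components swapped by $\eta$, from which the balanced factions can be read off. Your covering-theoretic argument is arguably more structural (it also yields, as a byproduct, that the two components are each isomorphic to $G$, echoing the voltage-graph/double-cover viewpoint mentioned in the paper's related-work section), whereas the paper's spanning-tree contrapositive is shorter given that Proposition \ref{prop:switching_permutation} has already been established. All the steps you flag as potential obstacles (surjectivity of each component onto $G$, the count of exactly two components, and the well-definedness of the faction labeling) do go through as you describe.
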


\begin{proof}
Let \( G \) be a connected and balanced signed graph. Then there exists a switching function \( \theta \) such that \( G_\theta \) is all-positive, namely the switching function that assigns $-1$ to one of the balanced factions and $+1$ to the other \cite{diaz2025mathematical}. Its Gremban expansion is therefore the disjoint union \( \mathcal{G}(G_\theta) = G_\theta \sqcup G_\theta \), which is disconnected. By Proposition~\ref{prop:switching_permutation}, \( \mathcal{G}(G) \) and \( \mathcal{G}(G_\theta) \) are related by a permutation of nodes, so \( \mathcal{G}(G) \) is also disconnected.

Conversely, assume \( G \) is unbalanced. Choose a spanning tree \( T \subseteq G \), and define a switching \( \theta \) that makes all edges in \( T \) positive. Then \( G_\theta \) contains a positive spanning tree, so its expansion \( \mathcal{G}(G_\theta) \) contains two connected subgraphs isomorphic to \( T \). Since \( G \) is unbalanced, there exists at least one negative edge in \( G_\theta \setminus T \), and its expansion connects the two copies. Hence, \( \mathcal{G}(G_\theta) \), and thus \( \mathcal{G}(G) \), is connected.
\end{proof}
We can extend the idea of Theorem \ref{thm:balanced_disconnected} to the case of non-balanced signed graphs. More precisely, cut-sets in the Gremban expansion can be related to frustration sets in the signed graph. Let us recall the definition of these two concepts:

\begin{definition}[Cut-set]  \label{def:cut_set}
    Let $G=(V,E,\sigma)$ be a (possibly unsigned) graph and $V=S\cup T$ a partition of its node set. The \emph{cut-set} $C(S,T)$ associated to this partition is the set of all edges with one end in each subset:
    \[
        C(S,T)
        \ =\ 
        \bigl\{(u,v)\in E \ :\  u\in S,\ v\in T\}\ .
    \]
    The \emph{edge connectivity} of $G$, denoted $\kappa_e(G)$, is the size of a smallest cut-set in the graph:
    \[
        \kappa_e(G)\ =\ \min_{(S,T)}\bigl| C(S,T)\bigr| ,\text{ taken~over all bipartitions $V=S\cup T$.}
    \]
\end{definition}
In the Gremban expansion, cut-sets inherit the symmetry of the partition: the cut-set associated with a Gremban-symmetric partition is itself Gremban-symmetric (Proposition \ref{prop:GS_cut_set} in Appendix \ref{app:extra_theorems}).

\begin{definition}[Frustration set]  \label{def:frustration_set}
    Let \(G=(V,E,\sigma)\) be a signed graph, and $\theta$ a switching function. 
    The \emph{frustration set} $F(\theta)$ associated to this switching function is the set of edges with negative sign after switching:
    $$
    F(\theta) = \{ uv\in E \,:\, \theta(u)\theta(v)\sigma(uv)=-1\} = \{uv\in E\,:\, \sigma_\theta(uv)=-1\}.
    $$
    The \emph{frustration index} of \(G\), denoted \(\varphi(G)\), is the size of the smallest frustration set in the graph:
    \[
        \varphi(G)
        \ =\ 
        \min_{\theta}
        \bigl|F(\theta)\bigr|, \text{~taken over all switching functions $\theta$.}
    \]
\end{definition}
The frustration set $F(\theta)$ measures the inconsistency or tension in the signed graph $G_\theta$, i.e., how far the signed graph is from being balanced. Removing a frustration set turns a graph into a balanced graph. Since balanced graphs always have zero frustration index, the frustration index is thus a measure of signed network imbalance. Similarly, the edge connectivity is a measure of how far a network is from being disconnected. Frustration sets in signed graphs are closely related to cut-sets in unsigned graphs: given a partition $(S,T)$ of an unsigned graph, then with the switching function $\theta(S)=+1$ and $\theta(T)=-1$, we have $F(\theta)=C(S,T)$. We can furthermore relate cut-sets and frustration sets in a signed graph to cut-sets in its Gremban expansion:

\begin{proposition}  \label{prop:frustration_cut}
Let \(G\) be a signed graph with edge connectivity $\kappa_e(G)$ and  frustration index \(\varphi(G)\), and let $\mathcal{G}$ be its Gremban expansion with edge connectivity $\kappa_e(\mathcal{G})$. 
Then 
\begin{align}  \label{eq:frustration_cut}
\kappa_e\bigl(\mathcal G\bigr)\le2\min\{\kappa_e(G),\varphi(G)\}.     
\end{align}
\end{proposition}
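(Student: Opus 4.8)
The plan is to exploit the fact that the edge connectivity $\kappa_e(\mathcal{G})$ is defined as a \emph{minimum} over all bipartitions of $V(\mathcal{G})$: to obtain an upper bound it therefore suffices to exhibit two explicit bipartitions, one whose cut-set has size $2\kappa_e(G)$ and another whose cut-set has size $2\varphi(G)$, and then to take the smaller of the two. I would prove each bound by choosing the partition that most naturally lifts the relevant combinatorial structure of $G$ to the expansion, and then simply count the crossing edges.

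For the bound $\kappa_e(\mathcal{G}) \le 2\kappa_e(G)$, I would start from a partition $V(G) = S \cup T$ realizing the edge connectivity, so that $|C(S,T)| = \kappa_e(G)$, and lift it to the Gremban-symmetric partition $V(\mathcal{G}) = \pi^{-1}(S) \cup \pi^{-1}(T)$, where both polarities of a node stay together. By Definition \ref{def:gremban_expansion} every edge $(u,v) \in E(G)$ lifts to exactly two edges $(u^\chi, v^{\chi\sigma(u,v)})$, and these cross the lifted partition precisely when $u$ and $v$ lie on opposite sides of $(S,T)$, i.e.\ when $(u,v) \in C(S,T)$. Hence the lifted cut-set has size $2|C(S,T)| = 2\kappa_e(G)$, giving the first bound.

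For the bound $\kappa_e(\mathcal{G}) \le 2\varphi(G)$, I would fix a switching function $\theta$ attaining the frustration index, so that the negative edges of the switched graph $G_\theta$ are exactly $F(\theta)$ with $|F(\theta)| = \varphi(G)$. By Proposition \ref{prop:switching_permutation} the expansions $\mathcal{G}(G)$ and $\mathcal{G}(G_\theta)$ differ only by the relabelling $\tau$ of vertices, so they have the same edge connectivity and it is enough to bound $\kappa_e(\mathcal{G}(G_\theta))$. There I would use the \emph{polarity partition} $V(\mathcal{G}(G_\theta)) = \mathcal{P}^+ \cup \mathcal{P}^-$ into positive and negative copies: a positive edge of $G_\theta$ lifts to two within-polarity edges (no crossing), whereas each negative edge lifts to the two cross-polarity edges $(u^+, v^-)$ and $(u^-, v^+)$, both of which cross. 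The crossing edges are therefore exactly the lifts of $F(\theta)$, and the cut-set has size $2|F(\theta)| = 2\varphi(G)$. Equivalently, one may work directly in $\mathcal{G}(G)$ with the partition $\mathcal{A} = \{v^{\theta(v)} : v \in V\}$ against its complement; a short sign-chase then shows that an edge $(u^\chi, v^{\chi\sigma(u,v)})$ crosses iff $\theta(u)\theta(v)\sigma(u,v) = -1$, i.e.\ iff $(u,v) \in F(\theta)$.

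Combining the two bounds yields $\kappa_e(\mathcal{G}) \le 2\min\{\kappa_e(G), \varphi(G)\}$. There is no deep obstacle here, since both estimates come from exhibiting witness partitions; the one step that requires care is the sign bookkeeping in the frustration case, namely verifying that exactly the frustrated edges (and both of their lifts) are the ones crossing the polarity partition. I would lean on the switching-to-permutation correspondence of Proposition \ref{prop:switching_permutation} to keep this bookkeeping clean. This route also makes transparent why the two bounds arise from structurally different Gremban-symmetric partitions, one fixing each part under $\eta$ and one swapping the parts, foreshadowing the community/faction dichotomy. As a sanity check, the balanced case $\varphi(G) = 0$ recovers $\kappa_e(\mathcal{G}) = 0$, consistent with Theorem \ref{thm:balanced_disconnected}.
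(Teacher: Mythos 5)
Your proof is correct and follows essentially the same route as the paper's: both bounds are obtained by exhibiting witness bipartitions of $V(\mathcal{G})$ --- the lift $\pi^{-1}(S)\cup\pi^{-1}(T)$ of a minimum cut for the first bound, and the polarity partition of the switched graph (justified via Proposition~\ref{prop:switching_permutation}) for the second --- and counting that each relevant edge of $G$ contributes exactly two crossing edges. Your version is somewhat more explicit about the sign bookkeeping and the invariance of $\kappa_e$ under the relabelling $\tau$, but the underlying argument is the same.
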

    
\begin{proof}
By definition of the frustration index, we can switch the signed graph \(G\) so that it has exactly \(\varphi\) negative edges.  In the corresponding expanded graph \(\mathcal G\), each negative edge of \(G\) becomes two edges joining the two copies of its endpoints, so there is a set of \(2\varphi(G)\) edges whose removal separates those two copies.  Hence \(\kappa_e\bigl(\mathcal G\bigr)\le2\varphi(G)\).
On the other hand, let \(C\) be a minimal edge cut in \(G\) of size \(\lvert C\rvert=\kappa_e(G)\).  In \(\mathcal G\), each edge of \(C\) corresponds to two edges, and deleting those \(2\kappa_e(G)\) edges disconnects \(\mathcal G\).  Therefore \(\kappa_e\bigl(\mathcal G\bigr)\le2\kappa_e(G)\). Combining these two bounds gives \(\kappa_e\bigl(\mathcal G\bigr)\le2\min\{\kappa_e(G),\varphi(G)\}\), as claimed.
\end{proof}
This is a first relation between cuts in the Gremban expansion and cuts and frustration sets in the underlying signed graphs. Taking into account symmetries, this turns into an even stronger relation:

\begin{theorem}\label{thm: cuts become cuts or frustration sets}
Let $G$ be a connected signed graph with Gremban expansion $\mathcal{G}$. If $V(\mathcal{G})=\mathcal{U}_1\cup\mathcal{U}_2$ is a Gremban-symmetric partition, then the cut-set $C(\mathcal{U}_1,\mathcal{U}_2)\subseteq E(\mathcal{G})$ is Gremban-symmetric, and
\begin{itemize}
    \item if $\eta(\mathcal{U}_1)=\mathcal{U}_2$ and $ \eta(\mathcal{U}_2)=\mathcal{U}_1$, then the projected cut-set $\pi(C(\mathcal{U}_1,\mathcal{U}_2))\subseteq E(G)$ is a frustration set;
    \item if $\eta(\mathcal{U}_1)=\mathcal{U}_1$ and $ \eta(\mathcal{U}_2)=\mathcal{U}_2$, then the projected cut-set $\pi(C(\mathcal{U}_1,\mathcal{U}_2))\subseteq E(G)$ is a cut-set;
\end{itemize}
Conversely, every cut-set and frustration set in $G$ lifts to a Gremban-symmetric cut-set in $\mathcal{G}$. This is a bijection.
\end{theorem}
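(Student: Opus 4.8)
The plan is to split the analysis according to the two ways $\eta$ can act on a two-block partition, construct an explicit correspondence in each case, and then glue the two halves into a single bijection. Before the case split I would record the Gremban-symmetry of the cut once and for all: since $\{\mathcal{U}_1,\mathcal{U}_2\}$ is Gremban-symmetric, $\eta$ permutes the two blocks, and because $\eta$ is a graph automorphism (Proposition~\ref{prop:automorphically_equiv}) it carries an edge with one endpoint in each block to another such edge. Hence $\eta(C(\mathcal{U}_1,\mathcal{U}_2))=C(\mathcal{U}_1,\mathcal{U}_2)$, which is the statement cited as Proposition~\ref{prop:GS_cut_set}. This holds uniformly in both cases, so I would dispatch it first.

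For the \emph{fixing} case $\eta(\mathcal{U}_i)=\mathcal{U}_i$, each block is itself Gremban-symmetric, so Proposition~\ref{prop:one_sided_projection} guarantees that $S:=\pi(\mathcal{U}_1)$ and $T:=\pi(\mathcal{U}_2)$ form a bipartition of $V(G)$, with both polarities $v^+,v^-$ of a node lying in the same block. An edge $(u^\alpha,v^\beta)$ of $\mathcal{G}$ therefore crosses the cut precisely when $u$ and $v$ sit in different blocks, i.e.\ when its projection $(u,v)$ lies in $C(S,T)$; conversely each edge of $C(S,T)$ lifts to two crossing edges. This yields $\pi(C(\mathcal{U}_1,\mathcal{U}_2))=C(S,T)$, a cut-set. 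For the \emph{swapping} case $\eta(\mathcal{U}_1)=\mathcal{U}_2$, the fact that $\eta$ both exchanges $v^+\leftrightarrow v^-$ and exchanges the blocks forces exactly one polarity of each node into $\mathcal{U}_1$, which defines a switching function $\theta$ by the rule $v^{\theta(v)}\in\mathcal{U}_1$. The key step is the sign bookkeeping: by Proposition~\ref{prop:projection_graph} the sign of a projected edge equals the product $\alpha\beta$ of its endpoint polarities, and the crossing condition ``exactly one of $u^\alpha,v^\beta$ lies in $\mathcal{U}_1$'' unwinds to $\alpha\theta(u)\cdot\beta\theta(v)=-1$, i.e.\ $\theta(u)\theta(v)\sigma(u,v)=-1$. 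This is exactly the defining condition of $F(\theta)$, so $\pi(C(\mathcal{U}_1,\mathcal{U}_2))=F(\theta)$; the already-established Gremban-symmetry of the cut ensures the two lifted copies of each edge are classified consistently, so no separate check of the second copy is needed.

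For the converse I would run these constructions backwards: a bipartition $\{S,T\}$ of $G$ lifts to the $\eta$-fixing partition $\{\pi^{-1}(S),\pi^{-1}(T)\}$, and a switching function $\theta$ lifts to the $\eta$-swapping partition $\{\{v^{\theta(v)}\},\{v^{-\theta(v)}\}\}$; by the two cases above these recover $C(S,T)$ and $F(\theta)$ respectively, and they are visibly inverse to the projection maps. Since the two types of Gremban-symmetric bipartition are disjoint (distinguished by whether $\eta$ fixes or swaps the blocks), $\eta$-fixing partitions biject with bipartitions of $G$ and $\eta$-swapping partitions with switching functions, and assembling the halves gives the claimed correspondence between Gremban-symmetric cut-sets in $\mathcal{G}$ and the cut-sets and frustration sets of $G$.

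The main obstacle, and the only place where connectivity of $G$ is genuinely used, is turning the frustration half into an honest bijection. Two issues must be handled: first, an \emph{unordered} swapping partition only pins down $\theta$ up to a global sign flip, but this is harmless because $F(\theta)=F(-\theta)$; second, and more substantively, I must argue that distinct swapping partitions yield distinct frustration sets. Here $F(\theta)=F(\theta')$ forces $\theta(u)\theta(v)=\theta'(u)\theta'(v)$ on every edge, so the product $\theta\theta'$ is constant along edges, and connectivity of $G$ then forces $\theta'=\pm\theta$, hence the same unordered partition. I would therefore state explicitly that connectivity is what upgrades the natural forward/backward constructions from a mere correspondence of projected edge sets to a genuine bijection, and I would be careful to phrase the bijection at the level of partitions (equivalently, cut-sets and frustration sets together with their defining partition/switching data) to avoid the incidental non-injectivity of the map ``bipartition $\mapsto$ underlying edge set'' on graphs whose cuts leave more than two components.
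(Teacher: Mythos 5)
Your argument is correct, and the forward direction (Gremban-symmetry of the cut, the fixing case giving a cut-set, the swapping case giving $F(\theta)$ via the sign bookkeeping $\sigma_\theta(uv)=\alpha\beta\,\theta(u)\theta(v)=-1$) coincides with the paper's proof, including the choice of switching function $v^{\theta(v)}\in\mathcal{U}_1$. Where you diverge is the converse for frustration sets: the paper takes a frustration set $F$, observes that $G\setminus F$ is balanced, invokes Theorem~\ref{thm:balanced_disconnected} to lift it to a disconnected Gremban-symmetric subgraph, and identifies the complement as the lifted cut-set; you instead lift the switching function $\theta$ directly to the swapping partition $\{\{v^{\theta(v)}\},\{v^{-\theta(v)}\}\}$ and check it inverts the forward construction. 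Your route is more elementary (it avoids the balance machinery) and has the advantage of making the inverse map explicit, which in turn lets you actually prove the bijectivity claim: the paper simply asserts ``this is a bijection,'' whereas you supply the needed injectivity argument that $F(\theta)=F(\theta')$ forces $\theta\theta'$ constant on edges and hence, by connectivity, $\theta'=\pm\theta$ --- this is the only place connectivity is used, and the paper never isolates it. Your final caution about phrasing the bijection at the level of partitions is reasonable but stricter than necessary: for a connected $G$ the quotient graph of $G-C(S,T)$ on its components is connected, so an unordered bipartition is already determined by its cut-set; the genuinely delicate point, which neither you nor the paper fully confronts, is that a single edge set of $G$ can simultaneously be a cut-set and a frustration set (e.g.\ any nontrivial cut in an all-positive graph), so the ``bijection'' is cleanest when read, as you suggest, between Gremban-symmetric partitions of $\mathcal{G}$ and the disjoint union of bipartitions and switching classes of $G$.
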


\begin{proof}
Gremban-symmetry of the cut-set follows immediately from Gremban-symmetry of the partition, as $\eta(C(\mathcal{U}_1,\mathcal{U}_2))=C(\eta(\mathcal{U}_1),\eta(\mathcal{U}_2))=C(\mathcal{U}_1,\mathcal{U}_2)$. We start with the case $\eta(\mathcal{U}_1)=\mathcal{U}_2$ and $\eta(\mathcal{U}_2)=\mathcal{U}_1$. We will show that the cut-set is projected to the negative edges in $G_\theta$, with respect to the switching function $\theta(\{v:v^-\in\mathcal{U}_1\})=-1$. For every edge $(u^\chi,v^{\psi})\in C(\mathcal{U}_1,\mathcal{U}_2)$, i.e., with $u^\chi\in \mathcal{U}_1$ (and thus $v^\psi\in\mc U_2\text{~and~}v^{-\psi}\in \mc U_1$), we have by construction
$$
\sigma(uv)=\chi\cdot\psi\quad\text{~and~}\quad\theta(u)=\chi\quad\text{~and~}\quad\theta(v)=-\psi \quad\text{~and thus~}\quad \sigma_\theta(uv)=-1.
$$
The same construction shows that edges within $\mathcal{U}_1$ or $\mathcal{U}_2$ are positive in $G_\theta$ and thus that $\pi(C(\mathcal{U}_1,\mathcal{U}_2))$ is indeed a frustration set. When $\eta(\mathcal{U}_1)=\mathcal{U}_1$ and $\eta(\mathcal{U}_2)=\mathcal{U}_2$, the partition of $\mathcal{G}$ projects to a partition $V(G)=\pi(\mathcal{U}_1)\cup\pi(\mathcal{U}_2)$ of $G$, and the cut-set in $\mathcal{G}$ projects to the cut-set in $G$.

For the converse and bijectivity, we first observe that every cut-set $C(U_1,U_2)$ in $G$ lifts to a Gremban-symmetric cut-set $C(\{x^\pm:x\in U_1\},\{x^\pm:x\in U_2\})$ in $\mathcal{G}$, which projects back to the original cut-set. Second, let $F\subseteq E(G)$ be a frustration set of $G$. Since the subgraph $H=G\setminus F$ is balanced, it lifts to a disconnected Gremban-symmetric subgraph $\mathcal{H}\subseteq \mathcal{G}$. As a result, the complement $\mathcal{F}=\mathcal{G}\setminus\mathcal{H}$ is a Gremban-symmetric cut-set; this is precisely the lift of $F$ and it projects back to $F$.
\end{proof}
As a first example application of Theorem \ref{thm: cuts become cuts or frustration sets}, we can turn inequality \eqref{eq:frustration_cut} in Proposition \ref{prop:frustration_cut} into an equality, by restricting to a notion of edge connectivity for Gremban-symmetric cut-sets:
\begin{proposition}\label{prop:frustration_cut_cymmetric}
    Let \(G\) be a connected signed graph with frustration number \(\varphi(G)\) and edge-connectivity \(\kappa_e(G)\) and Gremban expansion $\mathcal{G}$. Furthermore, define the symmetric edge connectivity of $\mathcal{G}$ as
    \[
    \kappa_e^{\mathrm{sym}}\bigl(\mc G\bigr)
    \ =\ 
    \min\Bigl\{\lvert \mathcal{C}\rvert : \mathcal{C}\subseteq E\bigl(\mc G\bigr)\text{ is a Gremban-symmetric cut-set}\Bigr\}.
    \]
    Then
    \[
    \kappa_e^{\mathrm{sym}}\bigl(\mc G\bigr)
    =2\cdot \min\bigl\{\kappa_e(G),\ \varphi(G)\bigr\}.
    \]
\end{proposition}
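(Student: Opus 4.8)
The plan is to combine the bijection established in Theorem~\ref{thm: cuts become cuts or frustration sets} with an elementary counting argument showing that the projection map $\pi$ is \emph{exactly two-to-one} on every Gremban-symmetric cut-set. Once both ingredients are in place, the equality follows by simply rewriting the minimisation that defines $\kappa_e^{\mathrm{sym}}(\mathcal{G})$ as a minimisation over $G$.

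First I would establish the size relation: for any Gremban-symmetric cut-set $\mathcal{C}\subseteq E(\mathcal{G})$ one has $|\mathcal{C}|=2\,|\pi(\mathcal{C})|$. The key observations are that $\eta$ has no fixed edges---an edge $(u^\chi,v^\psi)$ with $u\neq v$ (no self-loops) is sent to $(u^{-\chi},v^{-\psi})$, which cannot equal it---and that, by Proposition~\ref{prop:projection_graph}, the fiber of $\pi$ over an edge $(u,v)\in E(G)$ consists of exactly the two edges swapped by $\eta$, namely $\pi^{-1}(u,v)=\{(u^+,v^{\sigma(uv)}),(u^-,v^{-\sigma(uv)})\}$. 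Since $\mathcal{C}$ is $\eta$-invariant, whenever it contains one edge of a fiber it contains the other; hence each fiber is either entirely inside or entirely outside $\mathcal{C}$, so $\pi|_{\mathcal{C}}$ is two-to-one onto $\pi(\mathcal{C})$ and the size doubles.

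Second I would invoke Theorem~\ref{thm: cuts become cuts or frustration sets}, which gives a bijection, realised by $\pi$ and its lift, between the Gremban-symmetric cut-sets of $\mathcal{G}$ and the collection of all cut-sets and frustration sets of $G$. Feeding the size relation into this bijection, the computation would read
\[
\kappa_e^{\mathrm{sym}}(\mathcal{G}) \;=\; \min_{\mathcal{C}} |\mathcal{C}| \;=\; 2\min_{\mathcal{C}} |\pi(\mathcal{C})| \;=\; 2\min\{\kappa_e(G),\varphi(G)\},
\]
where in the last step the minimum over $\pi(\mathcal{C})$ splits according to the two cases of Theorem~\ref{thm: cuts become cuts or frustration sets}: when the defining partition satisfies $\eta(\mathcal{U}_1)=\mathcal{U}_1$ the projection is a cut-set (contributing $\kappa_e(G)$), and when $\eta(\mathcal{U}_1)=\mathcal{U}_2$ it is a frustration set (contributing $\varphi(G)$). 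Concretely, the lift of an optimal cut-set or frustration set yields the upper bound, and the projection of an arbitrary symmetric cut-set yields the matching lower bound.

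I expect the only genuinely delicate point to be the two-to-one claim, and in particular the verification that Gremban-symmetry of $\mathcal{C}$ forces whole fibers of $\pi$ to lie in or out of $\mathcal{C}$; this is exactly where the fixed-point-freeness of $\eta$ on edges and the two-element description of the fibers are used. Everything else is bookkeeping, since the structural content---that projected symmetric cut-sets are precisely the cut-sets and frustration sets of $G$, and that this correspondence is bijective---is already carried by Theorem~\ref{thm: cuts become cuts or frustration sets}. As a consistency check, the resulting inequality $\kappa_e^{\mathrm{sym}}(\mathcal{G})\le 2\min\{\kappa_e(G),\varphi(G)\}$ refines Proposition~\ref{prop:frustration_cut} by promoting its bound to an equality once attention is restricted to symmetric cuts.
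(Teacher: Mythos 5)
Your proposal is correct and follows essentially the same route as the paper: the paper's own proof is a one-line appeal to the bijection of Theorem~\ref{thm: cuts become cuts or frustration sets} together with the observation that each edge of a cut-set or frustration set in $G$ lifts to two copies in $\mathcal{G}$. Your more careful justification of the two-to-one claim (fixed-point-freeness of $\eta$ on edges plus $\eta$-invariance forcing whole fibers in or out of $\mathcal{C}$) is a welcome elaboration of exactly that step, not a different argument.
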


\begin{proof}
This follows from the bijection between Gremban-symmetric cut-sets of $\mathcal{G}$ and frustration and cut-sets in $G$ in Theorem \ref{thm: cuts become cuts or frustration sets}, and the fact that every edge in the latter lifts to two copies in the former.
\end{proof}
%
\section{Algebraic formulation of the Gremban expansion}
\label{sec:gremban_algebraic}

So far, we have treated the Gremban expansion from a combinatorial point of view. In this section we reformulate the Gremban expansion in matrix terms. This algebraic perspective provides a compact representation of the graphs and their relations, and also incorporates the structural symmetries at play in a natural manner. 

We start by fixing some notation: we will work with vectors in and matrices acting on both $\R^n$ and $\R^{2n}$. Similar to the graph setting, to each basis vector $\mathbf{e}_{v}$ of $\R^n$ with $v\in\{1,\dots,n\}$, i.e., to each $v$th coordinate of a vector, we associate two basis vectors $\mathbf{e}_{v^+},\mathbf{e}_{v^-}$ of $\R^{2n}$. We write $(\mathbf{x}^+,\mathbf{x}^-)^\top\in\R^{2n}$ where $\mathbf{x}^+\in\R^n$ denotes the vector supported on the positive coordinates, and similar for $\mathbf{x}^-$. All matrices $M$ will be symmetric and thus have real eigenvalues, which are recorded as a multiset in the spectrum $\spec(M)$. Two matrices are similar if they differ by conjugation with a unitary matrix as $M=UNU^{-1}$; we write $M\sim N$.
%
\subsection{Gremban expansion of a matrix: definition and symmetries}
Let $M\in\R^{n\times n}$ be a real symmetric matrix with a decomposition into two symmetric matrices $M^+, M^-$ as follows

\begin{equation}\label{eq: matrix decomposition}
M=M^+-M^- \quad\text{~and~}\quad \bar{M} := M^+ + M^-.
\end{equation}
For the purpose of this article, we will think of $M^+$ as the submatrix containing all nonnegative entries of $M$, and $M^-$ containing all nonpositive entries, but for many results this is not necessary. Equation \eqref{eq: matrix decomposition} furthermore introduces the matrix $\bar{M}$, which is the `unsigned' version of $M$. We define the Gremban expansion of a matrix $M$ with an additive decomposition into two matrices:

\begin{definition}[Gremban matrix expansion] The \emph{Gremban expansion} of a real symmetric $n\times n$ matrix $M=M^+-M^-$ is the $2n\times 2n$ matrix $\mathcal{M}=\mathcal{M}(M)$ with block decomposition
\begin{align}  \label{eq:gremban_expansion}
    \mathcal{M} := \begin{pmatrix}
M^+ & M^-\\M^-& M^+
\end{pmatrix}.
\end{align}    
\end{definition}
The Gremban expansion of a matrix is an injective, non-surjective, nonlinear map (Proposition \ref{prop:injective_surjective_nonlinear} in Appendix \ref{app:extra_theorems}). Furthermore, similar to the Gremban expansion of graphs, the Gremban matrix expansion satisfies an involutive symmetry. Let $\1$ denote the identity matrix, then we define the $2n\times2n$ Gremban involution matrix $\mathcal{N}$ as
$$
\mathcal{N} := \begin{pmatrix}
    0&\1\\\1&0
\end{pmatrix}.
$$
As an operation on $\R^{2n}$, this matrix interchanges the positive and negative coordinates, as $\mathcal{N}(\mathbf{x}^+,\mathbf{x}^-)^\top=(\mathbf{x}^-,\mathbf{x}^+)^\top$, similar to the Gremban involution on $V(\mathcal{G})$. A $2n\times 2n$ matrix $\mathcal{M}$ is called \emph{Gremban-symmetric} if it is invariant under conjugation with the Gremban involution, as $\mathcal{M}=\mathcal{N}\mathcal{M}\mathcal{N}^{-1}$. We give a characterization:

\begin{proposition}\label{prop: characterization Gremban-symmetric matrix}
A symmetric $2n\times 2n$ matrix $\mathcal{M}$ is Gremban-symmetric if and only if it has block-form
$$
\mathcal{M} = \begin{pmatrix}
    A&B\\B&A
\end{pmatrix} \text{~for some symmetric matrices $A,B\in\R^{n\times n}$}.
$$
\end{proposition}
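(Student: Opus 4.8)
The plan is to reduce everything to a direct computation with $n\times n$ blocks. I would start by writing $\mathcal{M}$ in a generic block form $\mathcal{M} = \begin{pmatrix} P & Q \\ R & S\end{pmatrix}$ with $P,Q,R,S \in \R^{n\times n}$, and record the elementary fact that the Gremban involution matrix is its own inverse, $\mathcal{N}^2 = \1$, so that the conjugation $\mathcal{N}\mathcal{M}\mathcal{N}^{-1}$ coincides with $\mathcal{N}\mathcal{M}\mathcal{N}$.

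Next I would carry out the two block multiplications explicitly. Left-multiplying by $\mathcal{N}$ swaps the two block-rows of $\mathcal{M}$ and right-multiplying by $\mathcal{N}$ swaps the two block-columns, so altogether $\mathcal{N}\mathcal{M}\mathcal{N} = \begin{pmatrix} S & R \\ Q & P\end{pmatrix}$. The Gremban-symmetry condition $\mathcal{M} = \mathcal{N}\mathcal{M}\mathcal{N}$ is therefore equivalent, block by block, to the two identities $P = S$ and $Q = R$. Setting $A := P = S$ and $B := Q = R$ yields exactly the claimed form $\mathcal{M} = \begin{pmatrix} A & B \\ B & A\end{pmatrix}$.

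It then remains to fold in the standing hypothesis that $\mathcal{M}$ is symmetric. For a matrix already in the block form above one has $\mathcal{M}^\top = \begin{pmatrix} A^\top & B^\top \\ B^\top & A^\top\end{pmatrix}$, so $\mathcal{M} = \mathcal{M}^\top$ holds if and only if $A = A^\top$ and $B = B^\top$; this is what pins down the symmetry of the individual blocks $A$ and $B$. Chaining the two steps establishes the forward direction. For the converse I would simply note that any $\mathcal{M} = \begin{pmatrix} A & B \\ B & A\end{pmatrix}$ with $A,B$ symmetric is manifestly symmetric and, running the same block computation in reverse, satisfies $\mathcal{N}\mathcal{M}\mathcal{N} = \mathcal{M}$, hence is Gremban-symmetric.

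There is no substantial obstacle here: the statement is essentially a bookkeeping exercise in $2\times 2$ block algebra. The only point that warrants care is keeping the two hypotheses logically separate. Gremban-symmetry is precisely what forces the diagonal blocks to agree and the off-diagonal blocks to agree ($P=S$ and $Q=R$), whereas the overall matrix symmetry is an independent condition that forces each of $A$ and $B$ to be individually symmetric. Since neither condition implies the other, the proposition's ``if and only if'' is really the conjunction of these two separate block identities, and the proof amounts to disentangling them.
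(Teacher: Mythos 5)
Your proof is correct and follows essentially the same route as the paper: a direct block computation showing that conjugation by $\mathcal{N}$ sends $\left(\begin{smallmatrix}P&Q\\R&S\end{smallmatrix}\right)$ to $\left(\begin{smallmatrix}S&R\\Q&P\end{smallmatrix}\right)$, so invariance forces $P=S$ and $Q=R$. Your explicit separation of the Gremban-symmetry condition from the overall matrix symmetry (which pins down $A=A^\top$, $B=B^\top$) is a small extra detail the paper leaves implicit, but the argument is the same.
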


\begin{proof}
The claim follows directly by noting that the conjugation of $\left(\begin{smallmatrix}A&B\\C&D\end{smallmatrix}\right)$ by $\mathcal{N}$ equals $\left(\begin{smallmatrix}D&C\\B&A\end{smallmatrix}\right)$, hence it is invariant if and only if $A=D$ and $B=C$.
\end{proof}
Proposition \ref{prop: characterization Gremban-symmetric matrix} shows that the data $M^+,M^-$ and thus $\bar{M},M$ can be retrieved from a Gremban-symmetric matrix by looking at its matrix blocks. To recover the matrices $\bar{M}$ and $M$ algebraically from $\mathcal{M}$, we will make use of the following two projection operators:
\begin{equation}\label{eq: projection matrices}
    \Pi_s := \frac{1}{\sqrt{2}}\begin{pmatrix}\1 & \1\end{pmatrix} \in\R^{n\times 2n}\quad\text{~and~}\quad \Pi_a := \frac{1}{\sqrt{2}}\begin{pmatrix}\1 & -\1\end{pmatrix} \in\R^{n\times 2n}.
\end{equation}
The rows of these matrices together span $\R^{2n}$. Acting on vectors in $\R^{2n}$, the matrices $\Pi_s$ and $\Pi_a$ decompose the entries ${x}_v^+,{x}_v^-$ corresponding to the two signed entries of a coordinate $v$ into a symmetric part $({x}_v^++{x}_v^-)/\sqrt{2}$ and an antisymmetric part $({x}_v^+-{x}_v^-)/\sqrt{2}$, respectively. The square root appears because we are mainly interested in the conjugate action of $\Pi_s,\Pi_a$, which recovers the matrices $M$ and $\bar{M}$.
\begin{proposition}\label{prop: algebraic projections}
Let $M=M^+-M^-$ be a matrix and $\mathcal{M}$ its Gremban expansion. Then we find
$$
\Pi_s\mathcal{M}(M)\Pi_s^\top = \bar{M} \quad\text{~and~}\quad \Pi_a\mathcal{M}(M)\Pi_a^\top = M.
$$
\end{proposition}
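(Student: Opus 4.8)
The plan is to prove both identities by a direct block-matrix computation, treating $\Pi_s$ and $\Pi_a$ as $n\times 2n$ block row matrices and $\mathcal{M}$ in the block form \eqref{eq:gremban_expansion}. The essential observation is that the two projection matrices act by taking sums and differences of the two $n\times n$ diagonal blocks, which is exactly the structure that recovers $\bar M = M^+ + M^-$ and $M = M^+ - M^-$.

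Concretely, I would first compute the left product $\Pi_s\mathcal{M}$. Multiplying the block row $\tfrac{1}{\sqrt 2}(\1\ \ \1)$ into $\mathcal{M}$ collapses the two block rows of $\mathcal{M}$ by addition, giving $\tfrac{1}{\sqrt 2}(M^+ + M^-\ \ M^- + M^+) = \tfrac{1}{\sqrt 2}(\bar M\ \ \bar M)$. Right-multiplying by $\Pi_s^\top = \tfrac{1}{\sqrt 2}(\1\ \ \1)^\top$ then adds these two identical blocks, so the product equals $\tfrac12(\bar M + \bar M) = \bar M$, where the two factors of $\tfrac{1}{\sqrt 2}$ combine to the $\tfrac12$ that cancels the doubling. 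The antisymmetric case is identical in structure: $\Pi_a\mathcal{M} = \tfrac{1}{\sqrt 2}(M^+ - M^-\ \ M^- - M^+) = \tfrac{1}{\sqrt 2}(M\ \ -M)$, and right-multiplication by $\Pi_a^\top = \tfrac{1}{\sqrt 2}(\1\ \ -\1)^\top$ produces $\tfrac12(M + (-M)\cdot(-\1)) = \tfrac12(M + M) = M$. The only bookkeeping to watch is the sign pattern in $\Pi_a^\top$, which ensures the two contributions reinforce rather than cancel.

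There is no real obstacle here: the statement is a routine verification once the block products are set up correctly, and the sole point requiring attention is confirming that the normalization constants $\tfrac{1}{\sqrt 2}$ multiply to $\tfrac12$ so as to exactly compensate for the two equal blocks being summed. One may optionally remark, as added insight, that $\Pi_s$ and $\Pi_a$ restrict to the orthonormal bases of the symmetric and antisymmetric eigenspaces of the Gremban involution matrix $\mathcal{N}$ (with eigenvalues $+1$ and $-1$ respectively), so the proposition is the matrix-level statement that $\mathcal{M}$ block-diagonalizes into $\bar M \oplus M$ under the change of basis adapted to $\mathcal{N}$; this explains why these particular projections recover precisely the unsigned and signed parts of $M$.
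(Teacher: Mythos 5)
Your computation is correct and is exactly the routine block-matrix verification the paper leaves implicit (the proposition is stated without proof in the source). Your closing remark connecting $\Pi_s,\Pi_a$ to the eigenspaces of $\mathcal{N}$ and to the block-diagonalization $\mathcal{U}\mathcal{M}\mathcal{U}^{-1}=\bar M\oplus M$ is consistent with Theorem~\ref{thm: spectral properties of Gremban symmetric matrix} and adds the right conceptual context.
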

%
\subsection{Spectral properties}
An important consequence of the construction of the Gremban expansion is that it produces a similar matrix to $\bar{M}\oplus M=\left(\begin{smallmatrix}
\bar{M}&0\\0&M
\end{smallmatrix}\right)$; recall that two matrices are similar if they differ by conjugation with a unitary matrix and that, as a consequence, these matrices share the same eigenvalues. Similarity of the Gremban expansion is most clearly observed when changing basis via the unitary matrix
\begin{equation}\label{eq:change_of_basis}
\mathcal{U} := \frac{1}{\sqrt{2}}\begin{pmatrix}
\1 & \1\\\1&-\1
\end{pmatrix} = \begin{pmatrix}
\Pi_s\\\Pi_a
\end{pmatrix}.
\end{equation}
We find the following similarity relation and spectral consequences for the Gremban expansion of a matrix:
\begin{theorem}\label{thm: spectral properties of Gremban symmetric matrix}
Let $\mathcal{M}$ be a Gremban-symmetric matrix with block-decomposition $\mathcal{M}=\left(\begin{smallmatrix}
M^+&M^-\\M^-&M^+
\end{smallmatrix}\right)$ and let $M=M^+-M^-$ and $\bar{M}=M^++M^-$. Then $\mathcal{M}$ is similar to $\bar{M}\oplus M$ and $\rm{Spec}(\mathcal{M})=\rm{Spec}(\bar{M})\cup\rm{Spec}(M)$ as multisets. Furthermore, every eigenpair $(\lambda,\mathbf{x})$ of $M$ lifts to an antisymmetric eigenpair $(\lambda, (\mathbf{x},-\mathbf{x})^\top)$ of $\mathcal{M}$, and every eigenpair $(\mu,\mathbf{y})$ of $\bar{M}$ lifts to a symmetric eigenpair $(\mu,(\mathbf{y},\mathbf{y})^\top)$ of $\mathcal{M}$. 
\end{theorem}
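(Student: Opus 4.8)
The plan is to use the orthogonal change-of-basis matrix $\mathcal{U}$ from \eqref{eq:change_of_basis}, which is tailored to the symmetric/antisymmetric splitting of $\R^{2n}$. First I would record that $\mathcal{U}$ is real, symmetric, and squares to the identity: a one-line block computation gives $\mathcal{U}^2=\1$, so $\mathcal{U}^{-1}=\mathcal{U}^\top=\mathcal{U}$, and $\mathcal{U}$ is the orthogonal matrix witnessing the asserted similarity.

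The central step is the block computation of $\mathcal{U}\mathcal{M}\mathcal{U}^{-1}$. Forming $\mathcal{U}\mathcal{M}$ first replaces the two block rows of $\mathcal{M}$ by their sum and difference, giving $\tfrac{1}{\sqrt2}\left(\begin{smallmatrix}\bar M & \bar M\\ M & -M\end{smallmatrix}\right)$ via the identities $M^++M^-=\bar M$ and $M^+-M^-=M$; multiplying on the right by $\mathcal{U}$ then takes sums and differences of the block columns, the off-diagonal contributions cancel, and the factors of $\tfrac{1}{\sqrt2}$ combine to leave $\mathcal{U}\mathcal{M}\mathcal{U}^{-1}=\bar M\oplus M$. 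This proves $\mathcal{M}\sim\bar M\oplus M$. The multiset identity $\spec(\mathcal{M})=\spec(\bar M)\cup\spec(M)$ is then immediate: similar matrices share the same characteristic polynomial, and for a block-diagonal matrix this factors as $\det(t\1-(\bar M\oplus M))=\det(t\1-\bar M)\det(t\1-M)$, so the eigenvalue multiplicities add. By Proposition \ref{prop: characterization Gremban-symmetric matrix}, $M^+$ and $M^-$ are symmetric, hence $M$ and $\bar M$ are symmetric and their spectra are real.

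For the eigenvector correspondence I would verify the two lifts directly, which is cleaner than transporting eigenvectors through the conjugation. If $M\x=\lambda\x$, substituting $(\x,-\x)^\top$ into the block action of $\mathcal{M}$ gives top block $M^+\x-M^-\x=M\x=\lambda\x$ and bottom block $M^-\x-M^+\x=-\lambda\x$, so $(\x,-\x)^\top$ is an antisymmetric eigenvector with eigenvalue $\lambda$. Symmetrically, if $\bar M\y=\mu\y$ then both blocks of $\mathcal{M}(\y,\y)^\top$ equal $(M^++M^-)\y=\mu\y$, giving a symmetric eigenvector with eigenvalue $\mu$. Equivalently, these are the images under $\mathcal{U}$ of the standard eigenvectors $(\0,\x)^\top$ and $(\y,\0)^\top$ of $\bar M\oplus M$.

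I do not expect a genuine obstacle here: every step reduces to $2\times2$ block arithmetic, and the conceptual work is entirely front-loaded into the choice of $\mathcal{U}$. The only point needing mild care is the multiplicity bookkeeping behind the multiset equality, which the factorization of the characteristic polynomial settles automatically.
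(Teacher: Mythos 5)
Your proposal is correct and follows essentially the same route as the paper: conjugation by the orthogonal involution $\mathcal{U}$ to obtain $\bar M\oplus M$, with the eigenvector lifts then checked explicitly (you verify them by direct block multiplication of $\mathcal{M}$, the paper transports them through the change of basis, but these are the same computation). The only addition worth keeping is your explicit remark on the characteristic-polynomial factorization, which makes the multiset bookkeeping for $\spec(\mathcal{M})=\spec(\bar M)\cup\spec(M)$ fully transparent.
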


\begin{proof}
Direct computation shows that $\mathcal{U}\mathcal{M}\mathcal{U}^{-1}= \bar{M}\oplus M$. This confirms that the Gremban matrix $\mathcal{M}$ is similar to $\bar{M}\oplus M$ and thus, as a property of matrix similarity and block-diagonal matrices, that the spectrum of $\mc M$ is the union of the spectra of $\bar{M}$ and $M$. It remains to determine the eigenvectors of $\mathcal{M}$.

Following the eigenequation $M\mathbf{x}=\lambda\mathbf{x}$, antisymmetric lifts of eigenvectors of $M$ are eigenvectors of $\mathcal{M}$:
\[
\mc M\begin{pmatrix}\x\\-\x\end{pmatrix} = 
\begin{pmatrix}
    \1 & \1 \\ \1 & -\1 
\end{pmatrix} \begin{pmatrix}\bar M & 0\\0 & M\end{pmatrix} \begin{pmatrix}\0\\ \x\end{pmatrix} = 
\lambda \begin{pmatrix}
    \1 & \1 \\ \1 & -\1 
\end{pmatrix} \begin{pmatrix} \0\\ \x\end{pmatrix} = \lambda  \begin{pmatrix}\x\\ -\x\end{pmatrix}.
\]
A similar derivation shows that an eigenpair $(\mu,\mathbf{y})$ of $\bar{M}$ leads to an eigenpair $(\mu,(\mathbf{y},\mathbf{y})^\top)$ of $\mathcal{M}$, with symmetric eigenvector lift.

Hence every eigenvector of $\bar{M}$ or $M$ lifts to an eigenvector of $\mc M$. Finally, note that the symmetric and antisymmetric lifts of a pair of orthogonal bases of $\R^n$ combine to an orthogonal basis of $\R^{2n}$, and thus that the lifted eigenbases of $\bar{M},M$ combine to an eigenbasis of $\mathcal{M}$.
\end{proof}
%

\subsection{Adjacency and Laplacian matrix of the Gremban expansion}
As suggested by the names, the Gremban expansions of a graph and of a matrix are closely related. Indeed, for many natural matrix representations of a graph (see below), first passing to the Gremban expansion of the graph and then constructing the matrix representation gives the same result as first constructing the matrix representation of the graph and then passing to the Gremban expansion of the matrix. We discuss this in detail below for the adjacency and Laplacian matrices of a graph, but the same is true for other more specialized operators such as the normalized Laplacian matrix.
%
\subsubsection{Adjacency matrix} Let $G=(V,E,\sigma)$ be a signed graph. Its \emph{signed adjacency matrix} is the $n\times n$ symmetric matrix with entries $A_{uv}=A_{vu}=\sigma(u,v)$ for all $uv\in E(G)$ and with zeroes otherwise; this matrix is very well-studied both in graph theory and its applications. Writing $A^+=\max(A,0)$ for the adjacency matrix of the subgraph with only the positive edges, and $A^-=\max(-A,0)$ for {minus} the adjacency matrix of the subgraph with only negative edges, we can write the decomposition
$$
A = A^+-A^- \quad\text{~and~}\quad \bar{A} := A^++A^-,
$$
where $\bar{A}$ is called the \emph{unsigned adjacency matrix}. One directly confirms that the Gremban matrix expansion of the adjacency matrix $A$, with this decomposition, equals the adjacency matrix of the Gremban expansion $\mathcal{G}$. We call this the \emph{Gremban adjacency matrix} and write $\mathcal{A}$. Since $\mathcal{A}$ is Gremban-symmetric, Proposition \ref{prop: algebraic projections} and Theorem \ref{thm: spectral properties of Gremban symmetric matrix} apply. We mention some immediate corollaries (see also \cite{fox2017numerical}):
\begin{corollary}   
    The signed, unsigned and Gremban adjacency matrices $A,\bar{A}$ and $\mathcal{A}$ of a graph satisfy:
    $$
    \Pi_s\mathcal{A}\Pi_s^\top = \bar{A} \quad\text{~and~}\quad \Pi_a\mathcal{A}\Pi_a^\top = A.
    $$
\end{corollary}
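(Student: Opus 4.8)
The plan is to reduce the corollary to Proposition \ref{prop: algebraic projections} by identifying the Gremban adjacency matrix $\mathcal{A}$ with the Gremban matrix expansion $\mathcal{M}(A)$ of the signed adjacency matrix $A$. Once this identification is in hand, the two claimed identities are immediate: setting $M = A$ in Proposition \ref{prop: algebraic projections} (so that $M^+ = A^+$, $M^- = A^-$, and $\bar{M} = \bar{A}$) gives exactly $\Pi_s\mathcal{A}\Pi_s^\top = \bar{A}$ and $\Pi_a\mathcal{A}\Pi_a^\top = A$.

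The only step that genuinely needs checking is therefore the matching $\mathcal{A} = \mathcal{M}(A)$. First I would order the node set of $\mathcal{G}$ as $(V^+, V^-)$, so that $\mathcal{A}$ acquires a natural $2\times 2$ block structure with blocks indexed by the polarity pairs $(++), (+-), (-+), (--)$. Then I would read off each block directly from Definition \ref{def:gremban_expansion}: a positive edge $(u,v)$ of $G$ lifts to the two same-polarity edges $(u^+,v^+)$ and $(u^-,v^-)$, contributing to the two diagonal blocks, whereas a negative edge lifts to the two opposite-polarity edges $(u^+,v^-)$ and $(u^-,v^+)$, contributing to the two off-diagonal blocks. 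Comparing with the definitions $A^+ = \max(A,0)$ and $A^- = \max(-A,0)$, this shows that the $(++)$ and $(--)$ blocks both equal $A^+$ while the $(+-)$ and $(-+)$ blocks both equal $A^-$, yielding precisely the block form $\mathcal{M}(A) = \left(\begin{smallmatrix} A^+ & A^- \\ A^- & A^+ \end{smallmatrix}\right)$ from \eqref{eq:gremban_expansion}.

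There is no real obstacle here, since the entire content of the corollary is the block-matching identification $\mathcal{A} = \mathcal{M}(A)$, which was already asserted in the surrounding text, after which Proposition \ref{prop: algebraic projections} applies verbatim. The only point demanding care is the bookkeeping of the polarity ordering, so that the four edge-lifting rules of the Gremban expansion land in the correct blocks; once that convention is fixed, the remaining verification is purely mechanical.
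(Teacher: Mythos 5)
Your argument is correct and matches the paper's route exactly: the paper treats this corollary as an immediate consequence of the identification $\mathcal{A}=\mathcal{M}(A)$ (asserted in the text just before the corollary, and which you verify by the same block-by-block edge-lifting check) together with Proposition \ref{prop: algebraic projections} applied with $M=A$. No gaps.
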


\begin{corollary}
    Let $G$ be a balanced graph with adjacency matrix $A$ and Gremban adjacency matrix $\mathcal{A}$. Then $\mc A$ has the same eigenvalues as $A$, with double multiplicity: $\spec(\mc A) = \spec( A)\cup \spec(A)$.
\end{corollary}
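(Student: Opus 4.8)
The plan is to combine the two earlier structural results that directly apply here: Theorem~\ref{thm:balanced_disconnected}, which characterizes balance via disconnectedness of the Gremban expansion, and Theorem~\ref{thm: spectral properties of Gremban symmetric matrix}, which gives the spectral decomposition of any Gremban-symmetric matrix. First I would invoke the fact, established in the preceding subsection, that the Gremban adjacency matrix $\mathcal{A}$ is precisely the Gremban matrix expansion of $A$ with the decomposition $A = A^+ - A^-$, so that $\mathcal{A}$ is Gremban-symmetric with block $M^+ = A^+$ and $M^- = A^-$. Theorem~\ref{thm: spectral properties of Gremban symmetric matrix} then gives $\spec(\mathcal{A}) = \spec(\bar{A}) \cup \spec(A)$ as multisets, where $\bar{A} = A^+ + A^-$ is the unsigned adjacency matrix.

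The crux of the argument is therefore to show that when $G$ is balanced, the signed adjacency matrix $A$ and the unsigned adjacency matrix $\bar{A}$ have identical spectra. The natural way to see this is through switching: since $G$ is balanced, there is a switching function $\theta$ assigning $-1$ to one balanced faction and $+1$ to the other, so that $G_\theta$ is all-positive, i.e. $\sigma_\theta \equiv +1$. At the matrix level, switching acts by conjugation with the diagonal sign matrix $D_\theta = \diag(\theta(v_1), \dots, \theta(v_n))$, since $(D_\theta A D_\theta)_{uv} = \theta(u)\theta(v)\sigma(uv) = \sigma_\theta(uv)$. Because $G_\theta$ is all-positive, $D_\theta A D_\theta = \bar{A}$ (the all-positive adjacency matrix coincides with the unsigned adjacency matrix, as switching changes signs but not the support of the edge set). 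Since $D_\theta$ is orthogonal ($D_\theta^2 = \1$), this is a similarity, so $\spec(A) = \spec(\bar{A})$.

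Combining these two facts yields $\spec(\mathcal{A}) = \spec(\bar{A}) \cup \spec(A) = \spec(A) \cup \spec(A)$, which is exactly the claimed doubling of multiplicities. I would likely note that this is the spectral shadow of Theorem~\ref{thm:balanced_disconnected}: balance forces $\mathcal{G}$ to split into two disconnected isomorphic copies of $G_\theta$, and the adjacency matrix of a disjoint union of two isomorphic graphs trivially has each eigenvalue doubled, giving an alternative purely combinatorial route to the same conclusion.

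The main obstacle, though a mild one, is establishing cleanly that $\spec(A) = \spec(\bar{A})$ for balanced $G$; everything else is a direct citation of earlier results. The key is recognizing switching as orthogonal conjugation by a diagonal $\pm 1$ matrix, which preserves the spectrum, together with the observation that an all-positive signed adjacency matrix is literally equal to the unsigned adjacency matrix of the same underlying graph. Once that identification is made explicit, the corollary follows immediately from the two cited theorems.
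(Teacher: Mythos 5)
Your proposal is correct and follows essentially the same route as the paper's proof: both establish $\spec(\mathcal{A})=\spec(\bar A)\cup\spec(A)$ from the spectral decomposition theorem for Gremban-symmetric matrices, and then show $\bar A = D_\theta A D_\theta^{-1}$ via the balancing switching function so that $A\sim\bar A$. Your additional remark about the disconnected double cover giving a purely combinatorial alternative is a nice observation but not needed.
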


\begin{proof}
    Since the graph is balanced, the adjacency matrix $A$ is similar to the unsigned adjacency matrix $\bar A$ via switching \cite{diaz2025mathematical,zaslavsky2013matrices}: $\bar A=D_\theta A D_\theta^{-1}$, where $D_\theta=\diag(\theta(1),\dots,\theta(n))$ is the diagonal matrix determined by the switching function that balances $G$. Since $D_\theta$ is unitary, we have $\bar{A}\sim A$ and $\mathcal{A}\sim A\oplus A$, as required.
\end{proof}

\begin{corollary} 
Let $G$ be a signed graph with adjacency matrix $A$ and unsigned adjacency matrix $\bar A$, and Gremban adjacency matrix \(\mc A \). Let \(\chi_M(\lambda) = \det(\lambda \1 - M)\) denote the characteristic polynomial of a matrix \(M\), \(\rho(M)\) its spectral radius, and \(\|\cdot\|_2\) the operator norm. Then:
\begin{align*}
\chi_{\mc A}(\lambda) = \chi_{\bar A}(\lambda) \chi_{A}(\lambda),\quad
\det(\mc A) = \det(\bar A)\det(A),\quad
\rho(\mc A) = \max\{\rho(\bar A), \rho(A)\}, \quad
\|\mc A\|_2 = \max\{\|\bar A\|_2, \|A\|_2\}.
\end{align*}
\end{corollary}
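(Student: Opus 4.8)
The plan is to derive all four identities as immediate consequences of Theorem~\ref{thm: spectral properties of Gremban symmetric matrix}. Since $\mc A$ is Gremban-symmetric with blocks $A^+,A^-$ satisfying $A=A^+-A^-$ and $\bar A=A^++A^-$, that theorem supplies the similarity $\mc A\sim\bar A\oplus A$ together with the multiset identity $\spec(\mc A)=\spec(\bar A)\cup\spec(A)$. Each of the four claims then reduces to reading off a spectral invariant from this block-diagonalization, so the proof is essentially a bookkeeping exercise with no substantive new idea required.

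First, for the characteristic polynomial I would use that similar matrices share the same characteristic polynomial, so $\chi_{\mc A}=\chi_{\bar A\oplus A}$, together with the standard factorization of the characteristic polynomial of a block-diagonal matrix as the product over its diagonal blocks; this yields $\chi_{\mc A}(\lambda)=\chi_{\bar A}(\lambda)\chi_{A}(\lambda)$. The determinant identity then follows by evaluating at $\lambda=0$: since $\chi_M(0)=\det(-M)=(-1)^{\dim M}\det(M)$, and $\dim\mc A=2n$ while $\dim\bar A=\dim A=n$, the sign factors $(-1)^{2n}$ on the left and $(-1)^n(-1)^n$ on the right coincide, giving $\det(\mc A)=\det(\bar A)\det(A)$. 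Equivalently, one may invoke multiplicativity of the determinant over the multiset of eigenvalues.

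Next, the spectral radius is $\rho(M)=\max_{\lambda\in\spec(M)}\lvert\lambda\rvert$, so the set identity underlying $\spec(\mc A)=\spec(\bar A)\cup\spec(A)$ immediately gives $\rho(\mc A)=\max\{\rho(\bar A),\rho(A)\}$. Finally, for the operator norm I would invoke the fact that for a real symmetric matrix the $2$-norm coincides with the spectral radius, $\lVert M\rVert_2=\rho(M)$; this applies to $A$, $\bar A$ and $\mc A$, all of which are symmetric, so $\lVert\mc A\rVert_2=\rho(\mc A)=\max\{\rho(\bar A),\rho(A)\}=\max\{\lVert\bar A\rVert_2,\lVert A\rVert_2\}$.

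There is no genuine obstacle here, as everything is inherited from the block-diagonal form. The only points requiring a moment's care are the sign bookkeeping in passing from $\chi$ to $\det$, and the use of symmetry in identifying the operator norm with the spectral radius; this last step is precisely what licenses the fourth identity and would fail for a non-symmetric matrix, so it is worth stating explicitly.
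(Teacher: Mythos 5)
Your proposal is correct and follows exactly the paper's route: the paper's entire proof is the one-line observation that everything follows from the similarity $\mc A\sim\bar A\oplus A$ established in Theorem~\ref{thm: spectral properties of Gremban symmetric matrix}, and you simply spell out the standard bookkeeping (block-diagonal factorization of $\chi$, evaluation at $\lambda=0$, spectral radius over the union of spectra, and $\|M\|_2=\rho(M)$ for symmetric $M$) that the paper leaves implicit.
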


\begin{proof}
This follows directly from the similarity of $\mathcal{A}$ with the block-diagonal matrix $\bar{A}\oplus A$.
\end{proof}
Appendix~\ref{app:triangle_example} illustrates the algebraic Gremban expansion for the explicit example of a signed triangle.
%
\subsubsection{Laplacian matrix}
Let $G=(V,E,\sigma)$ be a signed graph, and define the \emph{degree} of a vertex $v\in V(G)$ as its number of neighbours $k(v):=\vert\{u:uv\in E(G)\}\vert=\sum_{u}\bar{A}_{uv}$;\footnote{The degree is understood in the combinatorial sense, disregarding edge signs; cf. alternative conventions in the literature like \cite{babul2024sheep}.} we write $K=\diag(k(1),\dots k(n))$ for the corresponding diagonal matrix and $\mathcal{K}:=K\oplus K$ for its Gremban expansion. The \emph{signed Laplacian matrix} $L$ and \emph{unsigned Laplacian matrix} $\bar L$ of $G$ are the $n\times n$ matrices, defined as \cite{hou2003laplacian,kunegis2010spectral}
$$
L = K - A^+ + A^- \quad\text{~and~}\quad \bar{L} = K - A^+ - A^-. 
$$
Note that for an unsigned graph (such as $\mathcal{G}$), both matrices coincide. As in the case of the adjacency matrix, one can directly confirm that the Laplacian matrix of the Gremban expansion equals the Gremban expansion of the Laplacian matrix, relative to the decomposition $L=(K-A^+) - (-A^-)$;\footnote{Note that this decomposition does not correspond to $\max(L,0)$ and $\max(-L,0)$.} more precisely, the \emph{Gremban Laplacian matrix} $\mathcal{L}$ equals
\begin{equation}\label{eq:expanded_laplacian}
\mathcal{L} = \begin{pmatrix}
K-A^{+} & -A^- \\ -A^- & K-A^+
\end{pmatrix} = \mathcal{K} - \mathcal{A}.
\end{equation}
Again, since $\mathcal{L}$ is a Gremban-symmetric matrix Proposition \ref{prop: algebraic projections} and the spectral theorem \ref{thm: spectral properties of Gremban symmetric matrix} apply. For the purpose of this article, we record the following specialization of the spectral Theorem to Gremban Laplacian matrices; this will be a key result in our spectral algorithm for community--faction detection. 

\begin{corollary}\label{cor:laplacian_spectrum_expansion}
Let $G$ be a signed graph. Its Gremban Laplacian $\mathcal{L}$ is similar to the matrix $\bar{L}\oplus L$ and thus $\spec(\mathcal{L})=\spec(\bar{L})\cup\spec(L)$. Furthermore, every eigenpair $(\lambda,\mathbf{x})$ of $L$ lifts to an antisymmetric eigenpair $(\lambda,(\mathbf{x},-\mathbf{x})^\top)$ of $\mathcal{L}$, and every eigenpair $(\mu,\mathbf{y})$ of $\bar{L}$ lifts to a symmetric eigenpair $(\mu,(\mathbf{y},\mathbf{y})^\top)$ of $\mathcal{L}$.
\end{corollary}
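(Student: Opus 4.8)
The plan is to recognize this corollary as an immediate specialization of Theorem \ref{thm: spectral properties of Gremban symmetric matrix} to the particular Gremban-symmetric matrix $\mathcal{L}$. Everything hinges on correctly identifying the two-block additive decomposition to which that theorem is applied; once the blocks are matched to the theorem's notation, all four conclusions follow by direct substitution, with no genuinely new computation required.

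First I would record the block form of $\mathcal{L}$ from equation \eqref{eq:expanded_laplacian}, namely
$$
\mathcal{L} = \begin{pmatrix} K-A^{+} & -A^- \\ -A^- & K-A^+ \end{pmatrix},
$$
and observe that this has exactly the shape described in Proposition \ref{prop: characterization Gremban-symmetric matrix}, with diagonal block $K - A^+$ and off-diagonal block $-A^-$. Both blocks are symmetric, since $K$ is diagonal and $A^+, A^-$ are symmetric, so $\mathcal{L}$ is a symmetric Gremban-symmetric matrix and Theorem \ref{thm: spectral properties of Gremban symmetric matrix} applies.

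Next I would make the identification with the theorem's notation explicit by setting $M^+ := K - A^+$ and $M^- := -A^-$, which is precisely the decomposition flagged in the text preceding the corollary (and not the naive split into entrywise positive and negative parts of $L$). A one-line computation then gives $M := M^+ - M^- = K - A^+ + A^- = L$ and $\bar{M} := M^+ + M^- = K - A^+ - A^- = \bar{L}$. Thus the signed and unsigned Laplacians $L$ and $\bar{L}$ play the roles of $M$ and $\bar{M}$ in the theorem.

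Finally I would invoke Theorem \ref{thm: spectral properties of Gremban symmetric matrix} verbatim: it yields $\mathcal{L} \sim \bar{M} \oplus M = \bar{L} \oplus L$, hence $\spec(\mathcal{L}) = \spec(\bar{L}) \cup \spec(L)$ as multisets; that every eigenpair of $M = L$ lifts to an antisymmetric eigenpair $(\lambda, (\mathbf{x}, -\mathbf{x})^\top)$; and that every eigenpair of $\bar{M} = \bar{L}$ lifts to a symmetric eigenpair $(\mu, (\mathbf{y}, \mathbf{y})^\top)$. This is exactly the claimed statement. The only point demanding any care---rather than a genuine obstacle---is ensuring the correct decomposition is used, since an entrywise sign split of $L$ would neither produce the block structure of $\mathcal{L}$ nor recover $L$ and $\bar{L}$; once $M^+ = K - A^+$ and $M^- = -A^-$ are fixed, the corollary is purely a matter of substitution.
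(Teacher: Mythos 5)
Your proposal is correct and matches the paper's (implicit) argument exactly: the paper also treats this corollary as a direct specialization of Theorem \ref{thm: spectral properties of Gremban symmetric matrix} to $\mathcal{L}$ via the decomposition $L=(K-A^+)-(-A^-)$, which is flagged in the text and footnote preceding Equation \eqref{eq:expanded_laplacian}. Your care in distinguishing this from the entrywise sign split is precisely the point the paper's footnote makes, so nothing is missing.
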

In particular, eigenvectors $\psi$ of $\mathcal{L}$ are either symmetric ($\mc N \psi=\psi$) or antisymmetric ($\mc N \psi=-\psi$) if $\spec(L)$ and $\spec(\bar{L})$ are disjoint; if the signed and unsigned Laplacian have shared eigenvalues, each eigenspace may still be decomposed into a symmetric and antisymmetric part. See Appendix~\ref{app:triangle_example} for an explicit example of the decomposition of Gremban Laplacian in the case of a signed triangle.

\begin{remark}
\label{rmk:involution_symmetry}
The symmetry--antisymmetry structure of the eigenvectors in Corollary~\ref{cor:laplacian_spectrum_expansion} can also be derived directly from Gremban-symmetry of the Gremban Laplacian: a direct computation shows that \( \mc N \mc L \mc N^{-1} = \mc L \), and thus \( \mc L \) commutes with \( \mc N \). This implies that $\mc N$ preserves the eigenspaces of $\mc L$, hence they must correspond to the symmetric vectors \( \mathbf{x}_s = (\mathbf{x},\mathbf{x})^\top \) and the antisymmetric ones \( \mathbf{x}_a =  (\mathbf{x},-\mathbf{x})^\top \).
\end{remark}
%
\section{Spectral detection of communities and factions}\label{sec:spectral_detection_faction_community}
In this section, we demonstrate how the Gremban expansion can be used to address the challenge of distinguishing communities and factions in signed networks---a problem that has received relatively little attention. Communities and factions represent two fundamentally different types of mesoscale organization in signed networks; yet existing methods often fail to separate them clearly, producing ambiguous or mixed partitions. Here, we leverage the spectral properties of the Gremban expansion to overcome this limitation and distinguish between communities and factions in a principled way. We begin by characterizing each structure.  A \emph{community} is a subset of nodes \( C \subseteq V \) whose internal connectivity is significantly larger than its connectivity to  \( V \setminus C \),  regardless of edge signs. In contrast, a \emph{faction} is a subset \( F \subseteq V \) for which most positive edges lie within $F$ or within its complement, and most negative edges cross between $F$ and $V\setminus F$. We use the term \emph{cluster} to refer to either a community or a faction.

\begin{figure}[h!]
    \centering
    \includegraphics[width=0.8\linewidth]{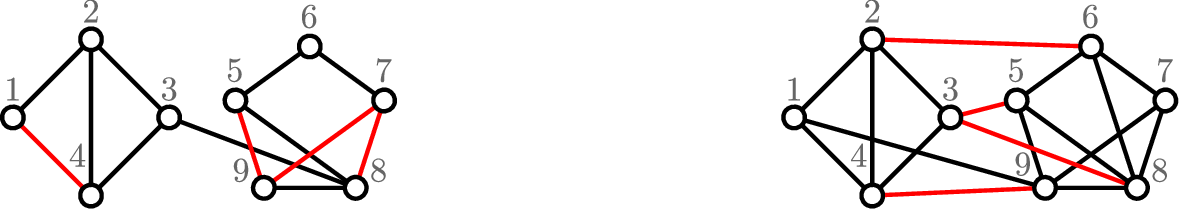}
\caption{Two types of mesoscale organization in a signed graph: community structure (left) and faction structure (right). In both cases, the optimal division of the node set is $\{1,2,3,4\},\{5,6,7,8,9\}$.}
\label{fig:communities_and_factions}
\end{figure}

Figure~\ref{fig:communities_and_factions} illustrates the two mesoscale structures. The signed graph on the left is organized into two communities, while the signed graph on the right shows a factional structure. Notice that the mesoscale organization on the left is independent of how the negative edges are distributed, i.e., it is a property of the underlying unsigned network. In contrast, in the signed graph on the right, the mesoscale structure only becomes apparent when taking into account the sign edges, particularly the fact that most edges between subsets 1-4 and 5-8 are negative.

From an optimization perspective, communities are partitions that minimize the size of the cut-set, while factions minimize the size of the frustration set. Proposition~\ref{prop:frustration_cut_cymmetric} shows that both objectives correspond to low-connectivity cuts in the Gremban-expanded graph. This observation allows us to treat both cases in 
a unified way: 
we expand the signed graph, compute
a Gremban-symmetric partition in the expanded graph, and then project the partition back to the original network.
A key algorithmic step in this approach is thus to cluster the Gremban-expanded graph. In this work, we use
spectral clustering \cite{ng2001spectral,shi2000normalized,vonLuxburg2007} due to its robustness, widespread use, and close connection to both cut-sets and spectral properties. Most importantly, as we show below, the proposed spectral methods naturally lead to Gremban-symmetric cuts which can be projected to the signed graphs.
Other approaches to clustering, such as \cite{peixoto2019bayesian,rosvall2008maps}, could be made compatible with this approach, by enforcing their output to respect the Gremban symmetry. 
%
\subsection{Projecting spectral cuts to communities and factions}
Spectral clustering refers to a class of methods that partition the nodes of a graph using the spectral properties of its Laplacian matrix. The central idea is to relax the combinatorial problem of minimizing the cut size between subsets into a continuous optimization problem involving eigenvectors \cite{vonLuxburg2007}. In the case of a bipartition, one considers the second smallest eigenvalue of the Laplacian---known as the \emph{algebraic connectivity}---and its associated eigenvector, the \emph{Fiedler vector}, denoted by $\psi_2$ \cite{fiedler1973algebraic}. The components of this vector provide a one-dimensional embedding of the nodes along a line, and a natural partition is obtained by thresholding its entries, for instance at zero. This yields an approximate solution to the normalized cut problem that penalizes small partitions while primarily minimizing inter-cluster connectivity. This spectral embedding captures global structure and thus typically outperforms purely local or greedy methods. Appendix~\ref{app:spectral_clustering} provides a detailed exposition of this approach.

We begin with the case of two clusters, where node membership is determined by the sign of the Fiedler eigenvector entries. We denote the eigenvalues and eigenvectors of $\mc L$ by $\lambda_j, \psi_j$ respectively, ordered such that $\lambda_j$ are non-decreasing. Corollary~\ref{cor:laplacian_spectrum_expansion} ensures that the Fiedler eigenvector $\psi_2$ of the expanded Laplacian 
$\mc L$ is either symmetric or antisymmetric. Remarkably, this dichotomy has a direct interpretation in the original signed graph: a symmetric eigenvector yields a Gremban-symmetric cut whose projection corresponds to a cut-set (i.e. a community partition), whereas an antisymmetric eigenvector yields a Gremban-symmetric cut whose projection corresponds to a frustration set, i.e., a factional partition.

\begin{theorem}
\label{thm:fiedler_cut_projection}
Let $G$ be a signed graph with Gremban expansion $\mathcal{G}$, let $\psi$ be an eigenvector of the Gremban Laplacian matrix $\mathcal{L}$ with nonzero entries, and define the bipartition $V(\mathcal{G})=\mathcal{U}_1\cup\mathcal{U}_2$ as
$$
\mathcal{U}_1=\{v\in V(\mathcal{G}):\psi(v)>0\}\quad\text{~and~}\quad\mathcal{U}_2=\{v\in V(\mathcal{G}):\psi(v)<0\}.
$$
\begin{itemize}
    \item If $\psi$ is symmetric, then the cut-set $C(\mathcal{U}_1,\mathcal{U}_2)$ is Gremban-symmetric and projects to a cut-set in $G$. 
    \item If $\psi$ is antisymmetric, then $C(\mathcal{U}_1,\mathcal{U}_2)$ is Gremban-symmetric and projects to a frustration set in $G$.
\end{itemize}
\end{theorem}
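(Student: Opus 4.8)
The plan is to reduce this statement to the bijective correspondence already established in Theorem \ref{thm: cuts become cuts or frustration sets}, which relates Gremban-symmetric cut-sets in $\mathcal{G}$ to cut-sets and frustration sets in $G$ according to how the Gremban involution $\eta$ acts on the two parts of the partition. The essential observation is that the symmetry type of the eigenvector $\psi$ controls precisely this action of $\eta$. So my first step would be to show that the bipartition $\mathcal{U}_1\cup\mathcal{U}_2$ induced by the sign pattern of $\psi$ is Gremban-symmetric, and to determine in which of the two cases of Theorem \ref{thm: cuts become cuts or frustration sets} we land.

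\begin{proof}
Recall from Definition \ref{def:gremban_symmetry} that $\psi$ is \emph{symmetric} if $\mathcal{N}\psi=\psi$ and \emph{antisymmetric} if $\mathcal{N}\psi=-\psi$, where $\mathcal{N}$ acts on $\R^{2n}$ by interchanging the positive and negative coordinates, i.e. $(\mathcal{N}\psi)(v^\chi)=\psi(v^{-\chi})$. This algebraic action of $\mathcal{N}$ mirrors the combinatorial action of the Gremban involution $\eta$ on $V(\mathcal{G})$.

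Suppose first that $\psi$ is symmetric, so that $\psi(v^+)=\psi(v^-)$ for every $v\in V(G)$. Then $v^+$ and $v^-$ always have the same sign under $\psi$, and hence they land in the same part of the bipartition: $v^\chi\in\mathcal{U}_i$ implies $v^{-\chi}\in\mathcal{U}_i$. This means $\eta(\mathcal{U}_1)=\mathcal{U}_1$ and $\eta(\mathcal{U}_2)=\mathcal{U}_2$, so the partition is Gremban-symmetric and falls into the second case of Theorem \ref{thm: cuts become cuts or frustration sets}. That theorem then gives that $C(\mathcal{U}_1,\mathcal{U}_2)$ is Gremban-symmetric and that its projection $\pi(C(\mathcal{U}_1,\mathcal{U}_2))$ is a cut-set in $G$.

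Now suppose $\psi$ is antisymmetric, so that $\psi(v^+)=-\psi(v^-)$ for every $v$. Since $\psi$ has nonzero entries, $v^+$ and $v^-$ then always have opposite signs and therefore lie in opposite parts: $v^\chi\in\mathcal{U}_1$ if and only if $v^{-\chi}\in\mathcal{U}_2$. This gives $\eta(\mathcal{U}_1)=\mathcal{U}_2$ and $\eta(\mathcal{U}_2)=\mathcal{U}_1$, which is the first case of Theorem \ref{thm: cuts become cuts or frustration sets}. That theorem then yields that $C(\mathcal{U}_1,\mathcal{U}_2)$ is Gremban-symmetric and that its projection is a frustration set in $G$.
\end{proof}

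The main subtlety, and the reason the hypothesis that $\psi$ has only nonzero entries is essential, is that it guarantees the bipartition into $\mathcal{U}_1$ and $\mathcal{U}_2$ actually covers all of $V(\mathcal{G})$ with no omitted nodes; otherwise an entry equal to zero would leave a node unassigned and could break the clean correspondence with $\eta$. The remaining work is purely bookkeeping: once the Gremban-symmetry of the partition and the action of $\eta$ on its parts are pinned down, the two bullet points follow immediately from the already-proven Theorem \ref{thm: cuts become cuts or frustration sets}. I do not anticipate any genuine obstacle here, since the reduction is direct; the only point requiring care is correctly matching the symmetric case to the cut-set conclusion and the antisymmetric case to the frustration-set conclusion, which I would double-check against the sign conventions in Corollary \ref{cor:laplacian_spectrum_expansion}.
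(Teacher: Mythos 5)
Your proof is correct and follows essentially the same route as the paper's: determine from the symmetry type of $\psi$ whether $\eta$ fixes or swaps $\mathcal{U}_1,\mathcal{U}_2$, then invoke Theorem \ref{thm: cuts become cuts or frustration sets}. The only cosmetic slip is citing Definition \ref{def:gremban_symmetry} for the symmetric/antisymmetric terminology of eigenvectors, which actually comes from Corollary \ref{cor:laplacian_spectrum_expansion} and Remark \ref{rmk:involution_symmetry} ($\mathcal{N}\psi=\pm\psi$); this does not affect the argument.
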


\begin{proof}
If $\psi$ is symmetric or antisymmetric we find that the cut-set $C(\mathcal{U}_1,\mathcal{U}_2)$ is Gremban-symmetric:
$$
(u,v)\in C(\mathcal{U}_1,\mathcal{U}_2) \iff \psi(u)\psi(v)<0 \overset{\text{(anti-)symm.}}{\iff} \psi(\eta(u))\psi(\eta(v))<0 \iff \eta(u,v)\in C(\mathcal{U}_1,\mathcal{U}_2).
$$
If $\psi$ is symmetric, then $\psi(v)>0$ implies $\psi(\eta(v))>0$ which means that $\eta$ fixes the sets $\mathcal{U}_1,\mathcal{U}_2$. If $\psi$ is antisymmetric, then $\psi(v)>0$ implies $\psi(\eta(v))<0$ which means that $\eta$ interchanges the sets $\mathcal{U}_1,\mathcal{U}_2$. The projection to cut-sets and frustration sets in $G$ then follows from Theorem \ref{thm: cuts become cuts or frustration sets}.
\end{proof}

\begin{remark}  \label{rem:psi_neq_0_case}
In Theorem \ref{thm:fiedler_cut_projection} we have assumed that the eigenvector $\psi$ has nonzero entries. 
When this vector has zero entries (i.e. $\psi(v^{\chi})=0$), then the corresponding nodes $v^\chi,v^{-\chi}$ can be assigned to either cluster without affecting the size of the cut-set. In such cases, we may assign \( v^\chi \) to $\mathcal{U}_1$, and $v^{-\chi}$ to $\mathcal{U}_1$ if $\psi$ is symmetric and to $\mathcal{U}_2$ if it is antisymmetric; this preserves the Gremban symmetry of the cut and guarantees that the result from the theorem still applies. This general case is treated in Theorem \ref{thm:generalized_fiedler_cut_projection} in Appendix \ref{app:extra_theorems}.
\end{remark}
Theorem~\ref{thm:fiedler_cut_projection} holds for any eigenvector of \( \mc L \) and, in fact, for any vector $\psi\in\R^{2n}$; however, our main interest lies in applying it to the Fiedler eigenvector \( \psi_2 \). This is because \( \psi_2 \) defines a cut-set $\mc C$ whose size closely approximates the edge connectivity of the expanded graph $\kappa_e(\mc G)$, as shown in Theorem~\ref{thm:min_cut_Laplacian} in Appendix \ref{app:spectral_clustering}. Since the cut-set is Gremban-symmetric, Proposition~\ref{prop:frustration_cut_cymmetric} implies that
\(
|\mc C| \approx \kappa_e(\mc G) = 2\min\left( \kappa_e(G), \varphi(G) \right),
\)
so that its projection satisfies
\(
\vert\pi(\mathcal{C})\vert \approx \min\left( \kappa_e(G), \varphi(G) \right).
\)
In other words, the projection of the partition of the expanded graph provides a good approximation to either the minimum cut partition or the minimum frustration partition, depending on which mesoscale structure dominates in the graph.

We now make explicit how projections of the Fiedler eigenvector of 
$\mc L$ distinguish between communities and factions. Let
$\mathbf{x}=(\mathbf{x^+},\mathbf{x^-})^\top=\operatorname{sign}(\psi_2)\in\{\pm1\}^{2n}$, and define the 
projections
\(
\mathbf{c}:=\Pi_s \mathbf{x}=\tfrac{1}{\sqrt{2}}(\mathbf{x}^++\mathbf{x}^-),\) and \(\mathbf{f}:=\Pi_a \mathbf{x}=\tfrac{1}{\sqrt{2}}(\mathbf{x}^+-\mathbf{x}^-).
\)
If $\psi_2$ is symmetric, then $\mathbf{x}^-=\mathbf{x}^+$, so $\mathbf{c}=\sqrt{2}\mathbf{x}^+$ and 
$\mathbf{f}=\mathbf{0}$, yielding a community partition. If $\psi_2$ is antisymmetric, 
then $\mathbf{x}^-=-\mathbf{x}^+$, so $\mathbf{f}=\sqrt{2}\mathbf{x}^+$ and $\mathbf{c}=\mathbf{0}$, yielding a factional 
partition. Thus the projection mechanism directly identifies the 
dominant mesoscale structure. When neither $\mathbf{c}$ nor $\mathbf{f}$ vanish, as can occur under eigenspace degeneracy, the symmetry is broken and 
further information is needed to determine the dominant structure (see 
Section~\ref{sec:Gremban_more_than_two_clusters}). Finally, note that this method is consistent with known limiting cases: in balanced connected graphs it recovers the balanced factions, while in unbalanced disconnected graphs it recovers the connected components (see Proposition~\ref{prop:clustering_balanced_connected} in Appendix~\ref{app:extra_theorems}).

We end this subsection summarizing the proposed procedure in the following algorithm:

\begin{algorithm}[H]
\caption{Detecting communities and factions via the Gremban Laplacian}
\label{alg:community_faction_detection}
\begin{algorithmic}[1]
\State \textbf{Input:} Signed graph \( G = (V,E,\sigma) \)
\State \textbf{Output:} Partition corresponding to dominant mesoscale structure
\State Compute the Laplacian matrix \( \mc L = \mc K - \mc A \) of $\mathcal{G}$ 
\State Compute the Fiedler eigenvector \( \psi_2 \) of \( \mc L \)
\State Let \( \mathbf{x} = \operatorname{sgn}(\psi_2) \)
\State Compute \( \mathbf{c} = \Pi_s \mathbf{x} \) and \( \mathbf{f} = \Pi_a \mathbf{x} \)
\If{\( \|\mathbf{c}\|_0 > \|\mathbf{f}\|_0 \)}
    \State Return the partition induced by \( \mathbf{c} \) (community structure) \,\,\quad\quad\text{~(see Theorem \ref{thm:fiedler_cut_projection})}
\ElsIf{\( \|\mathbf{f}\|_0 > \|\mathbf{c}\|_0 \)}
    \State Return the partition induced by \( \mathbf{f} \) (faction structure) \,\,\quad\quad\quad\quad\text{~(see Theorem \ref{thm:fiedler_cut_projection})}
\Else
    \State Network has more than two clusters; use Algorithm \ref{alg:community_faction_detection_multi} in Appendix \ref{app:extra_theorems}.
\EndIf
\end{algorithmic}
\end{algorithm}
%
\subsection{Numerical experiments}  \label{sec:numerical}

To test the proposed algorithm, we simulate signed networks using a degree-corrected stochastic block model with two ground-truth groups and $n=100$ nodes. Positive and negative edges are controlled by two separate block matrices, \(\rho^+\) and \(\rho^-\), each with parameters \(\rho^{\pm}_{\text{in}}\) and \(\rho^{\pm}_{\text{out}}\). We fix the positive parameters to \(\rho^+_{\text{in}} = 0.2\) and \(\rho^+_{\text{out}} = 0.02\), and interpolate from faction to community structure by varying \(\rho^-_{\text{in}}\) from 0 to 0.2 and letting \(\rho^-_{\text{out}} = 0.22 - \rho^-_{\text{in}}\). When \(\rho^-_{\text{in}} = 0\), most negative edges go across groups, inducing a faction structure. As \(\rho^-_{\text{in}}\) increases, the networks become less balanced and the inter-group edges become sparser, inducing a community structure. Further details on the network generation process can be found in Appendix \ref{app:ssbm}.

We compare three clustering algorithms applied to the simulated signed networks. The first method (\emph{Gremban method}), 
is our spectral method described in Algorithm \ref{alg:community_faction_detection}; it applies standard spectral clustering to the Gremban expansion of the graph and projects the results back onto the original nodes. The second method (\textit{signed method}) applies the same spectral algorithm directly to the signed Laplacian $L = K- A$, following \cite{kunegis2010spectral}. Finally, the third method (\textit{unsigned method}) applies the spectral algorithm using the unsigned Laplacian $\bar L$, which amounts to using the matrix \(|A|\) and thus ignoring edge signs. To account for degree heterogeneity, we use the normalized version of the Laplacian (see Appendix \ref{app:spectral_clustering}); however, our conclusions remain true for the unnormalized Laplacian.  For the Gremban and unsigned methods, we take the eigenvector associated to the second smallest eigenvalue, while for the signed method, we use the eigenvector associated to the smallest eigenvector, as $L$ does not in general have a constant eigenvector.

\begin{figure}
    \centering
    \includegraphics[width=\linewidth]{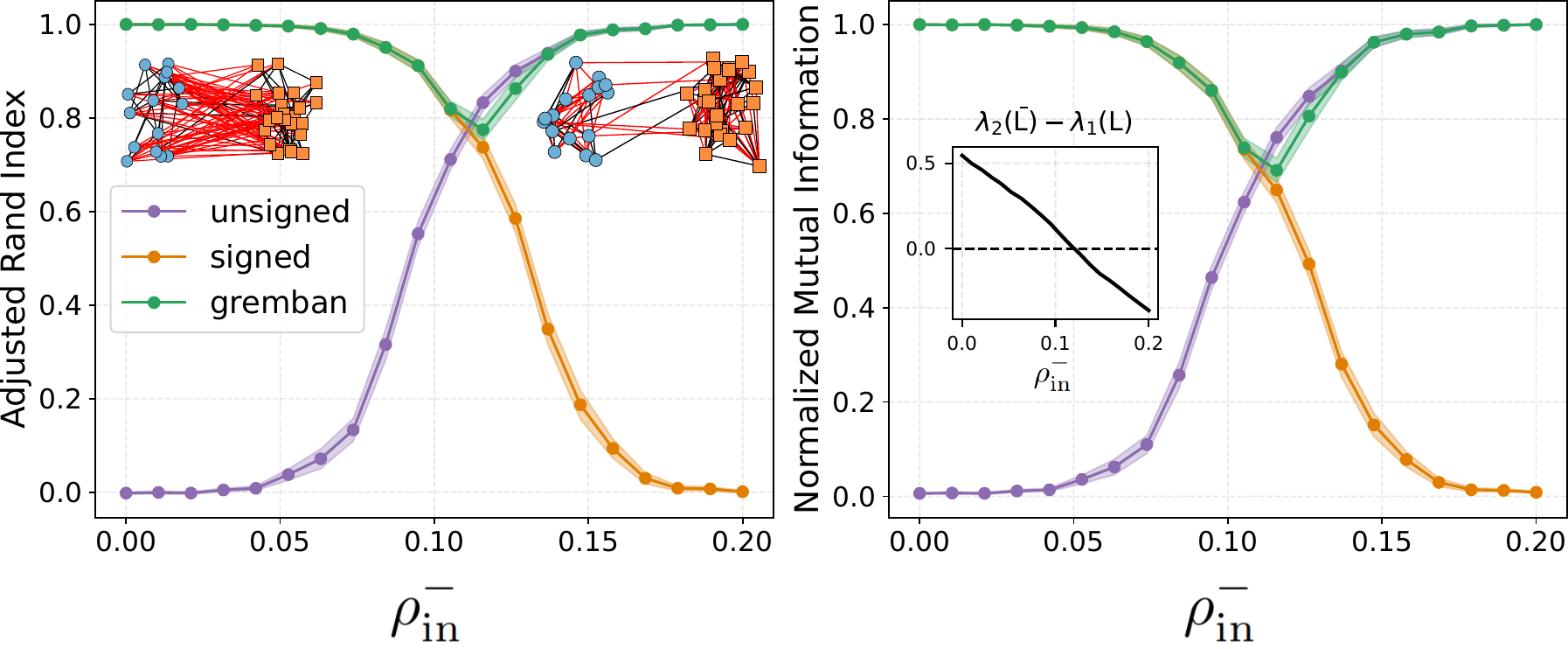}
    \caption{Clustering performance in a network of 100 nodes as a function of the negative in-group edge probability \(\rho^-_{\mathrm{in}}\), which interpolates between purely factional structure (\(\rho^-_{\mathrm{in}} = 0\)) and purely community structure (\(\rho^-_{\mathrm{in}} = 0.2\)). The negative out-group probability is set to \(\rho^-_{\mathrm{out}} = 0.22 - \rho^-_{\mathrm{in}}\), while the positive parameters are fixed at \(\rho^+_{\mathrm{in}} = 0.2\) and \(\rho^+_{\mathrm{out}} = 0.02\); see Appendix~\ref{app:ssbm} for full simulation details. We report the Adjusted Rand Index (left) and Normalized Mutual Information (right) for three spectral clustering methods: the unsigned Laplacian (`unsigned'), the signed Laplacian (`signed'), and the Gremban expansion (`Gremban'). The right panel includes an inset showing the spectral gap \(\lambda_2(\bar L) - \lambda_1(L)\) as a function of \(\rho^-_{\mathrm{in}}\). We also include example networks for low and high values of \(\rho^-_{\mathrm{in}}\); nodes are colored according to ground-truth group labels, positive edges appear as black lines, and negative edges as red lines. The shaded area represents the 95\% confidence interval, based on 100 independent runs per parameter setting.} 
    \label{fig:interpolation_communities_factions}
\end{figure}

We compared the detected clusters to the ground-truth ones using two metrics: the Adjusted Rand Index (ARI) and the Normalized Mutual Information (NMI) (see Appendix \ref{app:clustering_metrics} for details). The results are reported in Figure \ref{fig:interpolation_communities_factions}, which shows how clustering performance changes as the network structure shifts from factional (left) to community-dominated (right). As expected, the unsigned Laplacian recovers the ground-truth clusters when community structure dominates, while the signed Laplacian succeeds when faction structure is strongest. In contrast, the Gremban Laplacian accurately recovers both communities and factions, as evidenced by the high ARI and NMI values attained for both low and high values of $\rho_{in}^-$. Interestingly, no method perfectly retrieves the ground-truth labels in the range $\rho_{in}^-\in [0.1,0.15]$. In this parameter regime, the network transitions between community and faction structure, and thus lacks a clear mesoscale organization. This explanation is supported by the inset in Figure \ref{fig:interpolation_communities_factions}, which shows the gap between the second smallest eigenvalue of $\bar L$ and the smallest eigenvalue of $L$. Algorithm \ref{alg:community_faction_detection} selects the eigenvector lift associated to the smallest of the two eigenvalues, so the dominant mesoscale structure is determined by the sign of this spectral gap. We observe that the gap changes sign at $\rho_{in}^-\approx 0.14$, precisely where all methods exhibit poor performance. This confirms that the loss of clustering accuracy reflects a fundamental shift in the network's spectral structure.

Appendix~\ref{app:eigenvectors} examines in more details the eigenvalues and eigenvectors of the different Laplacian matrices $L,\bar{L}$ and $\mathcal{L}$ in networks with strongly pronounced factional or community structure. 
%
\subsection{Mixed community--faction structures and multi-way clustering}  \label{sec:Gremban_more_than_two_clusters}

    \begin{figure}[h]
        \centering
        \includegraphics[width=\linewidth]{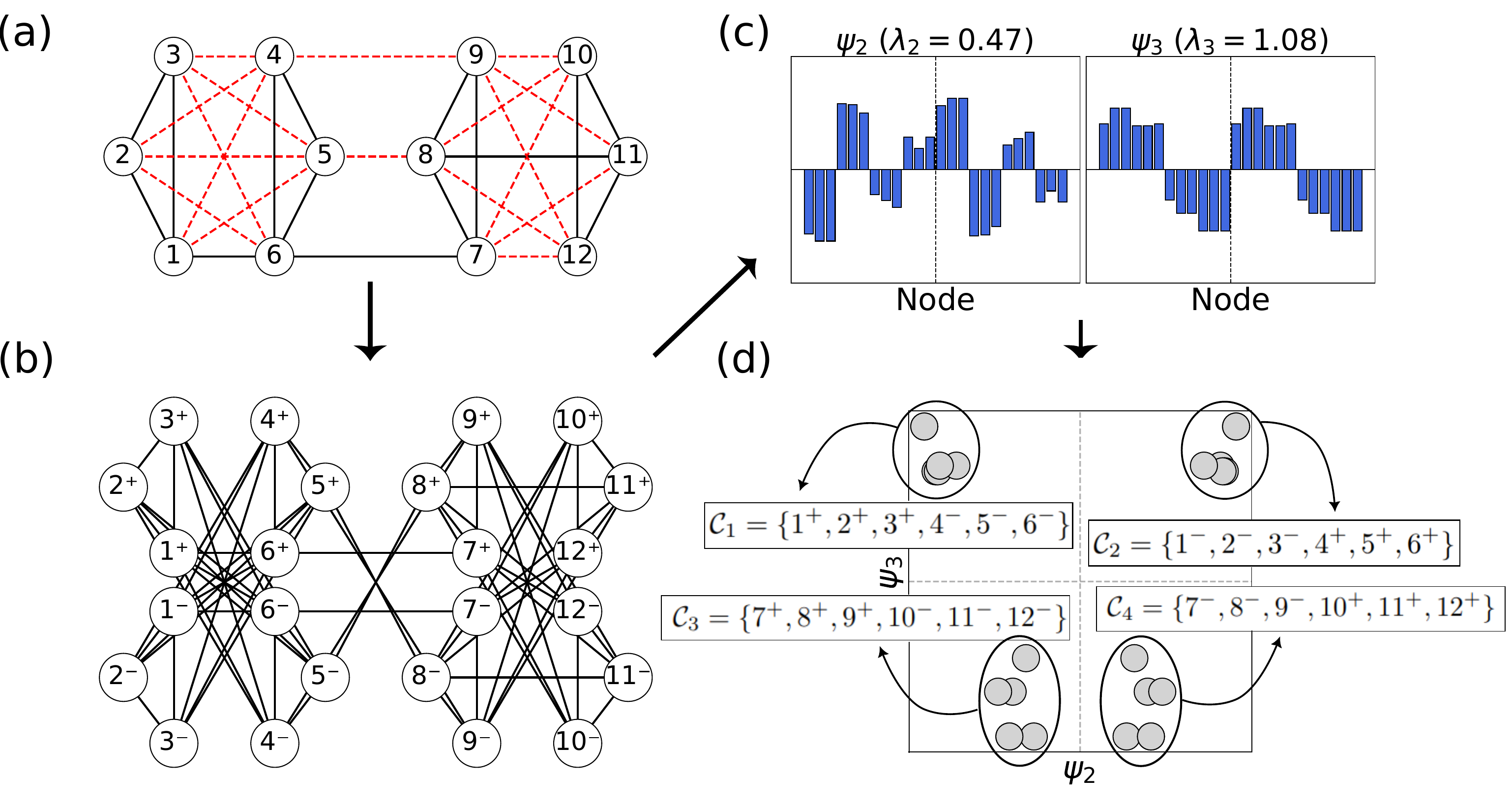}
        \caption{\textbf{Procedure to detect communities and factions coexisting in a network.} Given a signed network (panel~(a)), we first construct its Gremban expansion (panel~(b)) and compute the corresponding Laplacian \( \mathcal{L} \). We then extract the first two non-constant eigenvectors, \( \psi_2 \) and \( \psi_3 \) (panel~(c)), and use their components to embed each node into a two-dimensional space (panel~(d)). Applying a clustering algorithm in this embedded space yields four Gremban-symmetric clusters (panel~(d)). The clusters are then projected back to the original network using the one-sided projection \( \pi_+ \). The Gremban symmetry of the expanded clusters reveals which clusters correspond to antagonistic factions and which to communities. The eigenvalues associated with \(\psi_2\) and \(\psi_3\) are reported above the plots of panel~(c).
    }
    \label{fig:separating_communities_and_factions}
    \end{figure}

Up to this point we have focused on networks that exhibit either community or faction structure in isolation. Many real systems, however, contain both at once. A key advantage of the Gremban expansion that it naturally accommodates such cases, where communities and factions coexist in the same graph. When this happens, the expanded graph typically contains more than two clusters, requiring a multi-way extension of spectral clustering \cite{vonLuxburg2007}.

The standard multi-way clustering procedure, detailed in Appendix~\ref{app:spectral_clustering}, computes the Laplacian \( \mathcal{L} \) of the expanded graph \( \mathcal{G}(G) \) and extracts its leading \( k-1 \) non-trivial eigenvectors.  
These eigenvectors define a low-dimensional spectral embedding where each node $v^\chi$ is mapped to a point $\y_v^\chi = (\psi_2^\chi(v),\hdots, \psi_k^\chi(v))\in\R^{k-1}$. A clustering algorithm such as \( k \)-means is then applied to find clusters in these embedded points. 

As in the two-cluster setting, the key step is to interpret the clusters of the expanded graph in terms of the original signed network. This requires showing, first, that the clusters are Gremban-symmetric, and second, identifying whether their projection corresponds to a community or a faction.
\begin{proposition} \label{prop:gremban_distances}
    Let $\mc G$ be the Gremban expansion of a signed graph $G$ and $\y_v^\chi$ the embedding coordinates of its nodes. Then the Euclidean distance matrix 
    with entries $\mc D^{\chi \rho}_{vw} :=  \| \y_v^\chi - \y_w^\rho \|^2$ is Gremban-symmetric. 
\end{proposition}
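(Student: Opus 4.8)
The plan is to unwind what Gremban-symmetry means for the distance matrix $\mc D$ and then reduce the claim to the symmetric/antisymmetric structure of the embedding eigenvectors supplied by Corollary~\ref{cor:laplacian_spectrum_expansion}. Since $\mc D$ is a $2n\times 2n$ matrix indexed by the polarities $v^\chi$, conjugating it by the involution matrix $\mc N$ simply swaps $\chi\leftrightarrow-\chi$ in each index. Hence, invoking Proposition~\ref{prop: characterization Gremban-symmetric matrix}, establishing Gremban-symmetry amounts to proving the identity $\mc D^{\chi\rho}_{vw}=\mc D^{-\chi,-\rho}_{vw}$ for all nodes $v,w$ and polarities $\chi,\rho$. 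In words, flipping the polarity of both endpoints must leave the squared embedding distance unchanged.

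The key input is that each coordinate of the embedding is the entry of an eigenvector $\psi_j$ of $\mc L$ that is either symmetric or antisymmetric under $\mc N$. Concretely, $\psi_j^{-\chi}(v)=s_j\,\psi_j^\chi(v)$, where $s_j=+1$ when $\psi_j$ is symmetric and $s_j=-1$ when it is antisymmetric. I would collect these signs in a diagonal matrix $S=\diag(s_2,\dots,s_k)\in\{\pm1\}^{(k-1)\times(k-1)}$, so that the effect of the involution on the entire embedding is the single linear map $\y_v^{-\chi}=S\,\y_v^\chi$, applied identically to every node.

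The conclusion then follows because $S$ is orthogonal: its diagonal entries are $\pm1$, so $S^\top S=\1$ and $S$ preserves the Euclidean norm. Substituting gives
\[
\mc D^{-\chi,-\rho}_{vw}=\|\y_v^{-\chi}-\y_w^{-\rho}\|^2=\|S(\y_v^\chi-\y_w^\rho)\|^2=\|\y_v^\chi-\y_w^\rho\|^2=\mc D^{\chi\rho}_{vw},
\]
which is exactly the required identity. I expect the only genuine subtlety to be degeneracy in $\spec(\mc L)$: when $\spec(L)$ and $\spec(\bar L)$ overlap, an individual eigenvector need not be purely symmetric or antisymmetric, and the clean relation $\y_v^{-\chi}=S\,\y_v^\chi$ could fail. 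This is resolved by the remark following Corollary~\ref{cor:laplacian_spectrum_expansion}, which guarantees that each eigenspace still decomposes into symmetric and antisymmetric parts; choosing the embedding basis within this decomposition restores the relation and leaves the argument intact.
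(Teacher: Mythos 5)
Your proof is correct and follows essentially the same route as the paper's: both reduce Gremban-symmetry of $\mc D$ to the identity $\mc D^{\chi\rho}_{vw}=\mc D^{-\chi,-\rho}_{vw}$ and derive it from the sign relation $\psi_j^{-\chi}(v)=\pm\,\psi_j^{\chi}(v)$ of Corollary~\ref{cor:laplacian_spectrum_expansion}, the paper writing the $(\pm1)^2$ factor coordinatewise where you package it as an orthogonal diagonal matrix $S$. Your closing caveat about degenerate eigenspaces is a sensible refinement that the paper only addresses implicitly via Remark~\ref{rmk:involution_symmetry}.
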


\begin{proof}
    By Corollary \ref{cor:laplacian_spectrum_expansion}, each eigenvector satisfies either $\psi_j^\chi(v)=\psi_j^{-\chi}(v)$ or $\psi_j^\chi(v)=-\psi_j^{-\chi}(v)$, hence:
\begin{align*}
        \mc D^{\chi \rho}_{vw} &= \sum_{l=2}^k \left( \psi_l^{\chi}(v) - \psi_l^{\rho}(w) \right)^2 = \sum_{l=2}^k (\pm1)^2\left( \psi_l^{-\chi}(v) - \psi_l^{-\rho}(w) \right)^2 = \sum_{l=2}^k \left( \psi_l^{-\chi}(v) - \psi_l^{-\rho}(w) \right)^2 = \mc D^{-\chi, -\rho}_{vw}.
\end{align*}
\end{proof}
Proposition~\ref{prop:gremban_distances} shows that the spectral embedding preserves the involutive symmetry: the distance between two nodes equals the distance between their involutes. Therefore, any deterministic distance-based clustering method produces a Gremban-symmetric partition\footnote{In practice, algorithms such as k-means are usually not deterministic and depend on initialization, which can break the symmetry. Nevertheless, in our numerical experiments the resulting partitions were always Gremban-symmetric.}.

We now turn to the interpretation of the resulting clusters. We distinguish two cases according to their symmetry. If a cluster is invariant under the involution, $\eta(\mc U_i)=\mc U_i$, then the associated cut-set projects to a cut-set in $G$ (Theorem~\ref{thm: cuts become cuts or frustration sets}). Since the spectral clustering algorithm enforces low edge connectivity between $\mc U_i$ and its complement, the projection of $\mc U_i$ corresponds to a community (cf.\ the argument following Remark~\ref{rem:psi_neq_0_case}). If two clusters form an antisymmetric pair, $\eta(\mc U_i)=\mc U_j$ with $i\neq j$, the situation is subtler. The spectral embedding ensures that both $\mc U_i$ and $\mc U_j$ are well separated from the rest of the expanded graph. Their union $\mc U_i \cup \mc U_j$ is a Gremban-symmetric subset, so Theorem~\ref{thm: cuts become cuts or frustration sets} guarantees that the cut-set $C(\mc U_i \cup \mc U_j, V(\mc G)\setminus (\mc U_i \cup \mc U_j))$ projects to a cut-set in $G$ and the subgraph $\mc H$ induced by $\mc U_i \cup \mc U_j$ to a community. Furthermore, since $V(\mc H)=\mc U_i \cup \mc U_j$, the cut-set $C(\mc U_i,\mc U_j)$ projects to a frustration set of the subgraph $\pi(\mc H)$ with two opposing factions. Taken together, this shows that each pair of antisymmetric clusters in $\mc G$ project to a community containing nested factional divisions in $G$. The algorithm summarizing the proposed multi-way clustering method can be found in Appendix \ref{app:extra_theorems}.

To demonstrate our methodology, we apply it to a network containing both communities and factions (Figure~\ref{fig:separating_communities_and_factions}). For visualization purposes, we restrict the embedding to the first two nontrivial eigenvectors \( \psi_2 \) and \( \psi_3 \) of the Gremban Laplacian $\mc L$, which results in an embedding of the nodes into the plane. The panels in Figure~\ref{fig:separating_communities_and_factions} illustrate each step of the procedure: panel~(b) shows the Gremban expansion of the signed network in panel~(a); panel~(c) displays the eigenvectors \( \psi_2 \) and \( \psi_3 \); and panel~(d) presents the resulting two-dimensional spectral embedding. The embedding clearly reveals four Gremban-symmetric clusters: \( \mathcal{U}_1 = \{1^+, 2^+, 3^+, 4^-, 5^-, 6^-\} \), \( \mathcal{U}_2 = \{1^-, 2^-, 3^-, 4^+, 5^+, 6^+\} \), \( \mathcal{U}_3 = \{7^+, 8^+, 9^+, 10^-,11^-, 12^-\} \), and \( \mathcal{U}_4 = \{7^-, 8^-, 9^-, 10^+,11^+, 12^+\} \). The corresponding projections are \( U_1 = \pi_+(\mc U_1) = \{1, 2, 3\} \), \( U_2 = \pi_+(\mc U_2) = \{4,5,6\} \), \( U_3 = \{7,8,9\} \), and \( U_4 = \{10,11,12\} \). To determine whether these clusters represent communities or factions, we examine their Gremban symmetry. Since \( \eta(\mathcal{U}_1) = \mathcal{U}_2 \), we conclude that \( U_1 \) and \( U_2 \) form a pair of opposing factions. This is confirmed by inspecting the original network, where these groups are connected exclusively through negative edges. The same reasoning applies to \( U_3 \) and \( U_4 \), which also form a pair of opposing factions. Note that in this case the algorithm correctly handles the frustrated edge between nodes 8 and 11 and retrieves the faction structure, even though it is not perfectly balanced. To determine the communities, we note that no individual cluster is invariant under \( \eta \), but the unions \( \mathcal{U}_1 \cup \mathcal{U}_2 \) and \( \mathcal{U}_3 \cup \mathcal{U}_4 \) are. Consequently, \( U_1 \cup U_2 \) and \( U_3 \cup U_4 \) correspond to two communities in the original graph. Indeed, these groups are only loosely connected to each other, which aligns with our conceptualization of communities as sets of nodes with few links to the rest of the network.

In summary, we have shown that, as opposed to the spectral embeddings induced by $L$ or $\bar L$, the spectral embedding induced by $\mc L$ is capable of disentangling community from faction structure when both are present in a graph. This is a natural consequence of the fact that the spectral embedding induced by $\mc L$ combines eigenvectors from $L$ and $\bar L$ in a principled way. We can check this fact by noting that, in panel (c) of Figure \ref{fig:separating_communities_and_factions},  the eigenvector $\psi_2$ is antisymmetric and thus must come from $L$, while the eigenvector $\psi_3$ is symmetric and thus comes from $\bar L$. 
%
\section{Other applications of the Gremban expansion}\label{sec:other_applications}

Beyond its role in disentangling communities from factions, the Gremban expansion also provides a versatile framework for analyzing the structure and dynamics of signed networks. First, it enables a precise treatment of dynamics on signed networks by lifting them to an unsigned setting where standard tools apply. This construction essentially treats the signed dynamics as the net result of a positive and a negative process in the expanded graph. Here we examine the cases of random walks and diffusive dynamics, while more complicated dynamics are left for future work. These results also offer a complementary perspective on the interplay between communities and factions. In addition, the Gremban expansion provides a direct method to separately enumerate positive and negative walks. This combinatorial property, originally noted in \cite{fox2017numerical}, also facilitates the use of exponential and resolvent functions (see Appendix~\ref{app:walk_enumeration} for details).
%
\subsection{Signed random walks}

In a signed network, a random walker moves across edges as in the usual case. However, to model the effect of negative edges, walkers
are assigned a polarity that is flipped whenever they cross a negative edge \cite{tian2024spreading}. This process is governed by the equation $\x^{net}(t+1)=T\x^{net}(t)$, where $T =  K^{-1}A$ is the signed transition matrix. This equation only captures the net flow of walkers $\x^{net}$, which can be interpreted as the difference between the number of positive walkers $\mathbf{x}^+$ and that of negative walkers $\mathbf{x}^-$. As a result, the net number of walkers is not conserved, and in many cases the process converges to a trivial stationary state $\mathbf{x}^{net}\to0$. In contrast, the Gremban expansion arises as a natural tool to represent these dynamics in an expanded, unsigned space where polarity is encoded explicitly, allowing standard diffusion processes to capture the full walker distribution. 

To see how the Gremban expansion arises naturally, we expand the equation $\x^{net}(t+1)= T \x^{net}(t)$ \cite{tian2024spreading}:
\begin{align} \label{aux:random_walk}
    x_v^{net}(t+1)= \frac1{k(v)} \sum_w A^+_{vw} x_w^{net}(t) - A^-_{vw} x_w^{net}(t).
\end{align}
Grouping all positive and negative terms in Eq. \eqref{aux:random_walk} together
and substituting $\x^{net}=\x^+-\x^-$, we obtain:
\begin{align}  \label{aux:system_random_walk}
    x_v^+(t+1) = \frac1{k(v)} \sum_w A^+_{vw} x_w^+(t) + A^-_{vw} x_w^-(t), \qquad
    x_v^-(t+1) = \frac1{k(v)} \sum_w A^-_{vw} x_w^+(t) + A^+_{vw} x_w^-(t).
\end{align}
Equation \eqref{aux:system_random_walk} can be compactly written as:
\begin{align}  \label{eq:Gremban_random_walk}
    \x(t+1)=\mc T \x(t),  \textrm{  where  }
\x = \begin{pmatrix}
    \x^+ \\ \x^-
\end{pmatrix}  \textrm{  and  } \mc T = \mc K^{-1} \mc A = \begin{pmatrix}
K^{-1}A^+ & K^{-1}A^- \\
K^{-1}A^- & K^{-1}A^+
\end{pmatrix}.
\end{align}
We have thus found that the matrix governing the random walk evolution is the Gremban expansion of the transition matrix. Note that we did not impose the Gremban symmetry; instead, this symmetry follows from the fact that the observable $\x$ is a net count of walkers. Equation \eqref{aux:random_walk} can be easily recovered by multiplying Equation \eqref{eq:Gremban_random_walk} by the projection $\sqrt{2}\Pi_a$ (recall Eq. \eqref{eq: projection matrices}). Moreover, we can now obtain an equation for the total number of walkers $\x^{tot}:=\x^++\x^-$ if we multiply Equation \eqref{eq:Gremban_random_walk} by the projection $\sqrt{2}\Pi_s$:
\begin{align}  \label{aux:total_random_walk}
    \x^{tot}(t+1)= \sqrt{2} \Pi_s \mc T \x(t) = K^{-1} (A^+\x^++A^-\x^++A^-\x^-+A^+\x^-) = K^{-1}\bar A \x^{tot}(t),
\end{align}
The matrix \( K^{-1}\bar A \) defines the transition probabilities of a random walk on the underlying unsigned graph, so Equation~\eqref{aux:total_random_walk} describes a standard random walk over that structure.

The spectral properties of the matrix $\mathcal{T}$ allow us to relate the dynamics on the expanded graph to those of the original signed network. Since $\mathcal{T}$ is Gremban-symmetric, we know by Theorem \ref{thm: spectral properties of Gremban symmetric matrix} that it admits a block-diagonalization via the change of basis \eqref{eq:change_of_basis}, yielding
\(
\mc U \mc T \mc U^\top = \bar T \oplus T, \ \text{where } \bar T = K^{-1} \bar A.
\)
 Hence the spectrum of \( \mc T \) consists of the union of the spectra of $T$ and $\bar T$. Moreover, each eigenvector of \( \mc T \) is either a symmetric lifting of an eigenvector of \( \bar T \) or an antisymmetric lifting of an eigenvector of \( T \). In particular, the right-eigenvector corresponding to the unit eigenvalue is the all-ones vector $\mathbf{1}_{2n}$, and the left-eigenvector is $(\mathbf{k},\mathbf{k})^\top$, exactly the symmetric lifting of the Perron eigenpair of the unsigned walk $\bar T$. However, since $\mc T \sim \bar T \oplus T$, the unit eigenvalue can have multiplicity two if both $T$ and $\bar T$ admit it, which occurs precisely when the signed graph is structurally balanced:

\begin{proposition}
Let $G$ be a connected signed graph with Gremban-expanded transition matrix $\mc T$. Then, $G$ is balanced if and only if the eigenspace of $\mc T$ corresponding to the eigenvalue $\lambda=1$ is two-dimensional, spanned by the constant vector $\mathbf1_{2n}$ and the antisymmetric lift $(\vartheta,\,-\vartheta)^\top$, where $\vartheta:=(\theta(1),...\theta(n))$ and $\theta$ is the switching function that balances $G$. 
\end{proposition}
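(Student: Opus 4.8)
The plan is to reduce everything to the block decomposition $\mc U\mc T\mc U^\top=\bar T\oplus T$ established just above the proposition, and then read off the eigenvalue-$1$ eigenspace of $\mc T$ from those of $\bar T$ and $T$ separately. Concretely, since $\mc T$ is Gremban-symmetric, its $\lambda=1$ eigenspace splits orthogonally into the symmetric lift of $\ker(\bar T-\mathbf{1}_n)$ and the antisymmetric lift of $\ker(T-\mathbf{1}_n)$, so its dimension is exactly $\dim\ker(\bar T-\mathbf{1}_n)+\dim\ker(T-\mathbf{1}_n)$. First I would dispose of the unsigned factor: because $G$ is connected, the underlying unsigned graph is connected and $\bar T=K^{-1}\bar A$ is an irreducible row-stochastic matrix, so by Perron--Frobenius its eigenvalue $1$ is simple with right eigenvector $\mathbf{1}_n$. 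This always contributes exactly the one symmetric vector $\mathbf{1}_{2n}$ to the eigenspace. Hence the eigenspace is two-dimensional \emph{if and only if} $T$ has eigenvalue $1$, and in that case the second basis vector is the antisymmetric lift of the corresponding eigenvector of $T$.

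The crux is therefore to show that $T=K^{-1}A$ has eigenvalue $1$ exactly when $G$ is balanced, and to identify the eigenvector as $\vartheta$. I would observe that $T\mathbf w=\mathbf w\iff A\mathbf w=K\mathbf w\iff L\mathbf w=\mathbf 0$, where $L=K-A$ is the signed Laplacian. The key calculation is the quadratic form $\mathbf w^\top L\mathbf w=\sum_{uv\in E}\bigl(w_u-\sigma(uv)w_v\bigr)^2\ge 0$, which shows $L$ is positive semidefinite and that $L\mathbf w=\mathbf 0$ forces $w_u=\sigma(uv)w_v$ along every edge. For connected $G$ this propagates a value from a single fixed vertex to all others, and the labeling is consistent around every cycle precisely when the product of signs on each cycle is $+1$, i.e.\ precisely when $G$ is balanced; in that case $\mathbf w$ is determined up to scale and equals $\pm\vartheta$, so $\dim\ker(T-\mathbf{1}_n)\in\{0,1\}$ with value $1$ iff $G$ is balanced. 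This cycle-consistency argument is where I expect the real content to sit; the spectral bookkeeping around it is routine.

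With this in hand I would close both directions. If $G$ is balanced with balancing switching function $\theta$, then $G_\theta$ is all-positive and, writing $D_\theta=\diag(\theta(1),\dots,\theta(n))$, the switching identity gives $A=D_\theta\bar A D_\theta$ and hence $T=D_\theta\bar T D_\theta$; thus $1$ is a simple eigenvalue of $T$ with eigenvector $D_\theta\mathbf{1}_n=\vartheta$, whose antisymmetric lift $(\vartheta,-\vartheta)^\top$ satisfies $\mc T(\vartheta,-\vartheta)^\top=(T\vartheta,-T\vartheta)^\top=(\vartheta,-\vartheta)^\top$, giving a two-dimensional eigenspace spanned by $\mathbf{1}_{2n}$ and $(\vartheta,-\vartheta)^\top$ as claimed. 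Conversely, if the eigenspace is two-dimensional then by the reduction above $T$ must have eigenvalue $1$, so $\ker L\neq\{\mathbf 0\}$ and the quadratic-form argument forces $G$ to be balanced. As an alternative route for the multiplicity count, one may instead invoke Theorem~\ref{thm:balanced_disconnected}: since $\pi$ is a $2$-fold cover (Proposition~\ref{prop:double_cover}) the connected graph $G$ lifts to $\mc G$ with at most two components, the eigenvalue-$1$ multiplicity of the random walk $\mc T$ on $\mc G$ equals the number of components, and balance is equivalent to disconnection; matching the two component-indicator eigenvectors against the symmetric/antisymmetric basis again yields $\mathbf{1}_{2n}$ and $(\vartheta,-\vartheta)^\top$.
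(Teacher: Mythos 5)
Your proof is correct, and the forward direction (switching similarity $T=D_\theta\bar T D_\theta$, lifting $\vartheta$ antisymmetrically) matches the paper's. Where you genuinely diverge is the converse. The paper argues: if $G$ is unbalanced then $\mathcal{G}$ is connected (Theorem~\ref{thm:balanced_disconnected}), so $\mathcal{T}$ is irreducible and nonnegative, and Perron--Frobenius applied directly to the $2n\times 2n$ matrix forces the unit eigenvalue to be simple. You instead stay at the level of the $n\times n$ factors: you reduce the multiplicity count to $\dim\ker(\bar T-\1)+\dim\ker(T-\1)$ via the block decomposition, dispose of $\bar T$ by Perron--Frobenius on the connected unsigned graph, and then characterize $\ker(T-\1)=\ker L$ through the quadratic form $\mathbf w^\top L\mathbf w=\sum_{uv\in E}(w_u-\sigma(uv)w_v)^2$ and a cycle-consistency propagation argument. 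Both are valid. Your route is more self-contained (it does not invoke Theorem~\ref{thm:balanced_disconnected} and it identifies the eigenvector explicitly as part of the argument rather than only in the forward direction), and it makes fully explicit that the eigenspace dimension is \emph{exactly} two in the balanced case, a point the paper leaves slightly implicit. The paper's route is shorter given that the connectivity characterization of balance is already established, and your closing remark correctly identifies it as the alternative via the double cover. The only nitpick is the word ``orthogonally'' in your splitting of the eigenspace: $T$ and $\bar T$ are not symmetric, so one should justify the direct-sum decomposition via the similarity $\mathcal{U}\mathcal{T}\mathcal{U}^\top=\bar T\oplus T$ (as you in fact do) rather than via a spectral theorem; the symmetric and antisymmetric subspaces of $\R^{2n}$ are orthogonal, so the statement is true, but it deserves that one-line justification.
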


\begin{proof}
 Suppose \(G\) is balanced. 
Then there exists a switching matrix 
\(D_\theta=\diag(\theta(1),\dots,\theta(n))\) such that 
\(A=D_\theta \bar A D_\theta^{-1}\), and hence 
\(T=D_\theta \bar T D_\theta^{-1}\sim \bar T\). 
Since \(\bar T\), the transition matrix of a connected unsigned graph, 
has a simple unit eigenvalue with eigenvector \(\mathbf{1}_n\), 
it follows that \(T\) has a corresponding eigenvector 
\(D_\theta \mathbf{1}_n = \vartheta\). 
By Theorem~\ref{thm: spectral properties of Gremban symmetric matrix}, 
the antisymmetric lift \((\vartheta,-\vartheta)^\top\) is then an 
eigenvector of \(\mc T\) with eigenvalue \(1\), 
linearly independent of \(\mathbf{1}_{2n}\). 
 Conversely, suppose \(G\) is unbalanced. 
Then \(\mc G\) is connected 
(Theorem~\ref{thm:balanced_disconnected}), 
so \(\mc T\) is irreducible and entrywise nonnegative. 
By the Perron-Frobenius theorem, the eigenvalue \(1\) is simple and hence its associated eigenspace is one-dimensional.
\end{proof}
To see how this reflects in the original signed dynamics, we project the eigenvectors via $\Pi_s$ and $\Pi_a$.
Projecting $\mathbf1_{2n}$ gives
\(
\Pi_a\,\mathbf1_{2n}=\0,
\
\Pi_s\,\mathbf1_{2n}\propto\mathbf1_n
\). This means that in the unbalanced case, the signed random walk $\x(t+1)=T\x(t)$ has a trivial steady state, where every node has an equal amount of positive and negative walkers, while the unsigned dynamics (governed by $\bar T$) exhibits a uniform steady state. However, in the balanced case, the antisymmetric lift $(\vartheta, -\vartheta)^\top$ survives when projected via $\Pi_a$, yielding a polarized steady state where one of the two balanced factions is net-positive, while the other is net-negative. A balanced signed graph thus results in a reducible random walk.

Finally, we note that random walks on signed graphs also appear in context of random walks on simplicial complexes. As explained in the recent review \cite{eidi2023irreducibilitymarkovchainssimplicial}, in that context, reducibility of the random walk is in correspondence with orientability of the underlying simplicial complex. This is the same phenomenon as captured by the topological result that the Gremban expansion, i.e., the double cover, of a signed graph is disconnected if and only if it is balanced (Theorem \ref{thm:balanced_disconnected}).
%
\subsection{Diffusion dynamics}
The Gremban expansion can also disentangle signed continuous-time diffusive processes. Recall that diffusion on a signed graph is governed by the equation $\dot \x^{net} = -L\x^{net}$, where $L=K-A$ is the signed Laplacian \cite{altafini2012consensus}. The corresponding expanded dynamics are $\dot \x = -\mc L \x$, where $\mc L$ is the Gremban Laplacian matrix \eqref{eq:expanded_laplacian}. Since \(\mc L\) is symmetric and positive semidefinite, \(\dot \x=-\mc L \x\) converges to $\x_\infty \in \ker\mc L$. From Corollary \ref{cor:laplacian_spectrum_expansion}, we can deduce that the null space of $\mc L$ is the union of the symmetric lifting of $\ker(\bar L)$ and the antisymmetric lifting of $\ker(L)$.
Because \(\bar L\) always contains \(\mathbf1_n\) in its kernel, \(\mathbf1_{2n}\) is an eigenvector of \(\mc L\) with zero eigenvalue.  Moreover, if \(G\) is balanced then \(L\) has a zero eigenvector \(\vartheta\), whose antisymmetric lift \( (\vartheta,-\vartheta)^\top\) is a zero eigenvector of $\mc L$ that is linearly independent of \(\mathbf1_{2n}\). Projecting back via \(\Pi_s,\Pi_a\) shows \(\Pi_a\mathbf1_{2n}=\0\), \(\Pi_a(\vartheta,-\vartheta)^\top \propto\vartheta\), \(\Pi_s\mathbf1_{2n}\propto\mathbf1_n\), and \(\Pi_s(\vartheta,-\vartheta)^\top=\0\). In short, just as with the random walk, the Gremban expansion shows that the stationary dynamics of signed diffusion are trivially zero when the graph is unbalanced, but converge to a factional dissensus state when the graph is balanced. This recovers Altafini's result \cite{altafini2012dynamics}, but additionally reveals that the nontrivial stationary state stems from an additional antisymmetric mode in the expanded Laplacian.

\subsection{Communities and factions, revisited}
\begin{figure}
    \centering
    \includegraphics[width=\linewidth]{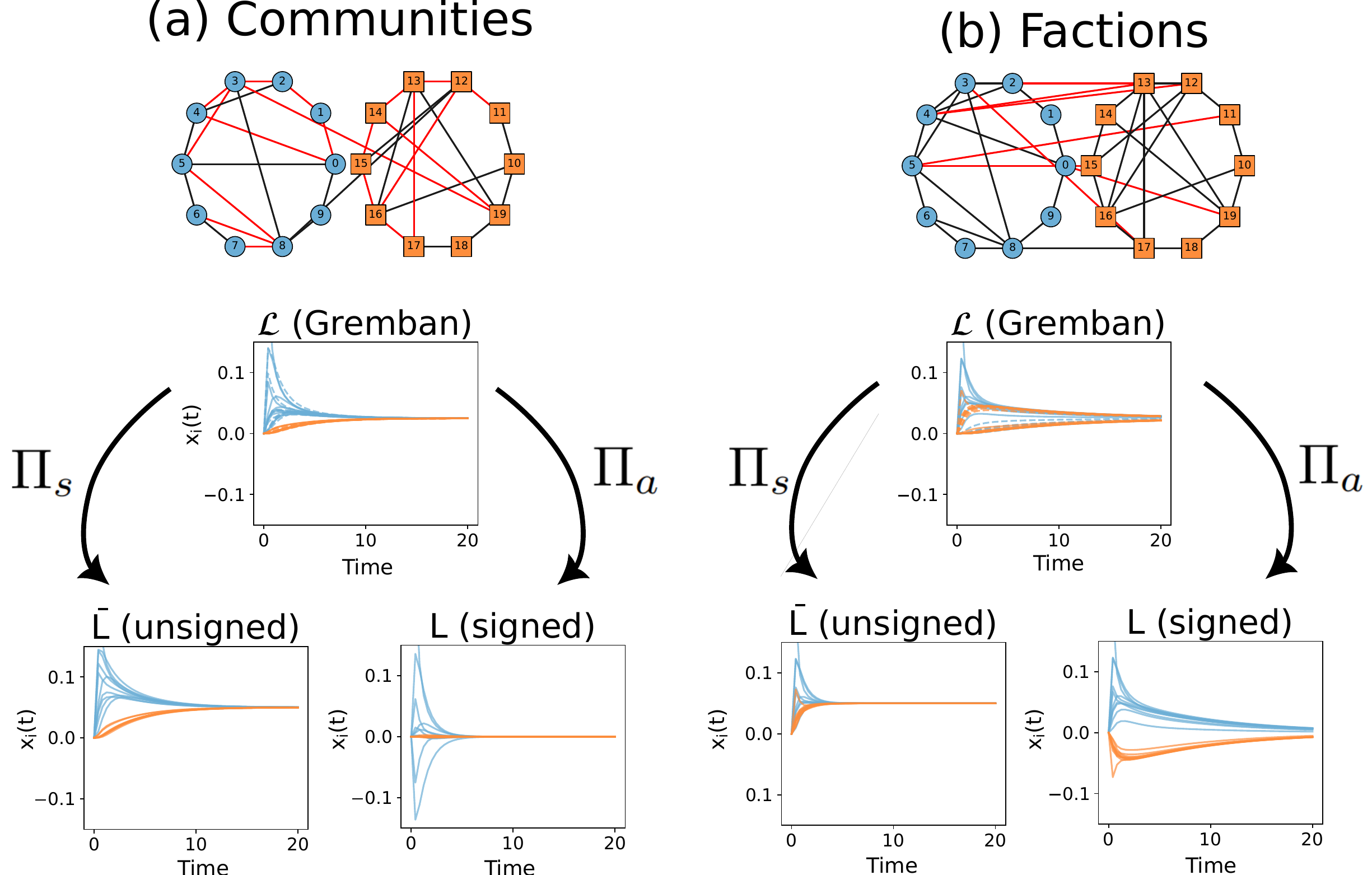}
    \caption{
\textbf{Diffusion dynamics reveal mesoscale structure.}
The upper row shows two synthetic signed networks with 20 nodes, divided into two equally sized groups (blue and orange). The left network has community structure while the right one displays factional structure. Black lines represent positive edges while red lines indicate negative edges. The middle row presents the time evolution of the state $x_v$ of every node in the expanded graph under the dynamics $\dot x = - \mc L x$ and initial condition $x_v(0) = \delta_{v0}$. Trajectories are colored blue or orange depending on the group to which the corresponding node belongs. Solid lines represent the positive-polarity component of each node, while dashed lines represent the negative-polarity component. The bottom row shows instead the signed and unsigned dynamics governed by $\dot \x = - L \x$ and $\dot \x = - \bar L \x$, respectively. Note that only the signed dynamics allow for negative states.
}
\label{fig:dynamics_and_mesoscale_structure}
\end{figure}

We now revisit the algorithm for detecting communities and factions in light of the diffusion dynamics induced by the Laplacian. The dynamics on the Gremban expanded graph, governed by \( \dot{\mathbf{x}} = -\mc L \mathbf{x} \), encompass both the signed dynamics \( \dot{\mathbf{x}} = -L \mathbf{x} \) and the unsigned dynamics \( \dot{\mathbf{x}} = -\bar L \mathbf{x} \). It is well established that diffusive processes can reveal mesoscale structures \cite{lambiotte2021modularity}, as such structures trap diffusive flows and produce long-lived metastable states, which in turn serve as hallmarks of the underlying organization (see Remark \ref{rem:diffusion} in Appendix \ref{app:spectral_clustering}). The same principle holds in Gremban-expanded graphs, where the metastable states reflect both community-like and factional structures.

To illustrate this, in Figure \ref{fig:dynamics_and_mesoscale_structure} we constructed two synthetic networks with 20 nodes, divided into two equally-sized groups. In panel (a), we show a graph where the two groups are sparsely connected-only two edges link them-and edge signs are assigned randomly, with a 60\% probability of being positive. This setup results in a {community-like} structure. The dynamics in the Gremban-expanded graph result in two long-lived metastable states that eventually converge to the stationary state. Pairs of nodes with opposite
polarities converge to the same metastable state, indicating the presence of communities. This is further confirmed by the projected dynamics: the unsigned dynamics exhibit a metastable plateau, while the signed dynamics decay quickly to the trivial steady state $\mathbf{x}\rightarrow\mathbf{0}$. 

In panel (b) we show a similar network, but with significantly more inter-group connections. Now, edge signs follow a structured pattern: all intra-group edges are positive, while inter-group edges are negative except for one edge, added to unbalance the graph and force convergence to $\mathbf{x}\rightarrow\mathbf{0}$.
In this case, the Gremban-expanded dynamics again exhibit a metastable state. However, in this case the nodes with positive polarity in one group converge to the same trajectory as those with negative polarity in the other group. This pattern reflects a factional structure. The corresponding unsigned dynamics hence decays rapidly to zero, while the signed one now exhibits a long-lived metastable state.

In summary, the Gremban-expanded dynamics unify the signed and unsigned diffusive processes and capture both community- and faction-like structures. While here we have focused on diffusion, the same principle can be extended to more complex dynamical processes, such as biased random walks or synchronization dynamics, to uncover more subtle forms of structure.
%
\section{Conclusion}\label{sec:conclusion}
We have presented a unified framework for analyzing signed networks by leveraging the Gremban expansion, a lifting technique that transforms signed graphs into unsigned ones while preserving their structural and spectral properties. Our main contribution lies in establishing a principled correspondence between 
symmetries of groups of nodes and edges in the expanded graph and community and factional structures 
in the original signed network. This insight allows us to reinterpret signed spectral clustering through the lens of symmetry classes: symmetric eigenmodes encode community structure, while antisymmetric modes reveal factional divisions.

Our framework provides a theoretical foundation that opens the door to the use of standard spectral clustering algorithms  for structure detection in signed networks.  Our approach explicitly distinguishes between frustration-based and density-based mesoscale patterns, bridging a methodological gap in the existing literature, which has focused on structurally balanced settings or treated signed edges as minor perturbations.

 Future work could explore how the
Gremban expansion reflects structural features beyond balance and mesoscale organization. For instance, the relationship between node centrality in the expanded graph and in the original signed network remains unclear. Moreover, as shown in this work, the Gremban expansion arises naturally in the study of dynamical processes on signed graphs. This suggests that the framework could support the extension of nonlinear dynamics (such as synchronization) to networks with signed interactions. In turn, this may offer a new perspective for uncovering hidden topological features of signed networks through their dynamics. 
Finally, since networks with complex-weighted edges can be viewed as a natural generalization of signed networks \cite{BottcherPorter2024,tian2024structural}, a promising direction for future research is to explore Gremban expansions within this broader setting.
In short, we believe this framework lays the groundwork for a more comprehensive theory of signed network analysis,  enabling the use of classical tools  while properly accounting for antagonistic interactions.

\section*{Acknowledgements}
FDD thanks funding from Program for Units of Excellences Maria de Maeztu (CEX2021-001164-M/10.13039/501100011033) and project APASOS (PID2021-122256NB-C22).
RL acknowledges support from the EPSRC Grants EP/V013068/1, EP/V03474X/1, and EP/Y028872/1.

\bibliographystyle{plain}
\bibliography{bib} 

\newpage
\appendix
%
\section{Additional theorems and algorithms} \label{app:extra_theorems}

\begin{proposition}  \label{prop:GS_cut_set}
    Let $\mc G$ be the Gremban expansion of a signed graph and $V(\mc G)=\mc U_1 \cup \mc U_2 $ a Gremban-symmetric partition of the node set. Then the corresponding cut-set $C(\mc U_1, \mc U_2)$ is Gremban-symmetric. 
\end{proposition}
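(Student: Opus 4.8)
The plan is to exploit that the Gremban involution $\eta$ is a graph automorphism of $\mc G$, which was already established in Proposition \ref{prop:automorphically_equiv}. The core of the argument is the observation that any automorphism is equivariant with respect to the cut-set operation: for any node subsets $S,T \subseteq V(\mc G)$ one has $\eta(C(S,T)) = C(\eta(S), \eta(T))$. First I would verify this identity directly from the definitions. An edge $(u,v)$ lies in $C(S,T)$ precisely when it has one endpoint in $S$ and the other in $T$; since $\eta$ is an adjacency-preserving bijection, $(u,v) \in E(\mc G)$ if and only if $(\eta(u),\eta(v)) \in E(\mc G)$, and the membership conditions transfer verbatim into $\eta(u) \in \eta(S)$ and $\eta(v) \in \eta(T)$. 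Hence $\eta$ carries the cut-set of $(S,T)$ bijectively onto the cut-set of $(\eta(S),\eta(T))$.

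Next I would invoke the hypothesis that the bipartition $V(\mc G) = \mc U_1 \cup \mc U_2$ is Gremban-symmetric. Since $\eta$ is an involution that permutes the blocks of the partition, for a two-block partition there are only two possibilities: either $\eta(\mc U_1) = \mc U_1$ and $\eta(\mc U_2) = \mc U_2$, or $\eta(\mc U_1) = \mc U_2$ and $\eta(\mc U_2) = \mc U_1$. In both cases, using the symmetry of the cut-set in its two arguments, namely $C(\mc U_1,\mc U_2) = C(\mc U_2,\mc U_1)$, one obtains $C(\eta(\mc U_1),\eta(\mc U_2)) = C(\mc U_1,\mc U_2)$.

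Combining the two observations then yields $\eta(C(\mc U_1,\mc U_2)) = C(\eta(\mc U_1),\eta(\mc U_2)) = C(\mc U_1,\mc U_2)$, which is exactly the assertion that the cut-set is Gremban-symmetric.

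I expect no serious obstacle here: the statement is essentially a one-line consequence of the equivariance of cut-sets under automorphisms, together with the elementary case distinction available for a two-block symmetric partition. The only point requiring minor care is the equivariance identity $\eta(C(S,T)) = C(\eta(S),\eta(T))$, which must be checked against the definition of the cut-set and the automorphism property of $\eta$; the remainder is bookkeeping. (This is also precisely the opening line used in the proof of Theorem \ref{thm: cuts become cuts or frustration sets}.)
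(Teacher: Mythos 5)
Your proof is correct and follows essentially the same route as the paper: the paper's own proof of Proposition \ref{prop:GS_cut_set} performs the same two-case analysis on whether $\eta$ fixes or swaps the blocks, verifying edge by edge that $\eta$ maps the cut-set into itself, while your equivariance identity $\eta(C(S,T))=C(\eta(S),\eta(T))$ is just a slightly more packaged form of that elementwise check (and is indeed the exact one-liner the paper uses at the start of the proof of Theorem \ref{thm: cuts become cuts or frustration sets}). No gaps.
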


\begin{proof}
Let $(u^\chi,v^\psi)\in C(\mc U_1,\mc U_2)$ be an edge with $u^\chi\in\mathcal{U}_1,u^\psi\in\mathcal{U}_2$. Since the partition is Gremban-symmetric, either $\eta(\mc U_1)=\mc U_1,\ \eta(\mc U_2)=\mc U_2$ or $\eta(\mc U_1)=\mc U_2,\ \eta(\mc U_2)=\mc U_1$. In the first case $u^{-\chi}\in\mc U_1$, $v^{-\psi}\in\mc U_2$ and in the second, $u^{-\chi}\in\mc U_2$, $v^{-\psi}\in\mc U_1$. Hence, in both cases $\eta(u^\chi,v^\psi)\in C(\mathcal{U}_1,\mathcal{U}_2)$ and thus $\eta(C(\mathcal{U}_1,\mathcal{U}_2))\subseteq C(\mathcal{U}_1,\mathcal{U}_2)$. Since $\eta^2=\mathrm{id}$, it follows that $\eta$ fixes the cut-set and thus that $C(\mathcal{U}_1,\mathcal{U}_2)$ is Gremban-symmetric.
\end{proof}

\begin{proposition}  \label{prop:injective_surjective_nonlinear}
The Gremban expansion \(\mc M \colon \R^{n\times n} \to \R^{2n \times 2n}\) defined by Equation~\eqref{eq:gremban_expansion} is injective, non-linear, and not surjective.
\end{proposition}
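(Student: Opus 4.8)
The plan is to handle the three assertions separately, but first to pin down the only genuinely subtle point: the map $\mathcal{M}$ is well-defined as a function of $M$ alone precisely because we use the canonical entrywise decomposition $M^+=\max(M,0)$ and $M^-=\max(-M,0)$. With this convention both blocks are entrywise nonnegative and have disjoint support (for each index pair at most one of $M^+$, $M^-$ is nonzero), and these two structural facts are exactly what drives injectivity and non-surjectivity.

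For injectivity I would exhibit a left inverse rather than argue directly. By Proposition~\ref{prop: algebraic projections} we already have $\Pi_a\,\mathcal{M}(M)\,\Pi_a^\top = M$, so the assignment $\mathcal{M}\mapsto \Pi_a\mathcal{M}\Pi_a^\top$ recovers $M$ from its expansion; equivalently, one simply reads off $M^+$ and $M^-$ as the top-left and top-right blocks and forms $M=M^+-M^-$. A map admitting a left inverse is injective, which settles the first claim immediately.

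For non-linearity a single counterexample suffices, and the natural one exploits that the positive/negative-part decomposition is itself nonlinear. Taking $n=1$ and $M=(1)$, I would compute $\mathcal{M}(-M)=\left(\begin{smallmatrix}0&1\\1&0\end{smallmatrix}\right)$ while $-\mathcal{M}(M)=\left(\begin{smallmatrix}-1&0\\0&-1\end{smallmatrix}\right)$. Since these disagree, $\mathcal{M}$ fails homogeneity and is therefore not linear (failure of additivity, e.g.\ with $M_1=(1)$ and $M_2=(-1)$, works equally well).

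For non-surjectivity I would observe that every matrix in the image is Gremban-symmetric, i.e.\ of block form $\left(\begin{smallmatrix}A&B\\B&A\end{smallmatrix}\right)$ by Proposition~\ref{prop: characterization Gremban-symmetric matrix}, and in addition has nonnegative, disjointly supported blocks. Any $2n\times 2n$ matrix violating even one of these constraints lies outside the image, so the image is a proper subset of $\R^{2n\times 2n}$. A concrete witness for $n=1$ is $\left(\begin{smallmatrix}1&0\\0&2\end{smallmatrix}\right)$, which is not of Gremban block form and hence is not $\mathcal{M}(M)$ for any $M$. None of these steps presents a real obstacle; the only thing to be careful about is keeping the structural constraints consistent with the fixed canonical decomposition.
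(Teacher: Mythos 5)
Your proposal is correct and follows essentially the same route as the paper: injectivity from the fact that the blocks determine $M^+$, $M^-$ and hence $M$ (your left-inverse via $\Pi_a\mathcal{M}\Pi_a^\top=M$ is an equivalent packaging of the paper's direct block comparison), non-linearity by a $1\times1$ counterexample (you test homogeneity where the paper tests additivity with $M_1=1$, $M_2=-1$), and non-surjectivity by exhibiting a matrix violating the forced block structure (your witness $\bigl(\begin{smallmatrix}1&0\\0&2\end{smallmatrix}\bigr)$ versus the paper's $\bigl(\begin{smallmatrix}1&0\\0&0\end{smallmatrix}\bigr)$). Your explicit remark that the statement only makes sense once the canonical decomposition $M^+=\max(M,0)$, $M^-=\max(-M,0)$ is fixed is a worthwhile clarification that the paper leaves implicit.
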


\begin{proof}
\emph{(1) Injectivity.} 
Suppose \(\mc M(M_1) = \mc M(M_2)\). Then,
\( \left(
\begin{smallmatrix}
M_1^+ & M_1^- \\
M_1^- & M_1^+
\end{smallmatrix} \right)
= \left(
\begin{smallmatrix}
M_2^+ & M_2^- \\
M_2^- & M_2^+
\end{smallmatrix}\right).
\)
It follows that \(M_1^+ = M_2^+\) and \(M_1^- = M_2^-\), so \(M_1 = M_1^+ - M_1^- = M_2^+ - M_2^- = M_2\). Thus, \(\mc M\) is injective.

\emph{(2) Nonlinearity.} Set $M_1=1, M_2 = -1$, and define $M_i^+:=\max(M_i,0), M_i^-:=\max(-M_i,0)$.
Then \(M_1 + M_2 = 0\), so \(\mc M(M_1+M_2) = \mc M(0) = 0\). On the other hand,
\(
\mc M(M_1) = \left(\begin{smallmatrix} 1 & 0 \\ 0 & 1 \end{smallmatrix} \right), 
\mc M(M_2) = \left(\begin{smallmatrix} 0 & 1\\ 1 & 0 \end{smallmatrix} \right),
\)
and hence
\[
\mc M(M_1) + \mc M(M_2) = 
\begin{pmatrix}
1 & 1 \\
1 & 1
\end{pmatrix}
\neq  \mc M(M_1 + M_2).
\]

\emph{(3) Non-surjectivity.}  Consider the matrix \( \mc M = \left(\begin{smallmatrix} 1 & 0 \\ 0 & 0 \end{smallmatrix} \right) \), and suppose \(\mc M = \mc M(M)\) for some \(M \in \R^{n \times n}\). This would require \(M^+ = 1 \) and \( M^+ = 0\) simultaneously, a contradiction. Thus, $\mc M$ is not surjective. 
\end{proof}

\begin{theorem}
\label{thm:generalized_fiedler_cut_projection}
Let $G$ be a signed graph with Gremban expansion $\mathcal{G}$, let $\psi$ be an eigenvector of the Gremban Laplacian matrix $\mathcal{L}$, possibly with entries equal to zero, and define the bipartition $V(\mathcal{G})=\mathcal{U}_1\cup\mathcal{U}_2$ as follows:
\begin{align*}
   &\mc U_1 := \{ v^\chi \in V(\mc G) : \psi^\chi(v) \geq 0 \}, \ \ \mc U_2 := \{ v^\chi \in V(\mc G) : \psi^\chi(v) < 0 \} \ \text{if $\psi$ symmetric}. \\
   &\mc U_1 := \{ v^\chi \in V(\mc G) : \psi^\chi(v) > 0 \} \cup \{ v^+ : \psi^+(v) = 0 \}, \ \  \mc U_2 := \{ v^\chi : \psi^\chi(v) < 0 \} \cup \{ v^- : \psi^-(v) = 0 \} \ \text{if $\psi$ antisymm}. 
\end{align*}
\begin{itemize}
    \item If $\psi$ is symmetric, then the cut-set $C(\mathcal{U}_1,\mathcal{U}_2)$ is Gremban-symmetric and projects to a cut-set in $G$. 
    \item If $\psi$ is antisymmetric, then $C(\mathcal{U}_1,\mathcal{U}_2)$ is Gremban-symmetric and projects to a frustration set in $G$.
\end{itemize}
\end{theorem}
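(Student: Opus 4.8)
The plan is to reduce the entire statement to Theorem~\ref{thm: cuts become cuts or frustration sets}. Once I verify that the bipartition $V(\mc G)=\mc U_1\cup\mc U_2$ is Gremban-symmetric with the appropriate type of symmetry---fixed blocks ($\eta(\mc U_i)=\mc U_i$) in the symmetric case and swapped blocks ($\eta(\mc U_1)=\mc U_2$) in the antisymmetric case---both the Gremban-symmetry of the cut-set and the identification of its projection as a cut-set or frustration set follow at once from that theorem. The only new content relative to Theorem~\ref{thm:fiedler_cut_projection} is checking that the tie-breaking rule for zero entries respects the involution, so the substance of the argument concerns the handling of nodes with $\psi^\chi(v)=0$.

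First I would record the two algebraic facts driving everything, coming from Corollary~\ref{cor:laplacian_spectrum_expansion} (see also Remark~\ref{rmk:involution_symmetry}): if $\psi$ is symmetric then $\psi^+(v)=\psi^-(v)$ for every $v$, and if $\psi$ is antisymmetric then $\psi^+(v)=-\psi^-(v)$ for every $v$. The crucial consequence of antisymmetry is that $\psi^+(v)=0$ forces $\psi^-(v)=0$: zeros always occur in polarity-pairs. This pair-zero property is exactly what makes the asymmetric tie-breaking rule (sending $v^+$ to $\mc U_1$ and $v^-$ to $\mc U_2$) internally consistent and well-defined, and it is the step where a naive rule would break the symmetry.

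For the symmetric case I would show that the two polarities of each node always land in the same block. If $\psi^+(v)=\psi^-(v)\ge 0$ then both $v^+,v^-\in\mc U_1$, and if $\psi^+(v)=\psi^-(v)<0$ then both $v^+,v^-\in\mc U_2$. Hence $\eta$ preserves each block, $\eta(\mc U_1)=\mc U_1$ and $\eta(\mc U_2)=\mc U_2$, placing us in the first case of Theorem~\ref{thm: cuts become cuts or frustration sets} and yielding a projected cut-set.

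For the antisymmetric case I would split on whether $\psi^\chi(v)$ vanishes. When $\psi^\chi(v)\neq 0$, antisymmetry gives $\psi^{-\chi}(v)=-\psi^\chi(v)$ of opposite sign, so $v^\chi$ and $v^{-\chi}$ are placed in opposite blocks and $\eta$ swaps them. When $\psi^\chi(v)=0$, the pair-zero property forces both $v^+$ and $v^-$ to be zero entries, and the assignment rule puts $v^+\in\mc U_1$, $v^-\in\mc U_2$; since $\eta(v^+)=v^-$ and $\eta(v^-)=v^+$, the involution again swaps this pair. Combining the two subcases gives $\eta(\mc U_1)=\mc U_2$ and $\eta(\mc U_2)=\mc U_1$, the second case of Theorem~\ref{thm: cuts become cuts or frustration sets}, hence a projected frustration set. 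The main (and essentially only) obstacle is this bookkeeping for zero entries; once the pair-zero fact is established the verification is routine, but it is precisely the point at which the simpler hypothesis of Theorem~\ref{thm:fiedler_cut_projection} is relaxed.
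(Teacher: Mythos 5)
Your proposal is correct and follows essentially the same route as the paper: verify that the stated bipartition is Gremban-symmetric with fixed blocks in the symmetric case and swapped blocks in the antisymmetric case (using the pair-zero observation to handle the tie-breaking rule), then invoke Proposition~\ref{prop:GS_cut_set} and Theorem~\ref{thm: cuts become cuts or frustration sets}. The paper carries out the antisymmetric check in a single set-identity computation rather than your explicit zero/nonzero case split, but the substance is identical.
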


\begin{proof}
If $\psi$ is symmetric, then $\psi^{\chi}(v)=\psi^{-\chi}(v)$, so
\[\eta(\mc U_1)=\{v^\chi:\psi^{-\chi}(v)\geq 0\} = \{v^\chi:\psi^{\chi}(v)\geq 0\} = \mc U_1.\] 
If $\psi$ is antisymmetric, then $\psi^{\chi}(v)=-\psi^{-\chi}(v)$, so
\[\eta(\mc U_1)=\left\{ v^\chi : \psi^{-\chi}(v) >
   0 \right\} \cup \left\{ v^-  : \psi^+(v) = 0 \right\}=\left\{ v^\chi  : \psi^\chi(v) < 0 \right\} \cup \left\{ v^-  : \psi^-(v) = 0 \right\}=\mc U_2.\]
Thus in both cases the partition $\mc U_1\cup\mc U_2$ is Gremban-symmetric. By Proposition~\ref{prop:GS_cut_set}, the induced cut-set $C(\mc U_1,\mc U_2)$ is also Gremban-symmetric, and Theorem~\ref{thm: cuts become cuts or frustration sets} identifies its projection in $G$ as a cut-set in the symmetric case and a frustration set in the antisymmetric case.
\end{proof}

\begin{proposition} \label{prop:clustering_balanced_connected}
Let $G$ be a signed graph, \(\mc L\) its Gremban-expanded Laplacian, $\psi_2$ its Fiedler eigenvector and $\mathbf{c}=\Pi_s\,\rm{sign}(\psi_2)$ and $\mathbf{f}=\Pi_a\,\rm{sign}(\psi_2)$. Then:
\begin{enumerate}
  \item If \(G\) is balanced and connected, then \(\psi_2\) is antisymmetric, \(\mathbf{c} = \0\), and \(\{\,v : f(v) = +\sqrt{2}\}\), \(\{\,v :  f(v) = -\sqrt{2}\}\) are the balanced factions.
  \item If \(G\) is unbalanced and disconnected, then \(\psi_2\) is symmetric, \(\mathbf{f} = \0\), and \(\{\,v :  c(v) = +\sqrt{2}\}\), \(\{\,v :  c(v) = -\sqrt{2}\}\) are the connected components.
\end{enumerate}
\end{proposition}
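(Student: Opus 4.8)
The plan is to treat both cases through the spectral decomposition of Corollary~\ref{cor:laplacian_spectrum_expansion}, which writes $\spec(\mathcal{L})=\spec(\bar L)\cup\spec(L)$ and identifies the symmetric eigenvectors of $\mathcal{L}$ as lifts of eigenvectors of $\bar L$ and the antisymmetric ones as lifts of eigenvectors of $L$. In each case I would locate the two smallest eigenvalues of $\mathcal{L}$, argue that the trivial constant mode $\mathbf 1_{2n}$ is $\psi_1$, and then read off the symmetry type of $\psi_2$; a one-line application of the projections $\Pi_s,\Pi_a$ from \eqref{eq: projection matrices} to $\operatorname{sign}(\psi_2)$ then yields $\mathbf c,\mathbf f$ and their level sets.

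For (1), since $G$ is connected the unsigned Laplacian $\bar L$ has a simple zero eigenvalue with eigenvector $\mathbf 1_n$ and strictly positive algebraic connectivity $\lambda_2(\bar L)$. Because $G$ is balanced, switching by the balancing function $\theta$ makes every edge positive, so $\bar L = D_\theta L D_\theta^{-1}$ with $D_\theta=\diag(\theta(1),\dots,\theta(n))$ (cf.\ Proposition~\ref{prop:switching_permutation}); hence $L$ is similar to $\bar L$, has a simple zero eigenvalue with eigenvector $\vartheta=D_\theta\mathbf 1_n$, and the same first positive eigenvalue. Thus the eigenvalue $0$ of $\mathcal{L}$ has multiplicity exactly two, spanned by the symmetric lift $\mathbf 1_{2n}$ and the antisymmetric lift $(\vartheta,-\vartheta)^\top$, while every other eigenvalue is at least $\lambda_2(\bar L)>0$. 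Discarding the constant as $\psi_1$, the Fiedler vector is $\psi_2=(\vartheta,-\vartheta)^\top$, which is antisymmetric. Since $\vartheta$ has entries $\pm1$ we get $\operatorname{sign}(\psi_2)=(\vartheta,-\vartheta)^\top$, so $\mathbf c=\Pi_s(\vartheta,-\vartheta)^\top=\0$ and $\mathbf f=\Pi_a(\vartheta,-\vartheta)^\top=\sqrt2\,\vartheta$, and the level sets $\{v:f(v)=\pm\sqrt2\}=\{v:\theta(v)=\pm1\}$ are precisely the balanced factions.

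For (2), I would first use Theorem~\ref{thm:balanced_disconnected} componentwise: each unsigned component of $G$ lifts to two copies in $\mathcal{G}$ if it is balanced and stays connected if it is unbalanced. In the prototypical disconnected-and-unbalanced configuration --- two unsigned components $G_1,G_2$, each unbalanced --- the kernel of $\bar L$ is two-dimensional, spanned by the component indicators $\mathbf 1_{G_1},\mathbf 1_{G_2}$, whereas $L$ has trivial kernel because no component is balanced. Consequently the zero eigenspace of $\mathcal{L}$ is two-dimensional and entirely symmetric, spanned by $\mathbf 1_{2n}$ and the symmetric lift of $\mathbf 1_{G_1}-\mathbf 1_{G_2}$. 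Removing the constant mode, $\psi_2$ is this symmetric lift, so $\operatorname{sign}(\psi_2)=(\mathbf s,\mathbf s)^\top$ with $\mathbf s=+1$ on $G_1$ and $-1$ on $G_2$; the projections then give $\mathbf f=\0$ and $\mathbf c=\sqrt2\,\mathbf s$, whose level sets $\{v:c(v)=\pm\sqrt2\}$ are exactly the two connected components.

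The main obstacle is the degeneracy of the eigenvalue $0$: the Fiedler vector is defined only up to a choice inside the zero eigenspace, so the assertion that $\psi_2$ has a definite symmetry type is really a statement about that eigenspace rather than about a single canonical vector. The crux is therefore to pin down the symmetry content of $\ker\mathcal{L}$ after removing $\mathbf 1_{2n}$: in case (1) it is one-dimensional and antisymmetric, and in case (2) one-dimensional and symmetric, which forces $\psi_2$ in each case. I would flag that this clean dichotomy relies on the precise hypotheses, since a disconnected graph mixing balanced and unbalanced components would contribute both a symmetric kernel mode (from $\bar L$) and an antisymmetric one (from $L$), so $\psi_2$ would no longer be canonically of a single type; the statement should thus be read for the two extremal configurations that realize the balanced-connected and unbalanced-disconnected limits.
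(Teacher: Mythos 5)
Your proof follows essentially the same route as the paper's: identify $\ker\mathcal{L}$ via the symmetric/antisymmetric lifts of $\ker\bar L$ and $\ker L$, observe it is two-dimensional in each case, take the non-constant kernel vector as $\psi_2$, and project with $\Pi_s,\Pi_a$. The only cosmetic difference is that in case (2) the paper uses the properly orthogonalized kernel vector $\mathbf{u}=\sqrt{n_2/(n_1 n)}\,\mathbb{1}_{U_1}-\sqrt{n_1/(n_2 n)}\,\mathbb{1}_{U_2}$ rather than $\mathbb{1}_{G_1}-\mathbb{1}_{G_2}$ (which matters for orthogonality to the constant but not for the sign pattern), and your closing caveat about graphs mixing balanced and unbalanced components is a legitimate observation that the paper's proof also implicitly relies on.
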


\begin{proof}
If $G$ is balanced and connected, the signed Laplacian \( L \) admits an eigenvector \(\vartheta \in \{\pm1\}^n \) associated with the zero eigenvalue, where the signs of each entry indicate the balanced faction of the corresponding node~\cite{kunegis2010spectral}. Its antisymmetric lift, \( (\vartheta, -\vartheta)^\top \), is an eigenvector of \( \mc L \) with eigenvalue zero. Since \( G \) is connected, the only other zero eigenvector of \( \mc L \) is the constant vector \( \mathbf{1}_{2n} \). Thus, \( (\vartheta, -\vartheta)^\top \) is the Fiedler eigenvector. When projected, it yields \( \c = \0 \) and \( \f = \sqrt{2}\vartheta \), confirming that the dominant mesoscale structure is factional. Moreover, the signs of $\f$ indicate the balanced factions of $G$.

Now suppose $G$ is unbalanced and has two connected components, \( U_1 \) and \( U_2 \). Then the indicator vectors \( \mathbf{1}_{U_1} \) and \( \mathbf{1}_{U_2} \) span the kernel of \( \bar L \), which has dimension two. From these, we can build an orthonormal basis for the kernel containing the normalized constant vector \( \mathbf{1}_n / \sqrt{n} \). The remaining vector is
\(
\mathbf{u} = \sqrt{\frac{n_2}{n_1 n}} \, \mathbf{1}_{U_1} - \sqrt{\frac{n_1}{n_2 n}} \, \mathbf{1}_{U_2},
\)
where \( n_i = |U_i| \) and \( n = n_1 + n_2 \). Since $G$ is balanced, $L$ is non-singular, hence the kernel of $\mc L$ is two-dimensional and spanned by the lifts \( \mathbf{1}_{2n} \) and \( (\mathbf{u}, \mathbf{u})^\top \). The Fiedler vector is then \( (\mathbf{u}, \mathbf{u})^\top \), and its projection satisfies \( \f = \0 \) and
\(
\c = \sqrt{2} \sgn(\mathbf{u}).
\)
Since the entries of \( \mathbf{u} \) are constant over \( U_1 \) and \( U_2 \) with opposite signs, \( \c \) partitions \( V(G) \) according to the connected components, as claimed.
\end{proof}

\begin{proposition}
Let \( G  \) be a signed graph with \( k \) connected components, \( \ell \) of which are structurally balanced, and \( \mc L \) be the Laplacian of its Gremban expansion. Then the kernel of \( \mc L \) has dimension \( k + \ell \).
\end{proposition}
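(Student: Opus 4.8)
The plan is to reduce the computation to the two constituent Laplacians $\bar L$ and $L$ via the block-diagonalization of Corollary~\ref{cor:laplacian_spectrum_expansion}. Since $\mc L$ is similar to $\bar L\oplus L$, the two matrices share the same spectrum with multiplicities, so the multiplicity of the eigenvalue $0$ adds: $\dim\ker\mc L=\dim\ker\bar L+\dim\ker L$. It therefore suffices to show that these two summands equal $k$ and $\ell$, respectively.

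First I would treat $\bar L=K-\bar A$, the ordinary unsigned Laplacian of the graph obtained from $G$ by forgetting all edge signs. Connectivity is insensitive to signs, so this underlying graph has exactly the same $k$ connected components as $G$; by the classical fact that the kernel of an unsigned graph Laplacian is spanned by the indicator vectors of its connected components, we get $\dim\ker\bar L=k$.

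Next I would compute $\dim\ker L$ for the signed Laplacian $L=K-A$. The key structural observation is that $L$ is block-diagonal with respect to the partition of $V(G)$ into connected components, so $\ker L$ splits as a direct sum over components and it suffices to analyse the block $L_C$ on each connected component $C$. For this I would use the quadratic form $x^\top L x=\sum_{uv\in E}\bigl(x_u-\sigma(uv)\,x_v\bigr)^2$, which shows $L$ is positive semidefinite and that $x\in\ker L_C$ forces $x_u=\sigma(uv)\,x_v$ along every edge of $C$. On a connected component this relation propagates a single value along any path and determines $x$ up to one global scalar, provided the propagation is consistent around cycles---and this consistency is precisely balance. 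Consequently a balanced component contributes a one-dimensional kernel, spanned by the switching vector $\vartheta_C=(\theta(v))_{v\in C}$ where $\theta$ balances $C$ (note $\sigma(uv)\,\theta(v)=\theta(u)$), while an unbalanced component makes $L_C$ positive definite and contributes nothing. Summing over the $\ell$ balanced components gives $\dim\ker L=\ell$.

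The main obstacle is this last step, namely verifying that balance is exactly the obstruction to a nonzero null vector: one must check that the path-propagated values of a candidate kernel vector are well defined if and only if the product of edge signs around every cycle is $+1$. Once this is settled, combining $\dim\ker\bar L=k$ and $\dim\ker L=\ell$ through the direct-sum decomposition yields $\dim\ker\mc L=k+\ell$, as claimed.
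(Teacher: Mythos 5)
Your proposal is correct and follows essentially the same route as the paper: reduce via $\mc L\sim\bar L\oplus L$ so that kernel dimensions add, get $k$ from the unsigned Laplacian's component indicators, and $\ell$ from the fact that each signed-Laplacian block is singular precisely when its component is balanced. The only difference is that you supply the proof of that last fact (via the quadratic form $x^\top Lx=\sum_{uv\in E}(x_u-\sigma(uv)x_v)^2$ and cycle-consistency of the propagated values), whereas the paper simply asserts it; your argument for it is sound.
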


\begin{proof}
By Corollary \ref{cor:laplacian_spectrum_expansion}, the spectrum of \( \mc L \) is the union of the spectra of the \( L \) and \( \bar L \), hence $\dim \ker \mc L = \dim \ker L + \dim \ker \bar L$. Because $G$ has $k$ components, both $L$ and $\bar L$ can be block-diagonalized, with each block corresponding to the nodes of one component. For each block, $\bar L$ possesses an eigenvector with zero eigenvalue, whose entries are constant and equal to one on that component; hence $\dim \ker \bar L = k$. In addition, the corresponding block of $L$ is singular if and only if the component is balanced, giving $\dim \ker L = \ell$. Therefore, $\dim \ker \mc L = k+\ell$.
\end{proof}

\begin{algorithm}
\caption{Detecting $k$ clusters with the Gremban Laplacian}
\label{alg:community_faction_detection_multi}
\begin{algorithmic}[1]
\State \textbf{Input:} Signed graph \( G = (V,E,\sigma) \), number of clusters \( k \)
\State \textbf{Output:} Partition of \( V \) into communities and/or pairs of opposing factions
\State Compute the Laplacian matrix \( \mc L = \mc K - \mc A \) of $\mc G$ 
\State Compute the first \(  k-1 \) nontrivial eigenvectors of \( \mathcal{L} \) to form \( \mathcal{Y} \in \mathbb{R}^{2n \times (k-1)} \)
\State Cluster the rows of \( \mathcal{Y} \) using \( k \)-means to obtain \( \mathcal{U}_1, \dots, \mathcal{U}_k \)
\For{each cluster \( \mathcal{U}_i \)}
    \If{\( \eta(\mathcal{U}_i) = \mathcal{U}_i \)}
        \State Project \( \mathcal{U}_i \) to \( G \) using $\pi_+$ and interpret as a {community}
    \ElsIf{\( \eta(\mathcal{U}_i) = \mathcal{U}_j \) for some \( j \ne i \)}
         \State Project \( \mathcal{U}_i \) and \( \mathcal{U}_j \) to \( G \) using \( \pi_+ \)
        \State Interpret \( \pi_+(\mathcal{U}_i) \) and \( \pi_+(\mathcal{U}_j) \) as a pair of opposing {factions}
    \EndIf
\EndFor
\end{algorithmic}
\end{algorithm}
%
\section{Example of a Gremban expansion}  \label{app:triangle_example}
\begin{figure}[h!]
    \centering
    \includegraphics[width=0.6\linewidth]{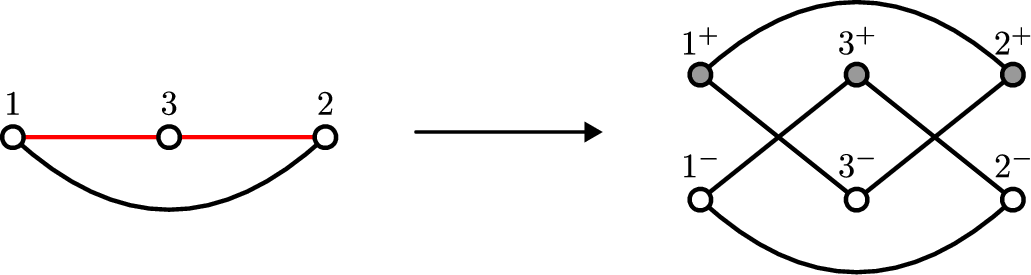}
    \caption{A triangle graph with one positive edge $(1,2)$ and two negative edges $(1,3)$ and $(2,3)$, and its Gremban expansion.}
    \label{fig:triangle_gremban}
\end{figure}

As a concrete example of the use of the Gremban expansion, consider the signed triangle graph shown in Figure \ref{fig:triangle_gremban} with nodes \(\{1,2,3\}\), edges $\{12,23,13\}$ and signs
\(
\sigma(12)=+1, \sigma(23)=-1, \sigma(13)=-1.
\)
Its signed adjacency matrix is
\[
A =
\begin{pmatrix}
0 & 1 & -1\\
1 & 0 & -1\\
-1 & -1 & 0
\end{pmatrix} = 
\begin{pmatrix}
0 & 1 & 0\\
1 & 0 & 0\\
0 & 0 & 0
\end{pmatrix} - 
\begin{pmatrix}
0 & 0 & 1\\
0 & 0 & 1\\
1 & 1 & 0
\end{pmatrix} ,
\]
and hence the Gremban expansion has adjacency matrix
\[
\mc A
= 
\begin{pmatrix}
0 & 1 & 0 & \vrule & 0 & 0 & 1\\
1 & 0 & 0 & \vrule & 0 & 0 & 1\\
0 & 0 & 0 & \vrule & 1 & 1 & 0\\\hline
0 & 0 & 1 & \vrule & 0 & 1 & 0\\
0 & 0 & 1 & \vrule & 1 & 0 & 0\\
1 & 1 & 0 & \vrule & 0 & 0 & 0
\end{pmatrix},
\]
where the vertical and horizontal rules separate the blocks \(A^+\) (upper-left and lower-right) and \(A^-\) (upper-right and lower-left).

Set
\[
\Pi_{s} =\frac1{\sqrt2}\ (\1_3\;\1_3),
\quad
\Pi_{a}   =\frac1{\sqrt2}\ (\1_3\;-\1_3),
\quad
\mc U   = \begin{pmatrix}\Pi_{s}\\[2pt]\Pi_{a}\end{pmatrix}
= \frac1{\sqrt2}
\begin{pmatrix}
\1_3 & \1_3\\
\1_3 & -\ \1_3
\end{pmatrix}.
\]
A direct computation gives:

\begin{align}
\mc U \mc A \mc U^\top = 
\begin{pmatrix}
0 & 1 & 1 & 0 & 0 & 0\\
1 & 0 & 1 & 0 & 0 & 0\\
1 & 1 & 0 & 0 & 0 & 0\\
0 & 0 & 0 & 0 & 1 & -1\\
0 & 0 & 0 & 1 & 0 & -1\\
0 & 0 & 0 & -1 & -1 & 0
\end{pmatrix},
\end{align}
where the last matrix is \(\begin{pmatrix}
\bar A & 0\\
0      & A
\end{pmatrix}\), just as claimed in Theorem \ref{thm: spectral properties of Gremban symmetric matrix}. 

The adjacency matrices have spectrum
\(
\spec(\bar A)=\{2,-1,-1\},
\
\spec(A)=\{2,-1,-1\},
\)
and
\(
\spec(\mc A)
=\{2,2,-1,-1,-1,-1\},
\)
where the latter is the (multiset) union of the former two, as expected from Theorem \ref{thm: spectral properties of Gremban symmetric matrix}. For the eigenvectors, we have:
\begin{itemize}
  \item For \(\bar A\): one eigenvector is \((1,1,1)^\top\) with eigenvalue \(2\).  It lifts to
  \(
    (1,1,1, 1,1,1)^\top\) in \(\mc A. \)
  The other two eigenvectors of \(\bar A\), with eigenvalue $-1$, are \((1,0,-1)^\top\) and \((0,1,-1)^\top\) (or any linear combination of them), which lift to
  \(
    (1,0,-1,\ 1,0,-1)^\top \) and \(
    (0,1,-1,\ 0,1,-1)^\top\) for $\mc A$.
  \item For \(A\): the eigenvector with eigenvalue \(2\) is \((1,1,-1)^\top\), which lifts antisymmetrically to
  \(
    (1,1,-1, -1, -1,\allowbreak 1)^\top.
  \)
  The eigenvectors with eigenvalue \(-1\) are \((1,0,1)^\top\) and \((0,1,1)^\top\), lifting to
  \(
    (1,0,1,\ -1,0,-1)^\top\) and \((0,1,1,\ 0,-1,-1)^\top.
  \) 
\end{itemize}
The Laplacian matrix of the Gremban expansion is
\[
\mc L = \mc K - \mc A =
\begin{pmatrix}
2 & -1 & 0 & \vrule & 0 & 0 & -1 \\
-1 & 2 & 0 & \vrule & 0 & 0 & -1 \\
0 & 0 & 2 & \vrule & -1 & -1 & 0 \\
\hline
0 & 0 & -1 & \vrule & 2 & -1 & 0 \\
0 & 0 & -1 & \vrule & -1 & 2 & 0 \\
-1 & -1 & 0 & \vrule & 0 & 0 & 2
\end{pmatrix}.
\]
If we apply the same change of basis $\mc U$ to $\mc L$, we get:
\begin{align}
\mc U \mc L \mc U^\top = 
\begin{pmatrix}
2 & -1 & -1 &  & 0 & 0 & 0 \\
-1 & 2 & -1 &  & 0 & 0 & 0 \\
-1 & -1 & 2 &  & 0 & 0 & 0 \\
0 & 0 & 0 &  & 2 & -1 & 1 \\
0 & 0 & 0 &  & -1 & 2 & 1 \\
0 & 0 & 0 &  & 1 & 1 & 2
\end{pmatrix}.
\end{align}
The last matrix is \( \left(\begin{smallmatrix} \bar L & 0\\
0      & L
\end{smallmatrix}\right)\), with $L$ the signed Laplacian of the network and $\bar L$ the unsigned Laplacian (i.e. the Laplacian of the underlying unsigned network). \\
With regard to the spectrum, $\mc L$ has a kernel of dimension two, spanned by $(1,1,1,1,1,1)^\top$ and $(1,1,-1,-1, \allowbreak -1,1)^\top$. Note that the second eigenvector is the antisymmetric lift of the vector indicating the balanced factions of the network. In addition, any vector simultaneously orthogonal to the previous two eigenvectors is an eigenvector of $\mc L$, with an associated eigenvalue of 3.
%

\section{Spectral clustering of unsigned graphs}  \label{app:spectral_clustering}

In this Appendix, we briefly review the theoretical basis for spectral clustering in unsigned graphs. The central idea is to detect communities by minimizing the number of edges between them. More formally, this involves finding a node partition that minimizes the size of the associated cut-set (Definition \ref{def:cut_set}), i.e., the total weight of edges connecting different parts of the graph. The minimum-cut problem admits a relaxation in terms of the eigenvectors of the graph Laplacian, particularly the one corresponding to the second smallest eigenvalue, called the Fiedler vector. 

Recall that, for an unsigned graph \(G=(V,E)\) with adjacency matrix \(A\) and degree matrix \(K\), the (combinatorial) Laplacian is defined as \(L=K-A\). The matrix \(L\) is symmetric and positive semidefinite, since \(\mathbf{x}^\top L\mathbf{x}=\tfrac12\sum_{vw}A_{vw}(x(v)-x(w))^2\ge0\) for all \(\mathbf{x}\in\R^n\). Moreover, \(L\mathbf{1}=\0\), so zero is always an eigenvalue; its multiplicity equals the number of connected components of \(G\), with eigenvectors given by the indicator vectors of the components.
%
\subsection{Two-communities case}
Now we prove that the Laplacian eigenvectors provide relevant information about the graph communities. We denote the complement of a set $S$ by $S^c$.

\begin{theorem}  \label{thm:min_cut_Laplacian}
Let $G=(V,E)$ be an unsigned graph with Laplacian $L$. Then the discrete minimum-cut problem  
\(
\min_{S\subset V}\;|\mathrm{C}(S,S^c)|
\)  
can be relaxed to the continuous problem  
\[
\min_{\mathbf{x}\perp\mathbf{1}}
\frac{\mathbf{x}^\top L\,\mathbf{x}}{\mathbf{x}^\top\mathbf{x}}
=
\min_{\mathbf{x}\perp\mathbf{1},\,\|\mathbf{x}\|=1}
\mathbf{x}^\top L\,\mathbf{x},
\]  
whose unique minimizer is the Fiedler eigenvector $\psi_2$ of $L$, i.e., the eigenvector corresponding to its second-smallest eigenvalue $\lambda_2$.
\end{theorem}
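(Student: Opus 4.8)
The plan is to prove the theorem in two stages: first establish that the discrete minimum-cut objective can be written as a Rayleigh quotient of the Laplacian evaluated on suitable indicator vectors, and then show that relaxing the combinatorial constraints to the continuous constraint $\mathbf{x}\perp\mathbf{1}$, $\|\mathbf{x}\|=1$ yields a Rayleigh-quotient minimization whose solution is the Fiedler vector. The key algebraic fact, already recorded in the excerpt, is the identity $\mathbf{x}^\top L\mathbf{x}=\tfrac12\sum_{vw}A_{vw}(x(v)-x(w))^2$, which I would use throughout.

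**First**, I would make the connection between cuts and the Laplacian quadratic form explicit. Given a bipartition $V=S\cup S^c$, encode it by a vector $\mathbf{x}$ taking one value on $S$ and another on $S^c$ (for instance $\pm 1$). The identity above then shows that $\mathbf{x}^\top L\mathbf{x}$ counts precisely the edges straddling the partition, up to a constant factor: only pairs $v,w$ in different parts contribute, each contributing $A_{vw}\cdot(2)^2$, so $\mathbf{x}^\top L\mathbf{x}=4|C(S,S^c)|$ when $\mathbf{x}\in\{\pm1\}^n$. Minimizing the cut is therefore equivalent to minimizing $\mathbf{x}^\top L\mathbf{x}$ over such discrete vectors. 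The trivial minimizers $\mathbf{x}=\pm\mathbf{1}$ (empty cut) must be excluded, which motivates imposing a balancing constraint; requiring $\mathbf{x}\perp\mathbf{1}$ is the natural continuous surrogate that rules out the constant vector.

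**Second**, I would solve the relaxed continuous problem using the Courant--Fischer / Rayleigh--Ritz variational characterization of eigenvalues. Since $L$ is symmetric and positive semidefinite with smallest eigenvalue $\lambda_1=0$ attained by $\mathbf{1}$, the minimum of the Rayleigh quotient $\mathbf{x}^\top L\mathbf{x}/\mathbf{x}^\top\mathbf{x}$ over the orthogonal complement of $\mathbf{1}$ equals the second-smallest eigenvalue $\lambda_2$, and is attained exactly at the corresponding eigenvector $\psi_2$. The equality of the two displayed optimization problems (with and without the normalization $\|\mathbf{x}\|=1$) is immediate from the scale-invariance of the Rayleigh quotient. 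Uniqueness of the minimizer (up to scaling) follows provided $\lambda_2$ is simple, i.e. $\lambda_2<\lambda_3$; I would state this assumption, since for connected graphs with a spectral gap the Fiedler vector is unique up to sign.

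**The main obstacle** is conceptual rather than technical: making precise in what sense the continuous problem is a \emph{relaxation} of the discrete one. The discrete feasible set $\{\pm1\}^n\cap\mathbf{1}^\perp$ is not contained in the continuous feasible set after normalization, so one must be careful about the logic of the relaxation --- the point is that dropping the integrality constraint enlarges the feasible region, hence lowers the minimum, and the eigenvector then serves as a heuristic whose thresholded sign pattern yields an approximate cut. I would therefore frame the relaxation step carefully, noting that the continuous minimum $\lambda_2$ is a lower bound for (a normalized version of) the discrete cut value, and that the exact combinatorial optimum is generally not recovered --- this is the well-known gap underlying spectral clustering as an approximation method rather than an exact algorithm. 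The genuinely rigorous content is the variational identification of $\psi_2$; the relaxation statement is where care with the problem formulation is essential.
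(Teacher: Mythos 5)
Your proposal follows essentially the same route as the paper: encode the bipartition by a $\pm1$ indicator vector, use the identity $\mathbf{x}^\top L\mathbf{x}=\tfrac12\sum_{vw}A_{vw}(x(v)-x(w))^2$ to express the cut size as a quadratic form, and then invoke Courant--Fischer on the relaxed domain $\{\mathbf{x}\perp\mathbf{1},\ \|\mathbf{x}\|=1\}$ to identify $\psi_2$ as the minimizer. The only differences are cosmetic (your constant $4|C(S,S^c)|$ versus the paper's $2|C(S,S^c)|$ reflects whether the sum runs over ordered or unordered pairs, which is immaterial for the minimization) and your added caveats on the semantics of the relaxation and on simplicity of $\lambda_2$, which are sensible but not needed for the statement as given.
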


\begin{proof}
Let $(S,S^c)$ be an arbitrary bipartition of the node set $V$, and define the indicator vector \(\mathbf{s}\) by
\[
s(v) = 
\begin{cases}
+1,&v\in S,\\
-1,&v\notin S.
\end{cases}
\]
Then for any edge \((v,w)\in E(G)\),  
\[
(s(v) - s(w))^2
=\begin{cases}
0,&\text{if \(v,w\) lie on the same side of the cut},\\
4,&\text{if \(v,w\) lie on opposite sides}.
\end{cases}
\]
Hence
\(
\sum_{vw}A_{vw}\,(s(v) - s(w))^2
=4\,|C(S,S^c)|.
\) On the other hand, by the definition of the Laplacian:
\[
\mathbf{s}^\top L\,\mathbf{s} = \sum_{v}k_{v}s(v)^2 - \sum_{vw}A_{vw}s(v)s(w)
=\frac1{2}\sum_{vw}A_{vw}(s(v) - s(w))^2
=2\,|C(S,S^c)|.
\]
Therefore the discrete minimum-cut problem
\(
\min_{S\subset V}|C(S,S^c)|
\)  is equivalent to \(
\min_{\mathbf{s}\in\{\pm1\}^n}
\mathbf{s}^\top L\,\mathbf{s}.
\)

The discrete optimization domain \(\mathbf{s}\in\{\pm1\}^n\) can be relaxed to the convex optimization domain $\{\mathbf{x}\in\R^n:\Vert\mathbf{x}\Vert=1,\mathbf{x}\perp\mathbf{1}\}$. By the Courant-Fischer theorem, the minimum of \(\mathbf{x}L\mathbf{x}\) over this domain is exactly the second smallest eigenvalue of \(L\), and the minimizer is the corresponding eigenvector \(\psi_2\). 
\end{proof}

\begin{remark}  \label{rem:diffusion}
Theorem \ref{thm:min_cut_Laplacian} can be understood in terms of a diffusion process on the graph. The solution of the dynamical system
\(
\frac{d\mathbf{x}}{dt} = -L\,\mathbf{x},
\)
admits the spectral decomposition
\[
\mathbf{x}(t)
=\sum_{k=1}^n e^{-\lambda_k t}\,\langle \mathbf{x}(0),\,\psi_k\rangle\,\psi_k.
\]
The slowest-decaying modes are the ones associated with $\lambda_1=0$ and \(\lambda_2\), so for large \(t\) the concentration profile can be approximated as 
\(\mathbf{x}(t)\approx \langle \mathbf{x}(0),\mathbf{1}\rangle \mathbf{1} + \langle \mathbf{x}(0),\psi_2\rangle e^{-\lambda_2 t}\psi_2\).  In this ``metastable'' regime, nodes with similar entries in \(\psi_2\) share almost identical concentration values. Therefore the concentration of a diffusive process can reveal the community structure.
\end{remark}
%
\subsection{More than two communities}   
Spectral clustering can be easily extended to the case where there are $k$ communities. The only change is that, instead of only using the Fiedler vector, one uses the eigenvectors associated with the $k-1$ smallest nonzero eigenvalues of $L$, i.e., the set $\{\psi_2,...,\psi_k\}$. We first introduce an indicator matrix $H\in\{0,1\}^{n\times k}$, with entries
\[
H_{vr} = 
\begin{cases}
1,&\text{if node }v\text{ belongs to cluster }r,\\
0,&\text{otherwise},
\end{cases}
\]
with the constraint $H\,\mathbf{1}_k =\mathbf{1}_n$, so that each node lies in exactly one cluster. As in the proof of Theorem \ref{thm:min_cut_Laplacian}, with $H$ playing the role of $\mathbf{s}$, one can easily prove that $(H^\top L H)_{rr}=\frac12 \sum_{vw}A_{vw}(H_{vr}-H_{wr})^2$. The term $H_{vr}-H_{wr}$ is nonzero if and only if $v$ belongs to subset $r$ and $w$ does not, or vice versa. Hence, the matrix product $H^\top L H$ quantifies the total cut-set of a $k$-partition as
\(
\mathrm{tr}\bigl(H^\top L\,H\bigr) = \sum_{r=1}^k |C(S_r,S_r^c)|.
\)
The discrete \(k\)-way minimum-cut problem can be then rephrased as
\(
\min_{H}
\mathrm{tr}\bigl(H^\top L\,H\bigr)
\), with the constraints $H\in\{0,1\}^{n\times k},\;H\mathbf{1}_k=\mathbf{1}_n$. If we relax \(H\) to a real matrix \(X\in\R^{n\times k}\) with orthonormal columns, \(X^\top X = \1_k\), the resulting optimization problem becomes analytically solvable:

\begin{proposition}[\cite{horn2012matrix}]\label{prop:multi_k_spectal_clustering}
Let $G$ be an unsigned graph with Laplacian \(L\).  Then
\(
\min_{X\in\mathbb{R}^{n\times k},\;X^\top X = \1_k}
\;\tr\!\bigl(X^\top L X\bigr)
=\sum_{i=1}^k\lambda_i,
\)
and the minimum is attained when the column-space of \(X\)
is \(\mathrm{span}\{\psi_2,\dots,\psi_k\}\).
\end{proposition}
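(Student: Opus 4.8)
This statement is an instance of the Ky Fan minimum principle, so the plan is to diagonalize $L$, rewrite the objective as a linear functional of the eigenvalue ``occupation numbers'', and then minimize the resulting linear program. First I would fix an orthonormal eigenbasis $\psi_1,\dots,\psi_n$ of the symmetric matrix $L$, with eigenvalues ordered $\lambda_1\le\cdots\le\lambda_n$, so that $L=\sum_{i=1}^n\lambda_i\psi_i\psi_i^\top$. For any $X\in\R^{n\times k}$ with $X^\top X=\1_k$, the matrix $P:=XX^\top$ is the orthogonal projector onto the column space of $X$: it is symmetric and idempotent, since $P^2=X(X^\top X)X^\top=P$, and it satisfies $\tr P=\tr(X^\top X)=k$. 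Using the cyclic invariance of the trace, the objective becomes
\[
\tr\!\bigl(X^\top L X\bigr)=\tr(LP)=\sum_{i=1}^n\lambda_i\,\psi_i^\top P\,\psi_i=\sum_{i=1}^n\lambda_i\,d_i,\qquad d_i:=\psi_i^\top P\,\psi_i .
\]

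Next I would record the constraints on the weights $d_i$. Because $P$ is an orthogonal projector, $0\le\psi_i^\top P\,\psi_i\le\psi_i^\top\psi_i=1$, so each $d_i\in[0,1]$; and since $\{\psi_i\}$ is an orthonormal basis, $\sum_i d_i=\tr P=k$. Thus every feasible $X$ produces a weight vector in the polytope $\{d:0\le d_i\le1,\ \sum_i d_i=k\}$, and minimizing the linear functional $\sum_i\lambda_i d_i$ over this polytope is a linear program whose optimum is attained at a vertex. As the $\lambda_i$ are nondecreasing, the minimum puts full weight on the $k$ smallest eigenvalues, i.e.\ $d_1=\cdots=d_k=1$ and $d_i=0$ for $i>k$, which yields the lower bound $\tr(X^\top L X)\ge\sum_{i=1}^k\lambda_i$.

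Finally I would verify achievability and identify the minimizer. Choosing the columns of $X$ to be $\psi_1,\dots,\psi_k$ makes $P$ the projector onto $\mathrm{span}\{\psi_1,\dots,\psi_k\}$, giving $d_i=1$ for $i\le k$ and $d_i=0$ otherwise, so the bound is met with equality. Conversely, equality forces the occupation numbers to coincide with this optimal vertex---uniquely so when the spectral gap $\lambda_{k+1}-\lambda_k$ is strictly positive---which pins down the column space of $X$ as the span of the $k$ lowest eigenvectors; the particular orthonormal frame inside that space is immaterial, since the objective depends on $X$ only through $P$. (Note that this optimal span is $\mathrm{span}\{\psi_1,\dots,\psi_k\}$, which in the connected case contains the constant eigenvector $\psi_1$ with $\lambda_1=0$.) The hard part will be the reduction to the linear program, namely recognizing $P=XX^\top$ as a rank-$k$ orthogonal projector and extracting from this the exact box-and-sum constraints on the $d_i$; once that is in place, both the optimality of the vertex solution and the equality analysis are routine.
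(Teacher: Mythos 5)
Your proof is correct, but it takes a genuinely different route from the paper. The paper compresses $L$ to the $k\times k$ matrix $M=X^\top L X$ and invokes the Cauchy interlacing theorem to get $\lambda_j\le\mu_j$ for $j=1,\dots,k$, summing to obtain the lower bound; you instead pass to the projector $P=XX^\top$, expand $\tr(LP)=\sum_i\lambda_i d_i$ in the eigenbasis, and solve the resulting linear program over the polytope $\{0\le d_i\le 1,\ \sum_i d_i=k\}$ --- the standard Ky Fan argument. The interlacing proof is shorter and rests on one classical theorem, but it asserts the ``attained if and only if'' clause without working through the equality case of interlacing; your LP route is more elementary (only the spectral theorem plus a vertex argument), makes the equality analysis explicit, and correctly flags that uniqueness of the optimal column space requires $\lambda_{k+1}>\lambda_k$. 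Your parenthetical remark that the optimal span is $\mathrm{span}\{\psi_1,\dots,\psi_k\}$ (including the constant eigenvector) rather than $\mathrm{span}\{\psi_2,\dots,\psi_k\}$ is well taken: the statement as printed gives only a $(k-1)$-dimensional span, and the paper's own proof concludes with $\mathrm{span}\{\psi_1,\dots,\psi_k\}$, consistent with your version; the discarding of $\psi_1$ is a post-processing step described in the surrounding text, not part of the minimization itself.
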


\begin{proof}
Write \(X\in\R^{n\times k}\) with \(X^\top X=\1_k\).  Define the symmetric matrix
\(
M = X^\top L X,
\)
with eigenvalues \(\mu_1\le\cdots\le\mu_k\).  Then
\(
\tr(X^\top L X)
=
\tr(M)
=
\sum_{j=1}^k \mu_j.
\)
By the Cauchy interlacing theorem,
\(\lambda_j \le \mu_j\) for \(j=1,\dots,k\), so
\(\sum_{j=1}^k \mu_j \ge \sum_{j=1}^k \lambda_j\).  Thus
\(
\tr(X^\top L X)
\ge
\sum_{j=1}^k \lambda_j
\).
On the other hand, choosing
\(X_* = (\psi_1,\psi_2,\dots,\psi_k)\) gives
\(X_*^\top X_* = \1_k\) and
\(
\tr(X_*^\top L X_*)
=\sum_{i=1}^k \psi_i^\top L \psi_i
=\sum_{i=1}^k \lambda_i.
\)
Therefore
\(\min_{X^\top X = \1}\tr(X^\top L X)=\sum_{i=1}^k\lambda_i\), attained if and only if
\(\textrm{Col}(X)=\textrm{span}\{\psi_1,\dots,\psi_k\}\).
\end{proof}
Proposition \ref{prop:multi_k_spectal_clustering} shows that the relaxed minimizer of the $k$-way cut objective is obtained by taking $X$ to be the matrix of the $k$ smallest eigenvectors of $L$.  In practice one discards the constant eigenvector \(\psi_1 = \frac{1}{\sqrt n}\mathbf1,\) and retains only the $k-1$ eigenvectors \(\psi_2,\dots,\psi_k\).

\subsection{Spectral embedding}

When the matrix $X$ in the clustering problem is allowed to take arbitrary real values, its columns no longer have a direct combinatorial interpretation as ``hard''cluster-assignments.  To address this, one typically embeds the graph into a Euclidean space where standard clustering tools become applicable. This process is commonly known as spectral embedding, and amounts to interpreting the components of the $k-1$ smallest eigenvectors of the Laplacian as coordinates in a $k-1$-dimensional space. More specifically, the embedding of node $v$ is defined as
\(
\y_v = \bigl(\psi_2(v), \psi_3(v), \dots, \psi_k(v)\bigr) \in \R^{k-1},
\)
where $\psi_j(v)$ denotes the $v$-th entry of the $j$-th eigenvector of the Laplacian. Equivalently, the embedding is the $v$-th row of the matrix
\(
Y = (\psi_2\, \psi_3\, \cdots\, \psi_k) \in \R^{n \times (k-1)}.
\)

This embedding ensures that nodes which are well-connected in the graph are mapped to nearby points in Euclidean space. This behavior follows directly from the fact that Laplacian eigenvectors define smooth functions on the graph:

\begin{proposition}[\cite{horn2012matrix}]
Let $G$ be a graph with Laplacian $L$. Then its eigenvectors admit the variational formulation
\[
\psi_j = \textrm{argmin}_{\substack{\u \in \R^n \\ \u \perp \psi_1,\dots,\psi_{j-1} \\ \|\u\|=1}} \u^\top L \u, \quad j = 1, \dots, n.
\]
In particular, the eigenvectors $\psi_2, \dots, \psi_k$ minimize the total Dirichlet energy
\(
E[\{\u_2,...,\u_k\}] = \sum_{i=2}^k \sum_{vw}  A_{vw} \times \allowbreak (u_i(v) - u_i(w))^2
\)
among all orthonormal sets \( \{\u_2, \dots, \u_k\} \subset \mathbb{R}^n \) orthogonal to \( \psi_1 \).
\end{proposition}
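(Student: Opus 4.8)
The plan is to establish both parts from the spectral theorem applied to the real symmetric matrix $L$, reducing each statement to a Rayleigh-quotient computation in an eigenbasis. First I would fix an orthonormal eigenbasis $\psi_1,\dots,\psi_n$ of $L$ with eigenvalues $\lambda_1\le\cdots\le\lambda_n$, which exists because $L$ is symmetric. For the single-vector formulation, I expand an arbitrary feasible $\u$ (unit norm, orthogonal to $\psi_1,\dots,\psi_{j-1}$) in this basis. The orthogonality constraints annihilate the first $j-1$ coordinates, so $\u=\sum_{i\ge j}c_i\psi_i$ with $\sum_{i\ge j}c_i^2=1$. A direct computation then gives $\u^\top L\,\u=\sum_{i\ge j}\lambda_i c_i^2\ge\lambda_j\sum_{i\ge j}c_i^2=\lambda_j$, using $\lambda_i\ge\lambda_j$ for all $i\ge j$. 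Equality holds when $c_j=1$ and all other coefficients vanish, i.e. $\u=\psi_j$, which shows that $\psi_j$ attains the minimum, with optimal value $\lambda_j$.

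For the second (``in particular'') part, I would first invoke the identity $\u^\top L\,\u=\tfrac12\sum_{vw}A_{vw}(u(v)-u(w))^2$ already recorded in this appendix, so that the Dirichlet energy of an orthonormal family rewrites as $E[\{\u_2,\dots,\u_k\}]=2\sum_{i=2}^k\u_i^\top L\,\u_i=2\,\tr(X^\top L X)$, where $X=(\u_2\ \cdots\ \u_k)$ has orthonormal columns. Minimizing $E$ over orthonormal families orthogonal to $\psi_1$ is therefore equivalent to the trace-minimization problem solved in Proposition~\ref{prop:multi_k_spectal_clustering}, whose optimum is attained exactly when the column space of $X$ equals $\mathrm{span}\{\psi_2,\dots,\psi_k\}$. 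Choosing $\u_i=\psi_i$ realizes this optimum, giving the claim.

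The main subtlety to address is degeneracy of the spectrum. When $\lambda_j$ is a repeated eigenvalue, the minimizer in the first part is not unique: any unit vector in the $\lambda_j$-eigenspace orthogonal to $\psi_1,\dots,\psi_{j-1}$ also attains $\lambda_j$, so the statement should be read as asserting that $\psi_j$ is \emph{a} minimizer, with the argmin determined only up to the choice of eigenbasis within each eigenspace. The same caveat applies to the trace formulation, where the optimal $X$ is fixed only up to an orthogonal change of basis of the eigenspaces meeting $\{\lambda_2,\dots,\lambda_k\}$. Beyond this bookkeeping no genuine obstacle arises; the core of both parts is the elementary inequality $\u^\top L\,\u\ge\lambda_j$ on the relevant orthogonal complement.
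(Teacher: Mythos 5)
Your proof is correct, and it reaches the same destination as the paper's but by a somewhat more self-contained route. For the first part, the paper cites the Courant--Fischer theorem as a black box and then argues that the min-max over $j$-dimensional subspaces is attained on $\mathrm{span}(\psi_1,\dots,\psi_j)$; you instead prove the ``min over the orthogonal complement'' form directly, by expanding $\u=\sum_{i\ge j}c_i\psi_i$ and bounding the Rayleigh quotient below by $\lambda_j$. This is the standard Rayleigh--Ritz computation that underlies Courant--Fischer, so the content is the same, but your version is elementary and does not rely on an external theorem. For the second part your argument is actually \emph{more} careful than the paper's: the paper asserts that ``the variational characterization then implies'' the Dirichlet-energy claim, but the single-vector characterization imposes orthogonality to $\psi_1,\dots,\psi_{j-1}$ sequentially, whereas the Dirichlet problem ranges over orthonormal families constrained only to be orthogonal to $\psi_1$ (and to each other), so a greedy application of the first part does not immediately close the argument. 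Your reduction to $E=2\,\tr(X^\top LX)$ followed by an appeal to the trace-minimization result of Proposition~\ref{prop:multi_k_spectal_clustering} (restricted to the invariant subspace $\psi_1^\perp$, where the relevant eigenvalues are $\lambda_2\le\cdots\le\lambda_n$) supplies exactly the missing Ky Fan-type step. Your remark on degeneracy --- that $\psi_j$ is only \emph{a} minimizer and the optimal frame is determined only up to orthogonal transformations within eigenspaces --- is a correct reading of the statement and is consistent with how the paper uses it later.
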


\begin{proof}
Since $L$ is real symmetric and positive semidefinite, the Courant--Fischer theorem gives the variational characterization of its eigenvalues:
\[
\lambda_j = \min_{\substack{U \subset \R^n \\ \dim(U) = j}} \max_{\substack{\u \in U \\ \|\u\| = 1}} \u^\top L \u 
\]
This minimum is attained when \( U = \operatorname{span}(\psi_1, \dots, \psi_j) \), and within this subspace, the maximum is achieved at \( \psi_j \), which is orthogonal to \( \psi_1, \dots, \psi_{j-1} \). Therefore, the characterization can be rewritten as
\[
\lambda_j = \min_{\substack{\u \perp \psi_1,\dots,\psi_{j-1} \\ \|\u\| = 1}} \u^\top L \u,
\]
with the minimum attained at $\psi_j$. This proves the first statement. To prove the second, we employ the quadratic form of the Laplacian
\(\u^\top L \u = \frac1{2} \sum_{vw} A_{vw} (u(v)-u(w))^2 
\)
to write the Dirichlet energy as
\(
E[\{\u_2, \dots, \u_k\}] = 2 \sum_{i=2}^k \u_i^\top L \u_i.
\)
The variational characterization then implies that $E$ is minimized under the constraint $u_i \perp \psi_1$ when \( \u_i = \psi_i \) for \( i = 2, \dots, k \).
\end{proof}
As a basic guarantee, we can show that spectral embedding exactly recovers the connected components when the graph is disconnected.
\begin{proposition}
Suppose $G$ consists of $k$ disconnected components $G_1,\dots,G_k$, and let $L$ be its Laplacian. Then the spectral embedding \( Y = [\psi_2 \ \dots \ \psi_k] \in \mathbb{R}^{n \times (k-1)} \) assigns all nodes in each component $G_r$ to a single point in $\R^{k}$.
\end{proposition}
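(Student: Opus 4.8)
The plan is to exploit the structure of the kernel of $L$ recalled at the start of this Appendix: for a graph with $k$ connected components, the zero eigenvalue of $L$ has multiplicity $k$, and the associated eigenspace is $\ker L=\operatorname{span}\{\mathbf 1_{G_1},\dots,\mathbf 1_{G_k}\}$, where $\mathbf 1_{G_r}$ denotes the indicator vector of component $G_r$. Since these are precisely the $k$ smallest eigenvalues, we have $\lambda_1=\cdots=\lambda_k=0$, so each of $\psi_1,\dots,\psi_k$ lies in $\ker L$.

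First I would observe that every vector in $\ker L$ is constant on each component: any linear combination $\sum_{r}a_r\mathbf 1_{G_r}$ takes the single value $a_r$ on all nodes of $G_r$. In particular, each embedding eigenvector $\psi_j$ with $j\in\{2,\dots,k\}$ is constant on each $G_r$; write $c_{j,r}:=\psi_j(v)$ for any $v\in G_r$, a quantity that depends only on $r$.

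The conclusion then follows immediately. For a node $v\in G_r$, its embedding coordinate is $\y_v=(\psi_2(v),\dots,\psi_k(v))=(c_{2,r},\dots,c_{k,r})\in\R^{k-1}$, which is independent of the particular node chosen within $G_r$. Hence all nodes of $G_r$ are mapped to the single point $(c_{2,r},\dots,c_{k,r})$, as claimed.

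I expect no genuine obstacle in this argument; it is a direct consequence of the component-constancy of kernel vectors. The only subtlety worth flagging is that the orthonormal basis $\psi_2,\dots,\psi_k$ of the degenerate zero eigenspace is not unique, but this does not affect the result, since constancy on components holds for \emph{every} vector of $\ker L$ regardless of the chosen basis. With the standard choice $\psi_1=\tfrac{1}{\sqrt n}\mathbf 1$ one can moreover verify that distinct components are sent to distinct points, so the embedding in fact separates the components; the statement as phrased, however, requires only the collapse of each component to a single point.
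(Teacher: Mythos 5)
Your proof is correct and rests on the same key fact as the paper's: $\ker L$ is spanned by the component indicator vectors, so $\psi_1,\dots,\psi_k$ (all with eigenvalue $0$) are constant on each component. The only difference is cosmetic --- you observe constancy directly for \emph{every} vector of $\ker L$, whereas the paper first fixes the indicator basis and then invokes invariance of the embedding geometry under the orthogonal change of basis; your route is slightly more direct and equally valid.
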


\begin{proof}
When $G$ has $k$ connected components, $L$ has exactly $k$ zero eigenvalues. The eigenvectors $\psi_1,\dots,\psi_k$ can be chosen as orthonormal indicator vectors of the components, so that each $\psi_j$ is constant on $G_j$ and zero elsewhere. Under this choice, the spectral embedding then maps all nodes in $G_j$ to the same vector in $\R^{k}$.
Any other orthonormal basis of the null space of \( L \) is related to this one by a orthogonal transformation. Since these transformations preserve pairwise Euclidean distances, the geometry of the embedded points (and in particular, the fact that the image of each component collapses to a single point) remains invariant. Thus, any distance-based clustering algorithm, such as \( k \)-means, will correctly recover the connected components, regardless of the specific choice of orthonormal eigenvectors.
\end{proof}
%
\subsection{Algorithmic summary}
To cluster a graph into $k$ communities:
\begin{enumerate}
  \item Compute the first $k$ eigenpairs $(\lambda_j,\psi_j)$ of the Laplacian $L$.
  \item Form the embedding matrix 
  \(
    Y = \bigl(\psi_2,\psi_3,\dots,\psi_k\bigr)\in\R^{n\times (k-1)}.
  \)
  \item Treat each node $v$ as the point $\y_v\in\R^{k-1}$ given by the $v$-th row of $Y$.
  \item Run $k$-means on $\{\y_v\}_{v=1}^n$ to obtain clusters $U_1,\dots,U_k$.
\end{enumerate}

\begin{remark}
When the graph has highly heterogeneous degrees, it is common to normalize each row of the embedding matrix $Y$
to unit length before clustering.  Equivalently, one may compute the first $k$ eigenvectors of the \emph{symmetric normalized Laplacian}
\(
L^{\mathrm{norm}} = K^{-1/2}\,L\,K^{-1/2},
\)
and then embed and normalize as above.  This procedure corresponds to optimizing the \emph{normalized-cut} criterion and often improves cluster quality on graphs with very irregular degree distributions. In this work, we employ the normalized version of the Laplacians in the numerical simulations of Figures \ref{fig:interpolation_communities_factions}, \ref{fig:Spectrum_network_with_communities} and \ref{fig:Spectrum_network_with_factions}.
\end{remark}
%

\section{Numerical simulation details}  

\subsection{Signed Stochastic Block Model}
\label{app:ssbm}
We generated undirected, signed networks with \(n\) nodes using a degree-corrected stochastic block model with two groups. Instead of a single block matrix, in the signed setting we have two block matrices \(\rho^+\) and \(\rho^-\), controlling the probabilities of positive and negative edge creation respectively. For simplicity, we assume that the within-group edge probabilities are the same for both groups, which means that each matrix is parametrized by two control parameters:
\[
\rho^+
=\begin{pmatrix}
\rho^+_{\mathrm{in}} & \rho^+_{\mathrm{out}}\\[3pt]
\rho^+_{\mathrm{out}} & \rho^+_{\mathrm{in}}
\end{pmatrix},
\quad
\rho^-
=\begin{pmatrix}
\rho^-_{\mathrm{in}} & \rho^-_{\mathrm{out}}\\[3pt]
\rho^-_{\mathrm{out}} & \rho^-_{\mathrm{in}}
\end{pmatrix}.
\]
To generate the network, we follow the following procedure. First, we assign each node \(v\) uniformly at random to a group \(g_v\) and assign a positive activity \(\theta_v\) to it. For simplicity, we chose $\theta_v=1$ to generate homogeneous networks; other distributions of $\theta_v$ would result in different degree distributions. Second, for every unordered pair \(\{v,w\}\), we compute the combined propensity
\(
\lambda_{vw}
=\theta_v\,\theta_w\bigl(\rho^+_{g_v,g_w}+\rho^-_{g_v,g_w}\bigr),
\)
then place an edge with probability
\(
P(\text{edge }vw)
=1-\exp(-\lambda_{vw}),
\)
so that small \(\lambda_{vw}\) yields \(P\approx\lambda_{vw}\) and large \(\lambda_{vw}\) approaches one. For each edge $(u,v)$, we set its sign to \(+1\) with probability
\(
{\rho^+_{g_v,g_w}}/({\rho^+_{g_v,g_w}+\rho^-_{g_v,g_w}}),
\)
and to \(-1\) otherwise.
%

\subsection{Clustering evaluation metrics: ARI and NMI}
\label{app:clustering_metrics}

To evaluate the quality of inferred partitions, we use two standard metrics: the Adjusted Rand Index (ARI) and the Normalized Mutual Information (NMI). Both compare a predicted clustering to a ground-truth partition, are invariant under label permutations, and take values in \([0,1]\), with 1 indicating perfect agreement and values near 0 indicating independence or random overlap.

\paragraph{Adjusted Rand Index (ARI)}  
The Rand Index (RI) measures the proportion of node pairs that are consistently grouped together or apart in both partitions. Let \(U\) and \(V\) be two partitions of \(n\) elements. Define:
\begin{itemize}
    \item \(a\): number of pairs in the same group in both \(U\) and \(V\),
    \item \(b\): number of pairs in different groups in both \(U\) and \(V\),
    \item \(c\), \(d\): number of pairs assigned to the same group in one partition but different in the other.
\end{itemize}
Then the Rand Index is
\(
\mathrm{RI} = \frac{a + b}{a + b + c + d}.
\)
The ARI corrects this score for random chance:
\(
\mathrm{ARI} = \frac{\mathrm{RI} - \mathbb{E}[\mathrm{RI}]}{1 - \mathbb{E}[\mathrm{RI}]},
\)
where \(\mathbb{E}[\mathrm{RI}]\) is the expected value under a random model.

\paragraph{Normalized Mutual Information (NMI)}  
The mutual information (MI) between two partitions \(U\) and \(V\) quantifies their shared information:
\(
I(U, V) = \sum_{i=1}^{k} \sum_{j=1}^{l} P(i, j) \log \left( \frac{P(i, j)}{P(i)\,P(j)} \right),
\)
where \(P(i)\) is the probability that a randomly chosen element belongs to cluster \(i\) in \(U\), \(P(j)\) is the same for cluster \(j\) in \(V\), and \(P(i, j)\) is the joint probability that a node is in cluster \(i\) in \(U\) and in \(j\) in \(V\).
 The normalized mutual information is then given by
\(
\mathrm{NMI}(U, V) = \frac{2 I(U, V)}{H(U) + H(V)},
\)
where \(H = - \sum_{i=1}^{k} P(i) \log P(i)\) is the entropy of a partition.

\subsection{Eigenvectors for communities and factions}  \label{app:eigenvectors}
\begin{figure}
    \centering
    \includegraphics[width=0.8\linewidth]{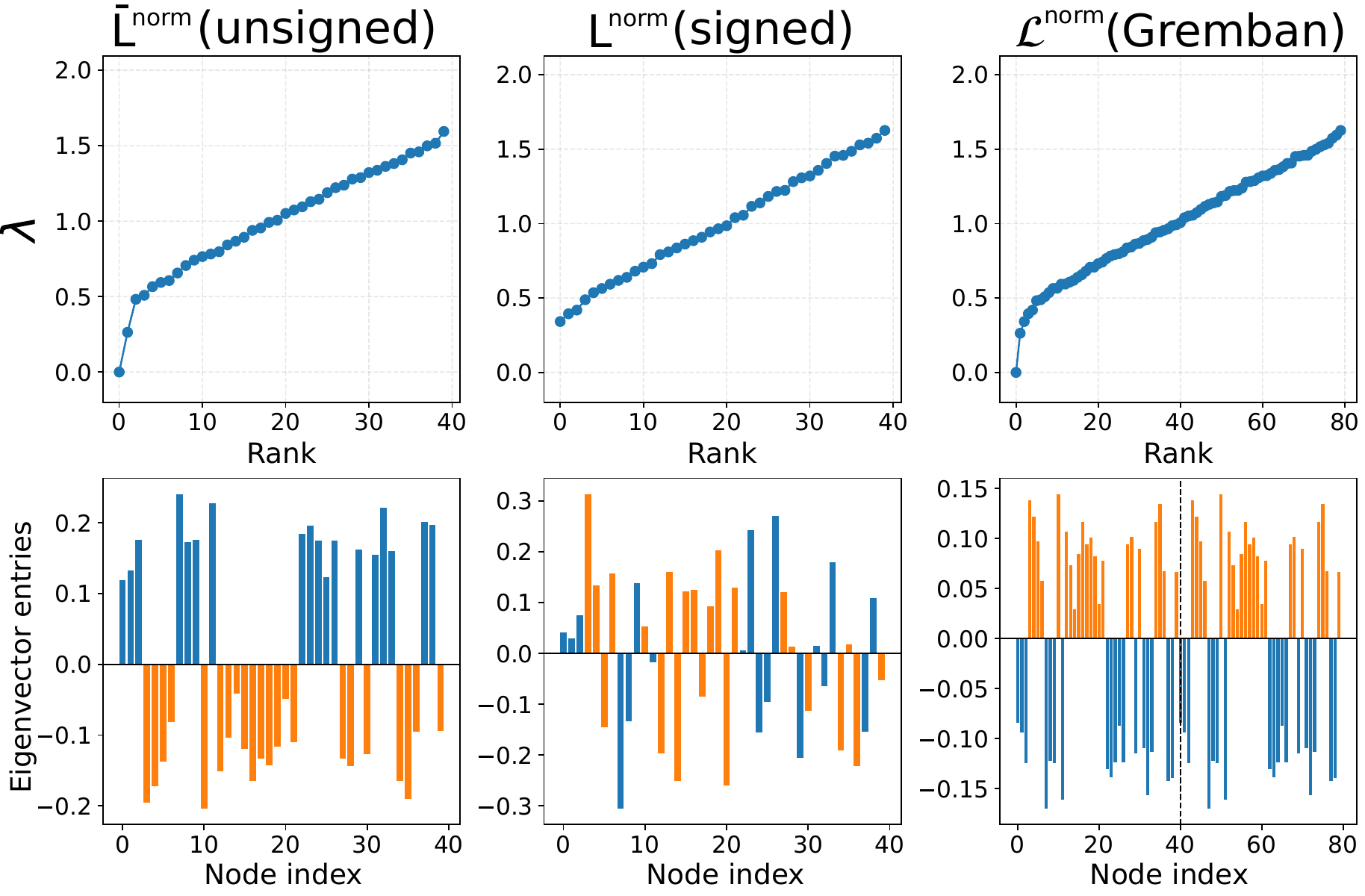}
    \caption{
    Spectral properties of three normalized Laplacian matrices for a network with community structure. This network is generated using the signed stochastic block model with \(n = 40\) nodes and two groups, and parameters: \(\rho^+_{\mathrm{in}} = 0.2\), \(\rho^+_{\mathrm{out}} = 0.02\), \(\rho^-_{\mathrm{in}} = 0.2\), \(\rho^-_{\mathrm{out}} = 0.02\). 
    Top row: eigenvalues of the unsigned Laplacian \(L^{\text{norm}}\), signed Laplacian \(\bar L^{\text{norm}}\), and Gremban Laplacian \(\mathcal{L}^{\text{norm}}\), ordered by increasing value. 
    Bottom row: components of the first non-constant eigenvector for each operator. Bar colors indicate the true group membership of each node. 
    }
    \label{fig:Spectrum_network_with_communities}
\end{figure}

\begin{figure}
    \centering
    \includegraphics[width=0.8\linewidth]{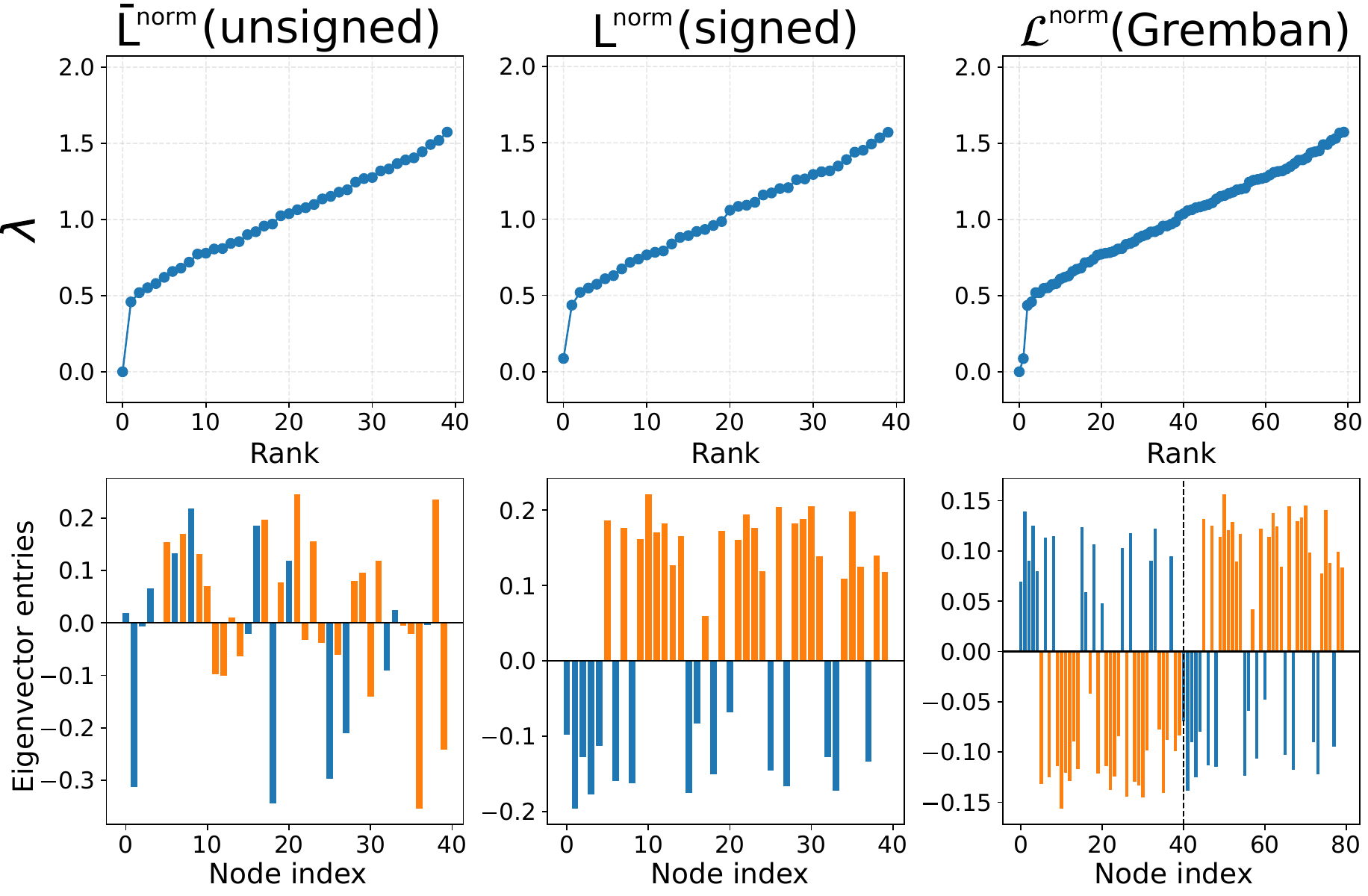}
     \caption{
    Spectral properties of the same three normalized Laplacian matrices for a network with faction structure, again generated with \(n = 40\) nodes and two groups. Parameters: \(\rho^+_{\mathrm{in}} = 0.2\), \(\rho^+_{\mathrm{out}} = 0.02\), \(\rho^-_{\mathrm{in}} = 0.02\), \(\rho^-_{\mathrm{out}} = 0.2\). 
    As in the previous figure, the top row shows the spectrum and the bottom row the first non-constant eigenvector. 
    }
    \label{fig:Spectrum_network_with_factions}
\end{figure}

Figures~\ref{fig:Spectrum_network_with_communities} and~\ref{fig:Spectrum_network_with_factions} provide a detailed spectral analysis that supports the use of the Gremban expansion as the right method to simultaneously handle communities and factions.  The figures show the spectra and leading eigenvectors of three normalized Laplacian matrices, 
\(L^{\text{norm}}\), \(\bar L^{\text{norm}}\) and \(\mathcal{L}^{\text{norm}}\), generated from a signed stochastic block model with \(n = 40\) nodes and two equally sized groups (Section~\ref{app:ssbm} for details). To control for degree heterogeneity, we use the normalized version of the Laplacians: $L^{\text{norm}}=  K^{-1/2}LK^{-1/2}$, $\bar L^{\text{norm}}=  K^{-1/2}\bar LK^{-1/2}$, and $\mc L^{\text{norm}}=  \mc K^{-1/2}\mc L \mc K^{-1/2}$.
Each figure shows the eigenvalues (top row) and the entries of the first non-constant eigenvector (bottom row), with bar colors indicating ground-truth group membership. 

In the community-dominant case, all eigenvalues of \(L^{\text{norm}}\) are larger than the first two eigenvalues of \(\bar L^{\text{norm}}\). The Fiedler vector of \(\bar L^{\text{norm}}\) perfectly separates the groups: the sign of each component matches the true label of the corresponding node. In contrast, the eigenvector of \(L\) performs no better than chance. For \(\mathcal{L^{\text{norm}}}\), the leading eigenvector is symmetric (the second half is identical to the first), indicating the presence of communities in the original graph. In addition, once projected back to the original node set, the sign pattern correctly classifies all nodes.
In the faction-dominant case, the smallest eigenvalue of \(L^{\text{norm}}\) lies between the first and second eigenvalues of \(\bar L^{\text{norm}}\). The associated eigenvector of $L^{\text{norm}}$ recovers the faction structure, while the Fiedler vector of \(\bar L^{\text{norm}}\) again yields classifications not better than chance. The Fiedler eigenvector of \(\mathcal{L}^{\text{norm}}\) is antisymmetric, consistent with the presence of factions in the original graph. Projecting this vector onto the original node set yields a perfect recovery of the underlying faction structure. These results confirm that \(\mathcal{L}^{\text{norm}}\) robustly adapts to the underlying mesoscale organization, whether community-like or faction-like.

\section{Walk enumeration via Gremban expansions}   \label{app:walk_enumeration}

One further use of the Gremban expansion is that it separates the enumeration of positive and negative walks through standard matrix operations. Recall that a walk is a sequence of (not necessarily distinct) edges \( W = (e_{12}, e_{23}, \ldots, e_{k-1,k}) \) where consecutive edges are incident to the same node. The sign of a walk is the product of the sign of its edges: $\sigma(W)=\prod_{e\in W} \sigma(e)$. Consequently, a walk is positive when it contains an even number of negative edges, and negative when it contains an odd number.
We begin by recalling how to count walks in the original signed graph:

\begin{proposition}  \label{prop:walk_adjacency_lemma}
Let \(G=(V,E,\sigma)\) be a signed graph with signed adjacency matrix \(A\) and unsigned adjacency matrix \(\bar A=|A|\). Let $[P^{(k)}]^+_{vw}$ denote the number of positive walks 
of length $k$ from $v$ to $w$, and $[P^{(k)}]^-_{vw}$ denote the number of negative walks of length $k$ from $v$ to $w$. For any integer \(k\ge1\) and any pair of nodes \(v,w\in V\):
\[
(A^k)_{vw}
=[P^{(k)}]^+_{vw}  -  [P^{(k)}]^-_{vw},
\qquad 
(\bar A^k)_{vw}
=[P^{(k)}]^+_{vw} +  [P^{(k)}]^-_{vw}
\]
\end{proposition}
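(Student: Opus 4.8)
The plan is to prove both identities at once by expanding the matrix powers entrywise and matching each surviving term to a walk. First I would write out the $(v,w)$ entry of $A^k$ as a sum over all sequences of $k+1$ nodes,
\[
(A^k)_{vw} \;=\; \sum_{u_0=v,\,u_1,\dots,u_{k-1},\,u_k=w} A_{u_0u_1}A_{u_1u_2}\cdots A_{u_{k-1}u_k}.
\]
A term is nonzero precisely when every consecutive pair $u_{i-1}u_i$ is an edge, i.e. when $W=(u_0,\dots,u_k)$ is a length-$k$ walk from $v$ to $w$; and since $A_{u_{i-1}u_i}=\sigma(u_{i-1},u_i)$ on edges, the value of that term is $\prod_{i=1}^k\sigma(u_{i-1},u_i)=\sigma(W)$. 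Hence $(A^k)_{vw}=\sum_W\sigma(W)$, the sum ranging over all such walks.

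The second step is to split this sum by the sign of the walk. As $\sigma(W)\in\{\pm1\}$, each positive walk contributes $+1$ and each negative walk contributes $-1$, giving $\sum_W\sigma(W)=[P^{(k)}]^+_{vw}-[P^{(k)}]^-_{vw}$, which is the first identity. For $\bar A=|A|$ the identical expansion holds, but now every edge factor equals $\bar A_{u_{i-1}u_i}=1$, so each walk contributes $+1$ irrespective of its sign; thus $(\bar A^k)_{vw}$ is simply the total number of length-$k$ walks from $v$ to $w$, namely $[P^{(k)}]^+_{vw}+[P^{(k)}]^-_{vw}$.

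Alternatively, I could run an induction on $k$: the base case $k=1$ is immediate from the definitions of $A$ and $\bar A$, and the inductive step uses $A^{k+1}=A^kA$ together with the observation that appending an edge of sign $s$ to a walk of sign $t$ produces a walk of sign $st$. The hard part, such as it is, will be the parity bookkeeping in this recursion: one must check that the four cases (positive or negative walk followed by a positive or negative edge) recombine so that $[P^{(k+1)}]^+$ collects exactly the even-sign extensions and $[P^{(k+1)}]^-$ the odd ones. This is entirely elementary, and since the direct expansion above sidesteps the parity argument altogether, I would present the direct argument as the main proof and relegate the inductive version to a remark.
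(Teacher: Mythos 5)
Your proposal is correct. Your main argument, however, takes a different route from the paper's: you expand $(A^k)_{vw}$ directly as a sum over all node sequences $u_0=v,u_1,\dots,u_k=w$, observe that each nonzero term corresponds to a length-$k$ walk and contributes exactly $\sigma(W)=\prod_i\sigma(u_{i-1},u_i)$, and then split the sum by the sign of the walk; the $\bar A$ identity follows because every edge factor becomes $1$. The paper instead proves the statement by induction on $k$, with the base case $k=1$ read off from the definitions of $A$ and $\bar A$ and the inductive step carried by the matrix recursion $(A^k)_{vw}=\sum_m (A^{k-1})_{vm}A_{mw}$ --- precisely the alternative you sketch and relegate to a remark, including the bookkeeping of how the sign of the final edge combines with the sign of the length-$(k-1)$ prefix. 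The two arguments are logically equivalent (the induction is just the direct expansion unrolled one step at a time), but your direct version is slightly shorter and makes the identification of terms with signed walks more transparent, while the paper's induction localizes the verification to a single multiplication step and avoids manipulating a global sum over node sequences. Either is a complete and valid proof.
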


\begin{proof}
We proceed by induction on the walk length $k$. \\ \noindent\emph{Base case:}  By definition, \(A_{vw}=1\) for a positive edge, \(-1\) for a negative edge, and \(0\) otherwise. Thus \(A_{vw}\) encodes whether the unique walk of length~1 is positive, negative, or absent, while \(\bar A_{vw}=|A_{vw}|\) counts it regardless of sign.

\medskip

\noindent\emph{Induction step:}  Assume the statement holds for walks of length \(k-1\).  Then
\(
(A^k)_{vw}
=\sum_{m}(A^{k-1})_{vm}\;A_{m j}.
\)
By the induction hypothesis, \((A^{k-1})_{vm}\) equals
\(
[P^{(k-1)}]^+_{vw}
-
[P^{(k-1)}]^-_{vw}\)
and \(A_{mw}=\pm1\) or \(0\) encodes the sign of the final step \(m\to j\).  Thus each product
\((A^{k-1})_{vm}\,A_{mw}\)
equals \(+1\) for each positive length-\(k\) walk through \(m\), \(-1\) for each negative one, and \(0\) if no edge.  Summing over all \(m\) therefore exactly computes
\([P^{(k)}]^+_{vw}
-
[P^{(k)}]^-_{vw}\), as claimed.  
The same argument applies to \(\bar A\), since
\(
(\bar A^k)_{vw}
=\sum_{m}(\bar A^{k-1})_{vm}\;\bar A_{mw},
\)
and now \(\bar A_{mw}=|A_{mw}|\in\{0,1\}\) counts each walk regardless of sign, so the sum over all intermediate \(m\) gives the total number of walks of length \(k\) from \(v\) to \(w\).
\end{proof}
The previous proposition shows how to count sums or differences of walks using powers of the signed adjacency matrix. It is often more convenient, though, to work with an operator whose powers directly enumerate positive and negative walks separately. This operator is precisely the Gremban expansion of the signed adjacency matrix. To show this, we start with the following lemma: 

\begin{lemma}
    \label{lem:paths_expanded_graph}
Let $G$ be a signed graph with Gremban expansion $\mc G$. Fix any $v,w\in V(G)$ and any polarity $\chi\in\{+,-\}$.  Then for every $v$-$w$ walk $W$ in $G$ there is a corresponding walk $\mc W$ in $\mc G$ from $v^{\chi}$ to $w^{\chi \cdot \sigma(W)}$.
\end{lemma}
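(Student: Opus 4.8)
The plan is to construct the lifted walk $\mc W$ step by step, tracking how the polarity of the current node evolves as each edge is traversed, and to verify by induction that this construction yields a valid walk in $\mc G$ ending at the claimed node. This is precisely the path-lifting property one expects from the covering map $\pi$ of Proposition~\ref{prop:double_cover}, specialized to the concrete combinatorics of the Gremban expansion.

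First I would record the walk $W$ as a node sequence $v = u_0, u_1, \dots, u_k = w$ in which each consecutive pair $(u_{i-1}, u_i)$ is an edge of $G$, so that by definition $\sigma(W) = \prod_{i=1}^{k} \sigma(u_{i-1}, u_i)$. I then define a sequence of polarities $\chi_0, \chi_1, \dots, \chi_k \in \{\pm\}$ recursively by setting $\chi_0 := \chi$ and $\chi_i := \chi_{i-1} \cdot \sigma(u_{i-1}, u_i)$, and I claim that $\mc W := (u_0^{\chi_0}, u_1^{\chi_1}, \dots, u_k^{\chi_k})$ is the desired walk.

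The core of the argument is an induction on $i$ showing that each consecutive pair $(u_{i-1}^{\chi_{i-1}}, u_i^{\chi_i})$ is an edge of $\mc G$. This follows directly from Definition~\ref{def:gremban_expansion}: the edge $(u_{i-1}, u_i) \in E(G)$ lifts to the two edges $(u_{i-1}^{\psi}, u_i^{\psi \cdot \sigma(u_{i-1}, u_i)})$ for $\psi \in \{\pm\}$, and taking $\psi = \chi_{i-1}$ produces exactly the pair $(u_{i-1}^{\chi_{i-1}}, u_i^{\chi_{i-1} \cdot \sigma(u_{i-1}, u_i)}) = (u_{i-1}^{\chi_{i-1}}, u_i^{\chi_i})$ by our choice of $\chi_i$. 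Hence every consecutive pair in $\mc W$ is an edge of $\mc G$, so $\mc W$ is a genuine walk in $\mc G$ starting at $v^{\chi_0} = v^{\chi}$.

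Finally I would unwind the recursion to identify the terminal polarity: telescoping gives $\chi_k = \chi_0 \cdot \prod_{i=1}^{k} \sigma(u_{i-1}, u_i) = \chi \cdot \sigma(W)$, so $\mc W$ ends at $w^{\chi \cdot \sigma(W)}$, as claimed. I do not anticipate any real obstacle here, since the lift is forced at each step by the local-bijection property of $\pi$ (Proposition~\ref{prop:double_cover}): once the starting polarity $\chi$ is fixed, each successive node and its polarity are uniquely determined. The only point demanding care is the bookkeeping of signs in the recursion, which the telescoping product of edge signs handles cleanly.
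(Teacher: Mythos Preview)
Your proof is correct and takes essentially the same approach as the paper: define the polarity sequence recursively by $\chi_0=\chi$, $\chi_i=\chi_{i-1}\cdot\sigma(u_{i-1},u_i)$, verify that each lifted pair is an edge of $\mc G$ via Definition~\ref{def:gremban_expansion}, and telescope to obtain $\chi_k=\chi\cdot\sigma(W)$. The paper's version is terser, but the construction and the key observation are identical.
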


\begin{proof}
Let 
\(
W  =  v_0\ v_1 \cdots v_\ell
\)
be a walk in $G$ with $v_0=v$ and $v_\ell=w$.
Define signs \(\chi_0,\chi_1,\dots,\chi_\ell\in\{+,-\}\) by 
\(
\chi_0=\chi,
\
\chi_i = \chi_{i-1} \sigma(v_{i-1}v_i)
\), so that $\chi_\ell = \chi_0 \cdot \sigma(P)$.
Then, for each \(i\) the lifted edge
\(
e_i = (v_{i-1}^{\chi_{i-1}},\ v_i^{\chi_i})
\)
is present in the Gremban expansion. Therefore $v_0^{\chi_0}\ v_1^{\chi_1}\hdots v_\ell^{\chi_\ell} $ form a walk in $\mc G$.
\end{proof}
Lemma~\ref{lem:paths_expanded_graph} shows that every walk \(W\) in \(G\) gives rise to exactly two walks in \(\mc G\). Concretely, a \textit{positive} \(v\)-\(w\) walk in \(G\) corresponds to one walk in \(\mc G\) from \(v^+\) to \(w^+\) and another from \(v^-\) to \(w^-\). A \textit{negative} \(v\)-\(w\) walk in \(G\) corresponds to one walk from \(v^+\) to \(w^-\) and another from \(v^-\) to \(w^+\). This observation leads directly to the following\footnote{This result was also proved in \cite[Theorem~3.4]{fox2017numerical}; however, we believe our derivation is simpler and more transparent.}:

\begin{corollary}  \label{cor:walk_adjacency_expanded}
Let $G$ be a signed graph with adjacency matrix $A$ and $\mc G$ be its Gremban expansion, with adjacency matrix $\mc A$. For any \(v,w\in V(G)\), the number of positive walks of length \(k\) from \(v\) to \(w\) equals the number of walks of length \(k\) from \(v^+\) to \(w^+\), namely the entry \([\mc A^k]^{++}_{vw}\). Likewise, the number of negative walks of length \(k\) from \(v\) to \(w\) equals the number of walks from \(v^+\) to \(w^-\), given by \([\mc A^k]^{+-}_{vw}\).
\end{corollary}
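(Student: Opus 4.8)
The plan is to upgrade the existence statement of Lemma~\ref{lem:paths_expanded_graph} into a \emph{bijection} between walks in $G$ and walks in $\mc G$ emanating from a fixed polarity, and then to invoke the standard fact that powers of an unsigned adjacency matrix count walks. Concretely, I would fix the starting polarity $\chi=+$ and argue that the lifting operation $W\mapsto\mc W$ from the lemma is a bijection between the set of length-$k$ walks in $G$ starting at $v$ and the set of length-$k$ walks in $\mc G$ starting at $v^+$.

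First I would establish that the lift is \emph{unique}. This is precisely the unique path-lifting property of covering maps: Proposition~\ref{prop:double_cover} shows that $\pi$ is a $2$-fold cover, and in particular that $\pi|_{\mc N(v^\chi)}$ is a bijection onto $\mc N(v)$. Given a walk $v_0 v_1\cdots v_\ell$ in $G$ together with a chosen lift $v_0^{\chi}$ of its starting node, each successive vertex $v_i$ has a unique preimage inside $\mc N(v_{i-1}^{\chi_{i-1}})$, so the lifted walk is determined step by step. Conversely, applying $\pi$ edgewise to any walk in $\mc G$ starting at $v^+$ yields a walk in $G$ starting at $v$, and these two operations are mutually inverse; this gives the desired bijection.

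Next I would read off the endpoint correspondence directly from the lemma: under the lift with $\chi=+$, a walk $W$ from $v$ to $w$ is sent to a walk ending at $w^{\sigma(W)}$. Hence the bijection restricts to a bijection between positive walks $v\to w$ in $G$ and walks $v^+\to w^+$ in $\mc G$, and between negative walks $v\to w$ in $G$ and walks $v^+\to w^-$ in $\mc G$. Finally, since $\mc G$ is unsigned, its adjacency matrix $\mc A$ has $0/1$ entries, and the identity $(\mc A^k)_{xy}=\sum_{z_1,\dots,z_{k-1}}\mc A_{x z_1}\cdots\mc A_{z_{k-1}y}$ shows that $[\mc A^k]_{v^+ w^\psi}$ equals the number of length-$k$ walks from $v^+$ to $w^\psi$. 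Identifying these entries with the blocks $[\mc A^k]^{++}_{vw}$ and $[\mc A^k]^{+-}_{vw}$ then completes the argument.

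I expect the only genuine obstacle to be the bijectivity of $W\mapsto\mc W$, i.e.\ the uniqueness of the lift and the fact that every walk in $\mc G$ from $v^+$ arises this way; the lemma already supplies existence and the sign bookkeeping. Both of these follow cleanly from the local-bijection (covering) property of $\pi$, so this is more a matter of citing Proposition~\ref{prop:double_cover} correctly than of any new computation.
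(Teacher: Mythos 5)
Your proposal is correct and follows essentially the same route as the paper: lift walks via Lemma~\ref{lem:paths_expanded_graph}, track the endpoint polarity through the sign of the walk, and count with powers of the unsigned adjacency matrix $\mathcal{A}$. You are somewhat more explicit than the paper in verifying that the lifting map is a bijection (via the covering property of Proposition~\ref{prop:double_cover}) --- a step the paper treats as immediate when it asserts that each walk in $G$ ``gives rise to exactly two walks'' in $\mathcal{G}$ --- but this is a welcome sharpening rather than a different argument.
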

We stress that neither $A^k$ nor $\bar A^k$ alone contains complete information about the signed walk structure, while the expanded adjacency matrix $\mc A^k$ does, as its block structure explicitly separates the contributions from positive and negative walks. Note that, using Corollary \ref{cor:walk_adjacency_expanded}, Proposition \ref{prop:walk_adjacency_lemma} can be rewritten as:
\begin{align} \label{eq:Ak_walks}
A^k = [\mc A^k]^{++} - [\mc A^k]^{+-}, \quad \bar A^k = [\mc A^k]^{++} + [\mc A^k]^{+-},
\end{align}
which in turn lets us express the expanded adjacency matrix $\mc A$ as: 
\begin{align} \label{eq:expanded_Ak}
    \mc A^k = \frac{1}{2}
    \begin{pmatrix}
        \bar A^k + A^k & \bar A^k - A^k \\
        \bar A^k - A^k & \bar A^k + A^k
    \end{pmatrix}.
\end{align}
Interestingly, this identity could also be derived via spectral arguments. Since \( \mc A = \mc U(\bar A \oplus A) \mc U^\top \) (Theorem \ref{thm: spectral properties of Gremban symmetric matrix}), we immediately get $\mc A^k = \mc U(\bar A^k \oplus A^k) \mc U^\top,
$ and expanding this product recovers Equation \eqref{eq:expanded_Ak}. Although this spectral derivation is more compact, the combinatorial perspective provides deeper insight into how the expansion encodes walk polarity through its structure.
%
\subsection{Walk-Generating Functions and the Gremban Expansion}  \label{app:walk_generating_functions}

Generating functions allow us to package the counts of walks of all lengths into a single analytic object, making spectral and asymptotic properties of the graph immediately accessible.  In the context of the Gremban expansion, they also neatly separate positive- and negative-signed walks.

\begin{definition}[Generating functions] \label{def:generating_function}
    Let $G$ be a signed graph with adjacency matrix \(A\) and $\mc G$ its Gremban expansion, with adjacency matrix $\mc A$.  For a formal parameter \(t\in \R\), the \emph{signed, unsigned, and expanded generating functions} of the graph are
\[
W_A(t)=\sum_{k=0}^\infty A^k\,t^k,\qquad
W_{\bar A}(t)=\sum_{k=0}^\infty  \bar A^k\,t^k,\qquad
W_{\mc A}(t)=\sum_{k=0}^\infty \mc A^k\,t^k.
\]
\end{definition}
Note that these series converge on the disks $|t|<\rho(A)^{-1}$, $|t|<\rho(\bar A)^{-1}$, and $|t|<\rho(\mc A)= \min\{\rho(A)^{-1},\,\rho(\bar A)^{-1}\}$, respectively. In each case, the function to which they converge is the resolvent $(\1-tM)^{-1}$ (where $M \in\{A, \bar A, \mc A\}$).
Now we connect these three generating functions:

\begin{proposition} \label{prop:generating_functions_connection}
Let $W_A, W_{\bar A}$ and $W_\mc A$ defined as in Definition \ref{def:generating_function}. Then, the following identity holds true:
\begin{align} \label{eq:generating_functions_connection}
    W_\mc A(t) = \frac1{2} \begin{pmatrix}
        W_A(t)+W_{\bar A}(t) &  W_A(t)-W_{\bar A}(t) \\
       W_A(t)-W_{\bar A}(t) &  W_A(t)+W_{\bar A}(t)
    \end{pmatrix}
\end{align} 
\end{proposition}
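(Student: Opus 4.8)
The plan is to derive the identity directly from the per-power block formula \eqref{eq:expanded_Ak}, which already records the exact decomposition of $\mc A^k$ into blocks built from $A^k$ and $\bar A^k$; the generating-function statement is then simply its image under the linear map $\sum_{k\ge 0}(\cdot)\,t^k$. First I would verify that \eqref{eq:expanded_Ak} is valid at $k=0$: there it reads $\tfrac12\left(\begin{smallmatrix}\1+\1 & \bar A^0-A^0\\ \bar A^0-A^0 & \1+\1\end{smallmatrix}\right)=\1_{2n}=\mc A^0$, so the formula holds for all $k\ge 0$ without a separate base case, and the sum may legitimately start at $k=0$ as in Definition~\ref{def:generating_function}.

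Next I would multiply \eqref{eq:expanded_Ak} by $t^k$ and sum over $k\ge 0$. Because the entries of a matrix power series are ordinary scalar power series, summation commutes with the passage to blocks, so
\[
W_{\mc A}(t)=\sum_{k\ge 0}\mc A^k\,t^k
=\frac12\begin{pmatrix}
\sum_{k\ge 0}(\bar A^k+A^k)t^k & \sum_{k\ge 0}(\bar A^k-A^k)t^k\\
\sum_{k\ge 0}(\bar A^k-A^k)t^k & \sum_{k\ge 0}(\bar A^k+A^k)t^k
\end{pmatrix}.
\]
Splitting each scalar-indexed sum and using $W_A(t)=\sum_k A^k t^k$ and $W_{\bar A}(t)=\sum_k \bar A^k t^k$ turns the diagonal blocks into $\tfrac12\bigl(W_{\bar A}(t)+W_A(t)\bigr)$ and the off-diagonal blocks into the common difference $\tfrac12\bigl(W_{\bar A}(t)-W_A(t)\bigr)$, assembling into the $2\times 2$ block form displayed in \eqref{eq:generating_functions_connection}. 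The only point that genuinely needs justification is convergence: as noted after Definition~\ref{def:generating_function}, the three series converge on the disks $|t|<\rho(A)^{-1}$, $|t|<\rho(\bar A)^{-1}$ and $|t|<\rho(\mc A)^{-1}=\min\{\rho(A)^{-1},\rho(\bar A)^{-1}\}$ respectively; on the common disk all three converge absolutely, which legitimizes both the term-by-term rearrangement and the block-wise splitting of the sum.

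A shorter, spectral alternative is available. By Theorem~\ref{thm: spectral properties of Gremban symmetric matrix} we have $\mc A=\mc U(\bar A\oplus A)\mc U^\top$, hence $\mc A^k=\mc U(\bar A^k\oplus A^k)\mc U^\top$ for every $k$ and, after summing, $W_{\mc A}(t)=\mc U\bigl(W_{\bar A}(t)\oplus W_A(t)\bigr)\mc U^\top$; expanding this product with $\mc U=\tfrac1{\sqrt2}\left(\begin{smallmatrix}\1&\1\\\1&-\1\end{smallmatrix}\right)$ reproduces the same block matrix. Equivalently, one recognizes each generating function as a resolvent $(\1-tM)^{-1}$ and uses that conjugation by the orthogonal matrix $\mc U$ commutes with inversion. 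I expect no real obstacle here: the argument is a routine manipulation of an absolutely convergent series, and the only thing to watch carefully is the bookkeeping of signs in the off-diagonal blocks, which are inherited directly from the $\bar A^k-A^k$ term in \eqref{eq:expanded_Ak} together with the ordering of $W_A$ and $W_{\bar A}$ fixed in Definition~\ref{def:generating_function}.
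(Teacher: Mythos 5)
Your proposal is correct and follows essentially the same route as the paper: the paper's proof starts from $\mc A^k=\mc U(\bar A^k\oplus A^k)\mc U^\top$, sums over $k$ to get $W_{\mc A}(t)=\mc U\bigl(W_{\bar A}(t)\oplus W_A(t)\bigr)\mc U^\top$, and expands --- which is exactly your ``spectral alternative,'' while your primary argument via \eqref{eq:expanded_Ak} merely performs that expansion before summing instead of after. Your explicit checks of the $k=0$ term and of absolute convergence on the common disk are sound additions that the paper leaves implicit.
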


\begin{proof}
We will start from the identity 
\(\mc A^k=\frac12 \mc U(\bar A^k \oplus A^k) \mc U^\top\). Multiplying by $t^k$ and summing over $k$, one gets
\[
\sum_{k=0}^\infty \mc A^k\,t^k = \mc U \;\left(\sum_{k=0}^\infty  \bar A^k\,t^k \oplus
\sum_{k=0}^\infty A^k\,t^k\right) \; \mc U^\top.
\]
which immediately yields \(W_\mc A(t)= \mathcal{U} (W_{\bar A}(t) \oplus W_A(t))\mathcal{U}^\top\). Expanding the matrix multiplication yields Equation \eqref{eq:generating_functions_connection}.
\end{proof}
Let us now turn our attention to another  generating function widely used in the context of complex networks: the communicability matrix \cite{estrada2008communicability,diaz2025signed}.

\begin{definition}[Communicability]  \label{def:communicability}
    Let $G$ be a signed graph with adjacency matrix \(A\) and $\mc G$ its Gremban expansion, with adjacency matrix $\mc A$. For a formal parameter \(t\in \R\), we define the \emph{communicability matrix} as the following power series:
\[
\Gamma_A(t)=\sum_{k=0}^\infty A^k\,\frac{t^k}{k!}=\exp(A),\]
with equivalent definitions for $\bar A$ and $\mc A$.
\end{definition}
As opposed to the generating functions of Definition \ref{def:generating_function}, the communicability matrix converges regardless of the spectral radius of the original matrix. The resulting matrix function is the matrix exponential. This makes the communicability a common operator in the study of dynamical processes on networks. An analog of Proposition \ref{prop:generating_functions_connection} holds for the communicability:

\begin{proposition}
Let $\Gamma_A, \Gamma_{\bar A}$ and $\Gamma_\mc A$ 
be as in Definition \ref{def:communicability}. Then, the following identity holds true:
\begin{align*}
    \Gamma_\mc A(t) = \frac1{2} \begin{pmatrix}
        \Gamma_A(t)+\Gamma_{\bar A}(t) &  \Gamma_A(t)-\Gamma_{\bar A}(t) \\
       \Gamma_A(t)-\Gamma_{\bar A}(t) &  \Gamma_A(t)+\Gamma_{\bar A}(t).
    \end{pmatrix}
\end{align*}
\end{proposition}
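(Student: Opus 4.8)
The plan is to mirror the structure of the proof of Proposition~\ref{prop:generating_functions_connection} almost verbatim, since the communicability is simply another analytic function built from the same power-series machinery. The essential algebraic input is Theorem~\ref{thm: spectral properties of Gremban symmetric matrix}, which gives $\mc A = \mc U (\bar A \oplus A) \mc U^\top$ and hence, because $\mc U$ is unitary, $\mc A^k = \mc U (\bar A^k \oplus A^k) \mc U^\top$ for every $k \ge 0$. First I would weight this identity by $t^k/k!$ instead of $t^k$ and sum over $k$, using the fact that the matrix exponential is defined precisely by this series.

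Concretely, starting from $\mc A^k = \mc U(\bar A^k \oplus A^k)\mc U^\top$, I would multiply both sides by $t^k/k!$ and sum:
\[
\Gamma_{\mc A}(t)=\sum_{k=0}^\infty \mc A^k\,\frac{t^k}{k!}
= \mc U\left(\sum_{k=0}^\infty \bar A^k\,\frac{t^k}{k!}\ \oplus\ \sum_{k=0}^\infty A^k\,\frac{t^k}{k!}\right)\mc U^\top
= \mc U\bigl(\Gamma_{\bar A}(t)\oplus \Gamma_A(t)\bigr)\mc U^\top.
\]
Here the interchange of $\mc U$ (a fixed finite matrix) with the infinite sum is justified by linearity and continuity, and the block-diagonal structure is preserved term by term. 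Expanding the conjugation $\mc U(\Gamma_{\bar A}\oplus\Gamma_A)\mc U^\top$ with $\mc U = \tfrac{1}{\sqrt 2}\left(\begin{smallmatrix}\1&\1\\\1&-\1\end{smallmatrix}\right)$ then yields exactly the claimed $2\times 2$ block form, with symmetric and antisymmetric combinations $\tfrac12(\Gamma_A\pm\Gamma_{\bar A})$ appearing on and off the diagonal.

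The only point requiring a little care is convergence: whereas the generating functions $W_A, W_{\bar A}, W_{\mc A}$ converge only on disks determined by the spectral radii, the communicability series converges for all $t$ because of the factorial denominators, as already noted after Definition~\ref{def:communicability}. I would therefore remark that the rearrangement of terms and the commutation with $\mc U$ are unconditionally valid here, so no restriction on $t$ is needed. This is also the only substantive difference from the proof of Proposition~\ref{prop:generating_functions_connection}, and I do not expect it to be an obstacle so much as a one-line justification. In short, the proof is a direct specialization of the spectral-similarity identity to the exponential function, and the main (mild) subtlety is simply confirming that the termwise manipulation is legitimate.
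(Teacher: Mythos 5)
Your proof is correct and follows essentially the same route as the paper, which simply declares the argument ``identical'' to that of Proposition~\ref{prop:generating_functions_connection}: conjugate $\mc A^k = \mc U(\bar A^k \oplus A^k)\mc U^\top$, weight by $t^k/k!$, sum, and expand the conjugation by $\mc U$. Your added remark on unconditional convergence of the exponential series is a harmless (and accurate) refinement the paper leaves implicit.
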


\begin{proof}
    Identical to that of Proposition \ref{prop:generating_functions_connection}.
\end{proof}
\end{document}